\title{Reconfiguring Shortest Paths in Graphs\footnote{This project has received funding from the European Union’s Horizon 2020 research and innovation programmed under grant agreement No. 682203-ERC-[Inf-Speed-Tradeoff].}} %TODO Please add
\titlerunning{Reconfiguring Shortest Paths in Graphs} %TODO optional, please use if title is longer than one line
\author{Kshitij Gajjar}{National University of Singapore, Singapore}{kshitijgajjar@gmail.com}{}{Research supported by NUS ODPRT Grant WBS No. R-252-000-A94-133.}%{Dummy University Computing Laboratory, [optional: Address], Country \and My second affiliation, Country \and \url{http://www.myhomepage.edu} }{johnqpublic@dummyuni.org}{https://orcid.org/0000-0002-1825-0097}{(Optional) author-specific funding acknowledgements}%TODO mandatory, please use full name; only 1 author per \author macro; first two parameters are mandatory, other parameters can be empty. Please provide at least the name of the affiliation and the country. The full address is optional 
\author{Agastya Vibhuti Jha}{\'Ecole polytechnique f\'ed\'erale de Lausanne, Switzerland}{agastya.jha@epfl.ch}{}{}%\footnote{Optional footnote, e.g. to mark corresponding author}}{Department of Informatics, Dummy College, [optional: Address], Country}{joanrpublic@dummycollege.org}{[orcid]}{[funding]} 
\author{Manish Kumar}{Ben-Gurion University of the Negev, Israel}{manishk@post.bgu.ac.il}{}{}%\footnote{Optional footnote, e.g. to mark corresponding author}}{Department of Informatics, Dummy College, [optional: Address], Country}{joanrpublic@dummycollege.org}{[orcid]}{[funding]} 
\author{Abhiruk Lahiri}{Ariel University, Israel}{abhiruk@ariel.ac.il}{}{Research supported by Ariel University Post-doctoral fellowship, Israel Science Foundation, grant number 592/17 and 822/18 and ERC-CZ project LL2005.}%\footnote{Optional footnote, e.g. to mark corresponding author}}{Department of Informatics, Dummy College, [optional: Address], Country}{joanrpublic@dummycollege.org}{[orcid]}{[funding]} 
\authorrunning{K. Gajjar, A. V. Jha, M. Kumar and A. Lahiri} %TODO mandatory. First: Use abbreviated first/middle names. Second (only in severe cases): Use first author plus 'et al.'
\keywords{Reconfiguration, Shortest path, Hardness, Approximation} %TODO mandatory; please add comma-separated list of keywords
\newtheorem{fact}{Fact}
\DeclareMathOperator{\spr}{\textsc{SPR}}
\newcommand{\kspr}{k\textsc{-SPR}}
\DeclareMathOperator{\bfs}{BFS}
\DeclareMathOperator{\look}{Lookup}
\DeclareMathOperator{\reduc}{\mathsf{REDUC}}
\DeclareMathOperator{\diam}{\mathsf{diam}}
\DeclareMathOperator{\alg}{ALG}
\DeclareMathOperator{\opt}{OPT}
\DeclareMathOperator{\xor}{\textsf{XOR}}
\DeclareMathOperator{\ttt}{TT}
\DeclareMathOperator{\tb}{TB}
\DeclareMathOperator{\bt}{BT}
\DeclareMathOperator{\bb}{BB}
\DeclareMathOperator{\ssb}{sB}
\DeclareMathOperator{\tst}{Tt}
\DeclareMathOperator{\bst}{Bt}
\newcommand{\ones}{\mathsf{ones}}
\begin{document}
\maketitle
\begin{abstract}
    Reconfiguring two shortest paths in a graph means modifying one shortest path to the other by changing one vertex at a time, so that all the intermediate paths are also shortest paths. This problem has several natural applications, namely: (a) revamping road networks, (b) rerouting data packets in a synchronous multiprocessing setting, (c) the shipping container stowage problem, and (d) the train marshalling problem.
    
    When modelled as graph problems, (a) is the most general case while (b), (c) and (d) are restrictions to different graph classes. We show that (a) is intractable, even for relaxed variants of the problem. For (b), (c) and (d), we present efficient algorithms to solve the respective problems. We also generalize the problem to when at most $k$ (for a fixed integer $k\geq 2$) contiguous vertices on a shortest path can be changed at a time.
\end{abstract}

\newpage

\tableofcontents

\newpage

\section{Introduction}
A~\emph{reconfiguration problem} asks computational questions of the following kind: Given two different configurations of a system, is it possible to gradually transform one to the other? Two popular examples of reconfiguration problems are the 15-puzzle~\cite{RatnerW86, Goldreich11} and the Rubik's cube~\cite{DemaineDELW11, DemaineER18}. In both, we want to determine how to reach a ``solved'' final configuration using a sequence of ``moves'', starting from a given initial configuration. Recently, a lot of research has gone into the study of different types of reconfiguration problems on graphs~\cite{MouawadNPR17, LokshtanovM18, LokshtanovMPRS18, MouawadNRS18}.

In this paper, we undertake a theoretical study of the reconfiguration problem on shortest paths in graphs, known as the~\emph{Shortest Path Reconfiguration} problem (abbreviated as $\spr$), introduced by~\cite{KaminskiMM10}. Let us now define this formally.

\begin{definition}
Given an undirected, unweighted graph $G$ with a source vertex $s$ and a target vertex $t$, we say that two $s$--$t$ shortest paths $P$ and $Q$ in $G$ are~\emph{reconfigurable} if there is a sequence of $s$--$t$ shortest paths $(P_0, P_1,\ldots,P_{k-1}, P_k)$ where $P_0=P$ and $P_k=Q$ (for some positive integer $k$) such that $P_i$ and $P_{i+1}$ (for each $i\in\{0,1,\ldots,k-1\}$) differ in only one vertex. (See~\autoref{fig:Reconf} for an example.)
\end{definition}

\begin{figure}[h]
    \centering
    \vspace{0.5cm}
    \includegraphics[width=0.9\linewidth]{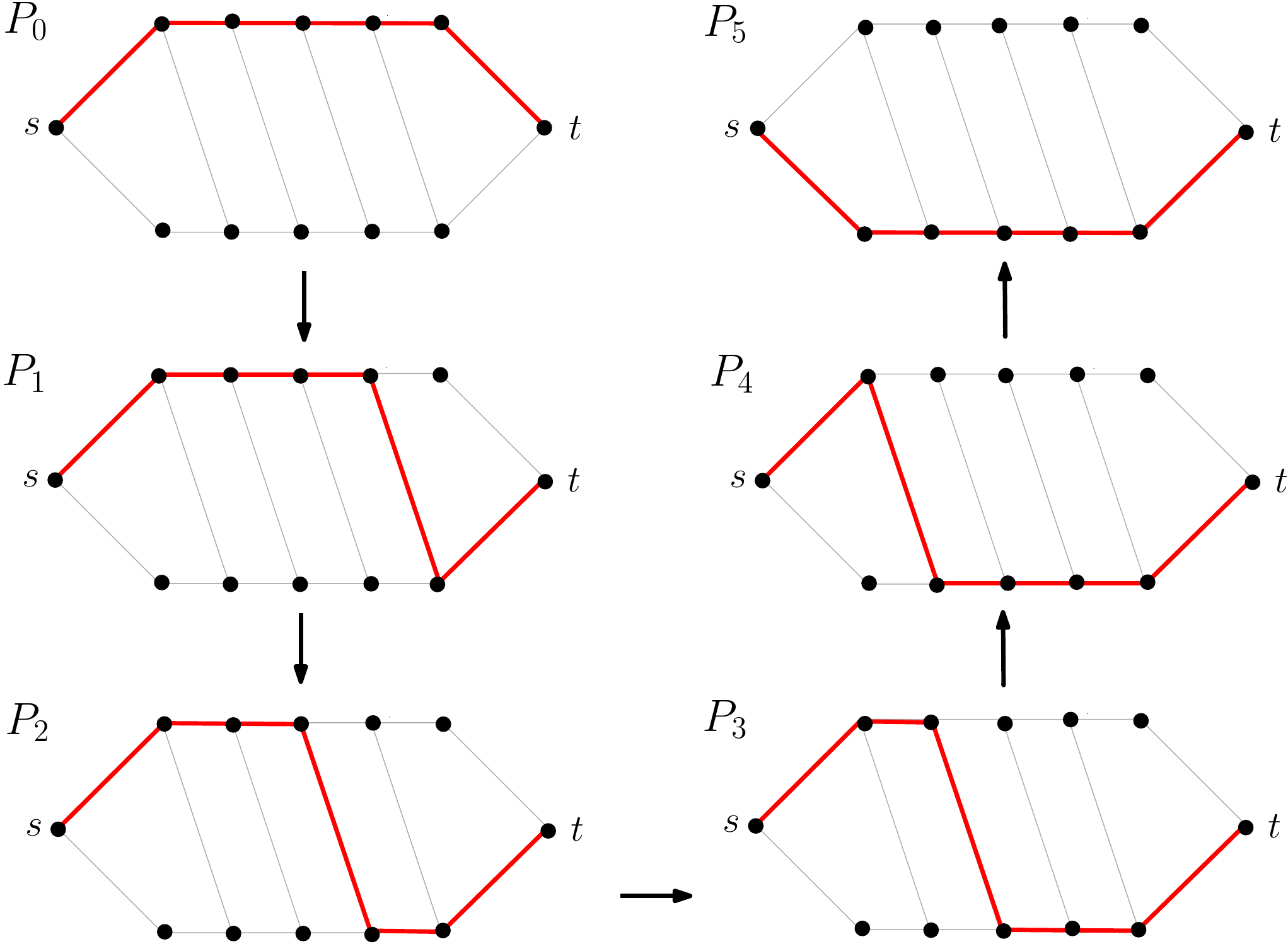}
    \vspace{0.5cm}
    \caption{Reconfiguring a path $P=P_0$ to another path $Q=P_5$ by changing one vertex at a time. Note that all paths in the sequence $(P_0,P_1,P_2,P_3,P_4,P_5)$ are $s$--$t$ shortest paths.}
    \label{fig:Reconf}
\end{figure}

$\spr$ is the decision problem of checking whether two given shortest paths in a graph are reconfigurable. Additionally, one may ask questions of the following form: If two shortest paths are indeed reconfigurable, is the reconfiguration sequence short enough? If so, is the sequence efficiently computable?

$\spr$ has several real-world applications, some of which we describe in~\autoref{sec:applications}. Despite these numerous applications, $\spr$ has not received its fair share of attention from the theoretical standpoint. This is because when research on reconfiguration began almost forty years ago, the main motivation behind studying the problem was in the context of coordinated motion planning of robots~\cite{HopcroftSS}. Large swarms of robots are operated by a central algorithm, which gives specific instructions to each robot so that they can function as a team to solve a given task. Given the initial and final configurations of the robots (a configuration is simply a snapshot of the positions of the robots), the goal of the algorithm is to modify the initial positions of the robots in a sequential, step-by-step manner so that they can reach the final configuration without bumping into each other. (A possible usage of the robots in this setting is to manage a warehouse or inventory.)

Soon thereafter, it was shown that coordinated motion planning of robots is $\PSPACE$-complete~\cite{HopcroftW86}, implying that there is no polynomial-time algorithm for it unless $\P=\PSPACE$. Another closely related problem that was studied roughly around the same time was known as~\emph{2-dimensional planar linkage}~\cite{HopcroftJW84}. Although it was not explicitly stated, it is easy to observe that 2-dimensional planar linkage is essentially a problem about reconfiguring paths of a fixed length on a graph, which is a special case of $\spr$. For two decades after that, this observation went unexplored and theoretical research in $\spr$ remained dormant. Recently however, there has been a flurry of papers on $\spr$~\cite{KaminskiMM10, Bonsma13, Bonsma17,AsplundEHHNW18, AsplundW20}.

\cite{Bonsma13} showed that $\spr$ is $\PSPACE$-complete in general. A careful look at their proof further tells us the following.
\begin{observation}
\label{obs:bipart}
$\spr$ is $\PSPACE$-hard even if the input graphs are restricted to be bipartite.
\end{observation}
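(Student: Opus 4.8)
The plan is to argue that the \PSPACE-hardness construction may be assumed, without loss of generality, to produce bipartite graphs. The cleanest way I would do this is not to re-examine the gadgets of \cite{Bonsma13} one by one, but to give a generic, parsimonious reduction from arbitrary \spr instances to bipartite \spr instances; bipartiteness of the specific construction in \cite{Bonsma13} then follows for free (and can also be checked directly by inspecting its gadgets, which is the ``careful look'' alluded to).

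The first step is to recall the standard layered structure of $s$--$t$ shortest paths. Given an instance $(G,s,t,P,Q)$ of \spr, let $d=\mathrm{dist}_G(s,t)$ and let $L_i=\{v\in V(G):\mathrm{dist}_G(s,v)=i\}$ be the BFS layers from $s$, so that $s\in L_0$ and $t\in L_d$. Every edge of $G$ joins vertices whose layer indices differ by at most one, and every $s$--$t$ shortest path uses exactly one vertex of each $L_i$ and only ``layer-crossing'' edges, i.e. edges between $L_i$ and $L_{i+1}$. Define $H$ to be the subgraph of $G$ on vertex set $L_0\cup\cdots\cup L_d$ whose edges are precisely the layer-crossing edges of $G$; this is computable from $G$ in linear time by a single BFS.

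The second step is to check that $(G,s,t,P,Q)$ and $(H,s,t,P,Q)$ are equivalent \spr instances. Since $H\subseteq G$ contains the path $P$ of length $d$, we get $\mathrm{dist}_H(s,t)=d$. Because every edge of $H$ is layer-crossing, any $s$--$t$ walk in $H$ changes its layer index by exactly one at each step, so any such walk of length $d$ is a path visiting $L_0,L_1,\ldots,L_d$ in order; hence the $s$--$t$ shortest paths of $H$ are exactly the $s$--$t$ shortest paths of $G$. Two such paths (which are literally the same vertex sequences in $H$ as in $G$) differ in one vertex in $H$ iff they do so in $G$, so the reconfiguration graphs coincide, and $P$ reconfigures to $Q$ in $H$ iff it does in $G$. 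Finally, $H$ is bipartite: the partition $\bigl(\bigcup_{i\text{ even}}L_i,\ \bigcup_{i\text{ odd}}L_i\bigr)$ is proper, as every edge of $H$ joins consecutive layers. Combining this reduction with \cite{Bonsma13} and the trivial membership of $\spr$ in \PSPACE\ yields \PSPACE-completeness of $\spr$ on bipartite graphs.

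The step I expect to be the main (if modest) obstacle is the distance-preservation claim in the second paragraph: one must be sure that deleting the non-layer-crossing edges neither creates a shortcut (ruled out since $H\subseteq G$) nor destroys an $s$--$t$ shortest path (ruled out since shortest paths never used those edges), and — the genuinely substantive direction — that no $s$--$t$ path of length $d$ appears in $H$ that is not an $s$--$t$ shortest path of $G$; this last point is exactly what the layer-monotonicity argument handles, and it is where I would be most careful to get the inequalities and the ``increase by exactly one each step'' reasoning right.
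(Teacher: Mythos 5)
Your proposal is correct and follows essentially the same route as the paper: take the BFS layering from $s$, discard the intra-layer (non-layer-crossing) edges, observe that the surviving graph is bipartite via the even/odd layer partition, and note that the set of $s$--$t$ shortest paths and hence the reconfiguration relation is unchanged. Your write-up merely spells out the distance-preservation and layer-monotonicity details that the paper's proof states more tersely.
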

(For completeness, we provide a proof of~\autoref{obs:bipart} in~\autoref{sec:bdd-diam}.) On the positive side, it known that $\spr$ is solvable in polynomial time for certain graph classes such as planar graphs~\cite{Bonsma17}, grid graphs~\cite{AsplundEHHNW18}, claw-free graphs and chordal graphs~\cite{Bonsma13}.

In this paper, we further investigate the complexity of $\spr$, particularly focusing on graph classes that model real-world problems.

\section{Our Contributions and Paper Roadmap}

Our contributions are twofold. First, we study $\spr$ for various graph classes. And second, we introduce a generalized variant of $\spr$ called $\kspr$. Alongside, we provide a roadmap of our paper (\autoref{tab:SPR}).

\begin{table}[ht]
\centering
\vspace{0.5cm}
\begin{tabular}{|l|l|l|}
\hline
\textbf{Graph Class} & \textbf{Application (Subsection)} & \textbf{Result (Subsection)}  \\ 
\hline
General graphs ($\kspr$) & Revamping Road Networks (\ref{sec:revampingroadnetworks}) & $\PSPACE$-complete (\ref{sec:k-spr-line}) \\
Permutation graphs & Train Marshalling (\ref{sec:trainmarshalling}) & Polynomial time (\ref{sec:permute}) \\
Circle graphs & Shipping Conatiner Stowage (\ref{sec:shippingcontainerstowage}) & Polynomial time (\ref{sec:circle}) \\
Bridged graphs & - & Polynomial time (\ref{sec:bridged}) \\
Boolean hypercube & Rerouting Data Packets (\ref{sec:datapacketrerouting}) & Polynomial time (\ref{sec:boolean}) \\
Circular-arc graphs & - & Polynomial time (\ref{sec:circarc}) \\
Constant diameter graphs & - & Polynomial time (\ref{sec:bdd-diam}) \\
Line graphs ($\kspr$) & - & $\PSPACE$-complete (\ref{sec:k-spr-line}) \\
Graph powers & - & $\PSPACE$-complete (\ref{sec:k-spr-power}) \\
\hline
\end{tabular}
\vspace{0.5cm}
\caption{Roadmap of our paper}
\label{tab:SPR}
\end{table}

\subsection{\texorpdfstring{$\spr$}{SPR}: Boolean Hypercube, Circle Graphs, and More}

For circle graphs, permutation graphs and the Boolean hypercube, we provide a complete characterisation of shortest paths and their reconfigurability for $\spr$. This automatically yields polynomial-time algorithms for them.

For the Boolean hypercube, we show that every shortest path corresponds to a permutation. In fact, the length of the shortest reconfiguration sequence between two shortest paths is precisely the~\emph{Kendall's Tau distance}~\cite{Sedgewick} between their respective permutations. %See~\autoref{sec:datapacketrerouting} for a real-world application, and~\autoref{sec:boolean} for proof details.
The characterisation for circle graphs and permutation graphs is slightly more technically involved. %See~\autoref{sec:shippingcontainerstowage} and~\autoref{sec:trainmarshalling} for real-world applications, and~\autoref{sec:circle} and~\autoref{sec:permute} for proof details.
We also solve $\spr$ in polynomial time for a subclass of metric graphs called bridged graphs (more generally, for weakly modular graphs), using a dynamic programming algorithm. %See~\autoref{sec:bridged} for proof details.
Finally, for circular-arc graphs and graphs of bounded diameter, we observe that $\spr$ admits simple polynomial-time algorithms. %See~\autoref{sec:circarc} and~\autoref{sec:bdd-diam} for proof details.

\subsection{\texorpdfstring{$\kspr$}{k-SPR}: Hardness and Optimization Variants}

We introduce a novel generalisation of $\spr$ called $\kspr$, in which we are allowed to change at most $k$ successive vertices (instead of only one vertex) at a time.

It is known that $\spr$ can be solved in polynomial time for line graphs~\cite{Bonsma13}. We show that $\kspr$ is $\PSPACE$-complete for line graphs for all integer constants $k\geq 2$, demonstrating that $\kspr$ can be significantly harder than $\spr$. %See~\autoref{sec:k-spr-line} for proof details.
We also use a ``lift-and-project'' type proof to show that $\spr$ is $\PSPACE$-complete for graph powers by using the $\PSPACE$-hardness of $\kspr$. %See~\autoref{sec:k-spr-power} for proof details.

We observe that for a fixed $n$, the computational complexity of $\kspr$ can decrease as $k$ increases. More precisely, $\kspr$ can be solved in polynomial time when $k\geq n/2$ (\autoref{sec:gradationcomplexitykspr}). We also study a few optimisation variants of $\kspr$, and show that there is no polynomial-time algorithm to approximate the ``cost'' of $\kspr$ within a factor of $O(2^{n^2})$, unless $\PSPACE=\P$ (\autoref{sec:optvar}). Finally, we examine the gradation of the maximum number of different shortest paths in $n$-vertex graphs as the distance between $s$ and $t$ varies from $O(1)$ to $\Omega(n)$ (\autoref{sec:tradeoff}).

\section{Applications} \label{sec:applications}

In this section, we study the $\spr$ problem on some restricted graph classes, and discuss their usage in practice.

\subsection{The Shipping Container Stowage Problem} \label{sec:shippingcontainerstowage}

\begin{figure}
    \centering
    \vspace{0.5cm}
    \includegraphics[width=0.5\linewidth]{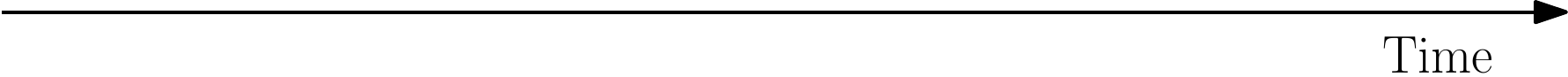}
    \includegraphics[width=0.5\linewidth,
    page=3]{Fig_2.pdf}
    \vspace{0.5cm}
    \caption{An overlap graph (above) for a cargo ship, and its corresponding circle graph (below). Each interval in the overlap graph represents a cargo container, its two end points being the loading/unloading times of the container on the ship. \textit{Figure inspired by a similar figure from~\cite{Gavril73}.}}
    \label{fig:interval}
\end{figure}

$\spr$ for circle graphs is applicable in maritime transport. Around $80\%$ of all traded goods are transported by sea~\cite{Stat1}. Cargo shipping is a billion dollar industry which leaves a considerable carbon footprint on the environment~\cite{Stat2}. Therefore, an efficient process for stowing freight containers on cargo ships is desirable. The process of shifting these containers is an expensive, time-consuming and delicate task. The problem of minimizing the amount of shifting, given a ship's voyage plan, is known as the~\emph{container stowage problem}. Owing to its importance, this problem has been studied extensively~\cite{WilsonR00, AvrielPSW98, AvrielPS00, TierneyPJ14,gajrad}. A slight variation of this problem, called the~\emph{blocks relocation problem} has also been studied~\cite{CasertaVS11, CasertaSV12}.

One can model the container stowage problem as a graph by representing each container as a vertex, wherein two vertices are adjacent if and only if loading one container necessitates unloading the other. These graphs are called~\emph{overlap graphs}. In fact, a graph is an overlap graph if and only if it is a~\emph{circle graph}~\cite{Gavril73} (\autoref{fig:interval}). Using this, it was shown that it is $\NP$-complete to minimize the amount of unloading/reloading of containers~\cite{AvrielPS00, TierneyPJ14}. However, there are two heuristics that give an approximate solution efficiently~\cite{WilsonR00, CasertaSV12}. One heuristic uses a shortest path-based solution~\cite{CasertaSV12}, while the other reshuffles the containers in a smart way while limiting the number of possible moves for each container~\cite{AvrielPSW98}.

When the containers are reshuffled at a port, a major operational challenge is to maintain the quality of the solution. Unloading a container (say $C$) at its destination port requires removing the containers stowed above it (called overstowed containers). As all these overstowed containers are adjacent to the vertex $C$ in the overlap graph, a good strategy is to maintain a shortest path from $C$ to the vertex that corresponds to the container at the top of $C$'s stack at each port. If an extra container is added at some port or an existing container is removed from some port, we should be able to quickly reconfigure the earlier unloading/reloading configuration to a new optimal unloading/reloading configuration.

\subsection{The Train Marshalling Problem} \label{sec:trainmarshalling}

We solve $\spr$ for circle graphs by solving $\spr$ for a subclass of circle graphs known as~\emph{permutation graphs}, and then generalizing our solution to circle graphs. In fact, permutation graphs themselves model a problem that is very much similar to container stowage called the train marshalling problem~\cite{DahlhausHMR00, JaehnRW15, RinaldiR17, DorpinghausS18, FalsafainT20}.

Both permutation graphs and circle graphs also have applications in memory allocation for system programs~\cite{EvenI, EvenPL72}. For a comprehensive survey on permutation graphs and circle graphs, see~\cite{Golumbic, Brandstadt}.

\subsection{Rerouting Data Packets} \label{sec:datapacketrerouting}

In an efficient synchronous multiprocessing environment, it is widely assumed that there is a common memory and processors having sequential capabilities can access it simultaneously and almost arbitrarily~\cite{ValiantB81}. Such a network of processors is realised by a $d$-\emph{dimensional Boolean hypercube}~\cite{HayesMSCP86}. The routing of message packets in such a network happens via a greedy scheme which follows shortest paths~\cite{StamoulisT94}. The main challenge here is to perform routing in a congestion-free manner, and a lot of research had gone into this~\cite{PifarreGFS94, GrammatikakisHS98}. A natural solution is to gradually reroute the packets to a different route~\cite{GreenbergH92}, which is precisely the $\spr$ problem on the Boolean hypercube.

\subsection{Revamping Road Networks} \label{sec:revampingroadnetworks}

$\kspr$ has a natural application in restructuring road networks. Suppose you are a city planner and your city's road network needs to be revamped to better serve the requirements of its residents. For this, you want to change the route between two point locations $s$ and $t$ in the city. It is not possible to change the entire route in one go, as laying out new roads takes resources, effort and time. Furthermore, this transition should be smooth. You do not want your ongoing renovation project to cause undue congestion on some roads, leading to a disruption in the overall flow of traffic. In other words, your job is to alter the $s$--$t$ route gradually (one road at a time), whilst ensuring that road commuters do not have to undertake a longer route from $s$ to $t$ during the process.

A more-or-less similar scenario arises in the case of road accidents~\cite{WangDZM16}. This can sometimes lead to a certain road becoming inoperable, leading to bottleneck situations that could increase the travel times of the commuters. In this case, it should be possible to quickly find a way to reroute the traffic gradually and efficiently.

In $\spr$, only one vertex can be changed at each reconfiguration step, by definition. This condition can be sometimes too restrictive for practical purposes. When a graph is used to model a road network, roads are generally represented by simple induced paths, and vertices on the path represent various landmarks like bus stops, gas stations, shops, etc.~\cite{BastFM06, BastFMSS07, BauerD09, GoldbergKW06}. 

To model the fact that all these consecutive vertices can be changed in one go, we introduce the $\kspr$ problem, where one can change at most $k$ (for some fixed positive integer $k$) contiguous vertices at each reconfiguration step. We study the optimization variant of $\kspr$, where each road has a ``cost of construction'' associated with it and the aim is to produce a reconfiguration sequence whose total construction cost is close to optimal.

We also study $\kspr$ for line graphs and graph powers. These graph classes give us interesting theoretical results that enhance our understanding of $\spr$. Optimization variants of other types of reconfiguration problems (e.g., reconfiguring swarm robots) have also been studied previously~\cite{KirkpatrickL16, DemaineFKMS19}.

\section{Hardness Results}

\begin{definition}[$\kspr$]
Let $(v_1,v_2,\ldots,v_r)$ be an $s$--$t$ shortest path. Then for each $1\leq i<j\leq r$ such that $j-i<k$, one may replace the subpath $(v_i,v_{i+1},\ldots,v_j)$ by a completely new subpath $(u_i,u_{i+1},\ldots,u_j)$ %(provided $(v_1,v_2,\ldots,v_{i-1},u_i,u_{i+1},\ldots,u_j,v_{j+1},\ldots,v_r)$ is also an $s$--$t$ shortest path)
in a single reconfiguration step of $\kspr$.
\end{definition}

$\kspr$ for $k=1$ is precisely the $\spr$ problem, which is known to be $\PSPACE$-complete~\cite{Bonsma13}. Note that the $\PSPACE$-hardness of $\spr$ does not straightaway imply the $\PSPACE$-hardness of $\kspr$. We show that $\kspr$ is $\PSPACE$-complete, even for a restricted graph class called line graphs.

\subsection{Hardness of~\texorpdfstring{$\kspr$}{k-SPR} for Line Graphs}
\label{sec:k-spr-line}

In this section, we will see that $\kspr$ (for $k\geq 2$) can be significantly harder than $\spr$. In particular, we show that $\kspr$ (for $k\geq 2$) is $\PSPACE$-complete for line graphs. On the other hand,~\cite{Bonsma13} showed that $\spr$ (or $\kspr$ for $k=1$) can be solved in polynomial time for line graphs (in fact, for claw-free graphs, a superclass of line graphs).

\begin{definition}
Given a graph $G$ on $m$ edges, its line graph $L(G)$ is an $m$-vertex graph where each vertex of $L(G)$ corresponds to an edge of $G$, such that two vertices of $L(G)$ are adjacent if and only if their corresponding edges in $G$ share a vertex (see~\autoref{fig:linegraph1} for an example).
\end{definition}

\begin{figure}[h]
    \centering
    \vspace{0.5cm}
    \includegraphics[page=3, width=0.9\linewidth]{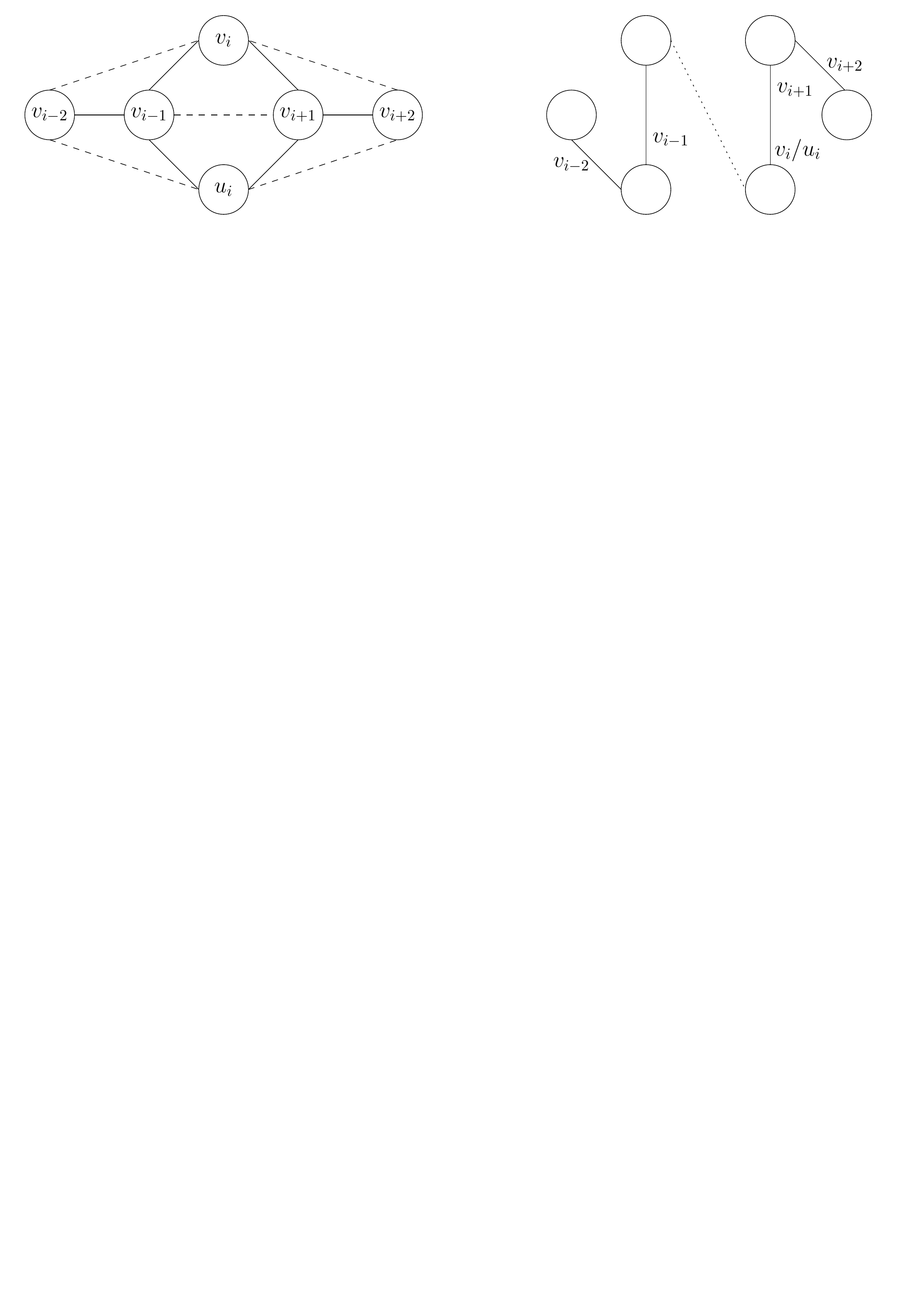}
    \vspace{0.5cm}
    \caption{A graph (left) and its line graph (right)}
    \label{fig:linegraph1}
\end{figure}

\begin{theorem}
\label{lem:ksprlinegraphs}
$\kspr$ is $\PSPACE$-complete for all $k\geq 2$, even when the input graphs are restricted to line graphs.
\end{theorem}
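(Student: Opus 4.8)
The plan is as follows. Membership in $\PSPACE$ is routine: if a reconfiguration sequence exists, one may take a shortest one, which visits each $s$--$t$ shortest path at most once; even though the number of such paths can be exponential, each has size $O(n)$, so a nondeterministic machine can guess the sequence step by step while storing only the current path and a step counter, running in nondeterministic polynomial space, which equals $\PSPACE$ by Savitch's theorem.

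For the hardness, I would reduce from $\spr$, which is $\PSPACE$-hard even on bipartite graphs by~\cite{Bonsma13} (see also~\autoref{obs:bipart}). Let $(G,s,t,P,Q)$ be an $\spr$ instance with $\bfs$ layers $L_{0}=\{s\},L_{1},\dots,L_{d}=\{t\}$; every $s$--$t$ shortest path of $G$ picks exactly one vertex per layer, and its edges join consecutive layers. I build a graph $G^{\ast}$ in two steps. First, I \emph{pad} $G$: replace each vertex $v\notin\{s,t\}$ by a path $v^{1}-v^{2}-\cdots-v^{k-1}$ on $k-1$ new vertices, redirect every edge $uv$ of $G$ (with $u$ in the lower layer) to the edge $u^{k-1}v^{1}$, and join $s$ to $v^{1}$ for all $v\in L_{1}$ and $t$ to $v^{k-1}$ for all $v\in L_{d-1}$. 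Second, I attach a pendant edge $e_{s}=s s^{\ast}$ at $s$ and a pendant edge $e_{t}=t t^{\ast}$ at $t$. The output $\kspr$ instance is the line graph $H:=L(G^{\ast})$, which is a line graph by construction, with designated vertices $e_{s},e_{t}$ and designated shortest paths $P^{\ast},Q^{\ast}$ obtained by lifting $P,Q$: a shortest path $s,v_{1},\dots,v_{d-1},t$ of $G$ lifts to the path $e_{s},\,s v_{1}^{1},\,v_{1}^{1}v_{1}^{2},\dots,\,v_{d-1}^{k-1}t,\,e_{t}$ of $H$.

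The correctness would rest on two facts. (i) $G^{\ast}$ is layered with every edge between consecutive $\bfs$ layers, so every shortest $s^{\ast}$--$t^{\ast}$ path in $G^{\ast}$ runs through the padding gadgets block by block in the original order; translating to line graphs, the shortest $e_{s}$--$e_{t}$ paths in $H$ are in bijection with the shortest $s$--$t$ paths in $G$. (ii) A single $\kspr$ move in $H$ replaces at most $k$ consecutive vertices of the current path, i.e.\ at most $k$ consecutive edges of the corresponding $G^{\ast}$-geodesic; since consecutive vertices of a path in a line graph share a vertex of $G^{\ast}$ and the modified path is still a geodesic, the move ``unfolds'' to replacing at most $k-1$ consecutive \emph{interior} vertices of that $G^{\ast}$-geodesic, with the new vertices forced by the layering to lie in the prescribed layers. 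Walking through the gadgets, such a replacement is forced to reuse every old vertex---hence is not a legal ``completely new'' move---unless it exchanges exactly one full gadget $v^{1},\dots,v^{k-1}$ for $w^{1},\dots,w^{k-1}$ with $w\in L_{i}$ adjacent in $G$ to both of $v$'s neighbours on the current path, which is exactly one $\spr$ move in $G$. Hence $P,Q$ are $\spr$-reconfigurable in $G$ iff $P^{\ast},Q^{\ast}$ are $\kspr$-reconfigurable in $H$, and the reduction is clearly polynomial; combined with the $\PSPACE$ membership above this proves the theorem.

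I expect the main obstacle to be the second half of fact (ii): one must rule out all ``unintended'' $\kspr$ moves---those whose window straddles two gadgets or covers a gadget only partially---by using that the padding paths contain no shortcuts, so that any such window is forced back to the identity; and one must check that the unfolded move never touches the fixed pendant edges $e_{s},e_{t}$, so that bijection (i) is preserved. The conceptual reason the construction works, and the point worth highlighting, is that $L(G^{\ast})$ collapses each vertex of $G$ into a clique that a \emph{single}-vertex move cannot traverse (indeed $\spr$ on line graphs is in $\P$ since line graphs are claw-free); allowing a window of $k\ge 2$ vertices is exactly what restores the ability to switch the branch chosen through such a clique, so $\kspr$ on line graphs recaptures the full power of $\spr$ on arbitrary graphs. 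When $k=2$ the padding step is vacuous and the whole construction reduces to $H=L\bigl(G+\{e_{s},e_{t}\}\bigr)$.
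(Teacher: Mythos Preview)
Your proposal is correct and follows essentially the same approach as the paper: pad $G$ so that a single $\spr$ move turns into a block of exactly $k$ consecutive edges, attach pendant edges at $s$ and $t$ to pin down the endpoints in the line graph, and pass to $L(\cdot)$. The only cosmetic difference is in the padding---the paper subdivides every edge from an even $\bfs$ layer to the next odd layer into a path on $k$ vertices (leaving odd-to-even edges untouched), whereas you replace every interior \emph{vertex} by a path on $k-1$ vertices; both choices force any non-trivial $\kspr$ window to align with exactly one block by the same ``unique neighbour away from the branching points'' argument you sketch.
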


\begin{proof}
Fix an integer $k\geq 2$. We reduce $\spr$ on general graphs to $\kspr$ on line graphs. Consider an $\spr$ instance $(G,s,t,P,Q)$, where $P$ and $Q$ are $s$--$t$ shortest paths in $G$. The goal is to check if $P$ and $Q$ are reconfigurable in $G$. From $(G,s,t,P,Q)$, we construct a $\kspr$ instance $(G',s',t',P',Q')$, where $P'$ and $Q'$ are $s'$-$t'$ shortest paths in $G'$, such that $P'$ and $Q'$ are $k$-reconfigurable in $G'$ if and only if $P$ and $Q$ are reconfigurable in $G$. Also, $G'$ is a line graph that can be constructed from $G$ in polynomial time in three steps, explained below (see~\autoref{fig:line2} for an illustration of these steps).

\begin{figure}
    \centering
    \vspace{0.5cm}
    \includegraphics[width=0.9\linewidth]{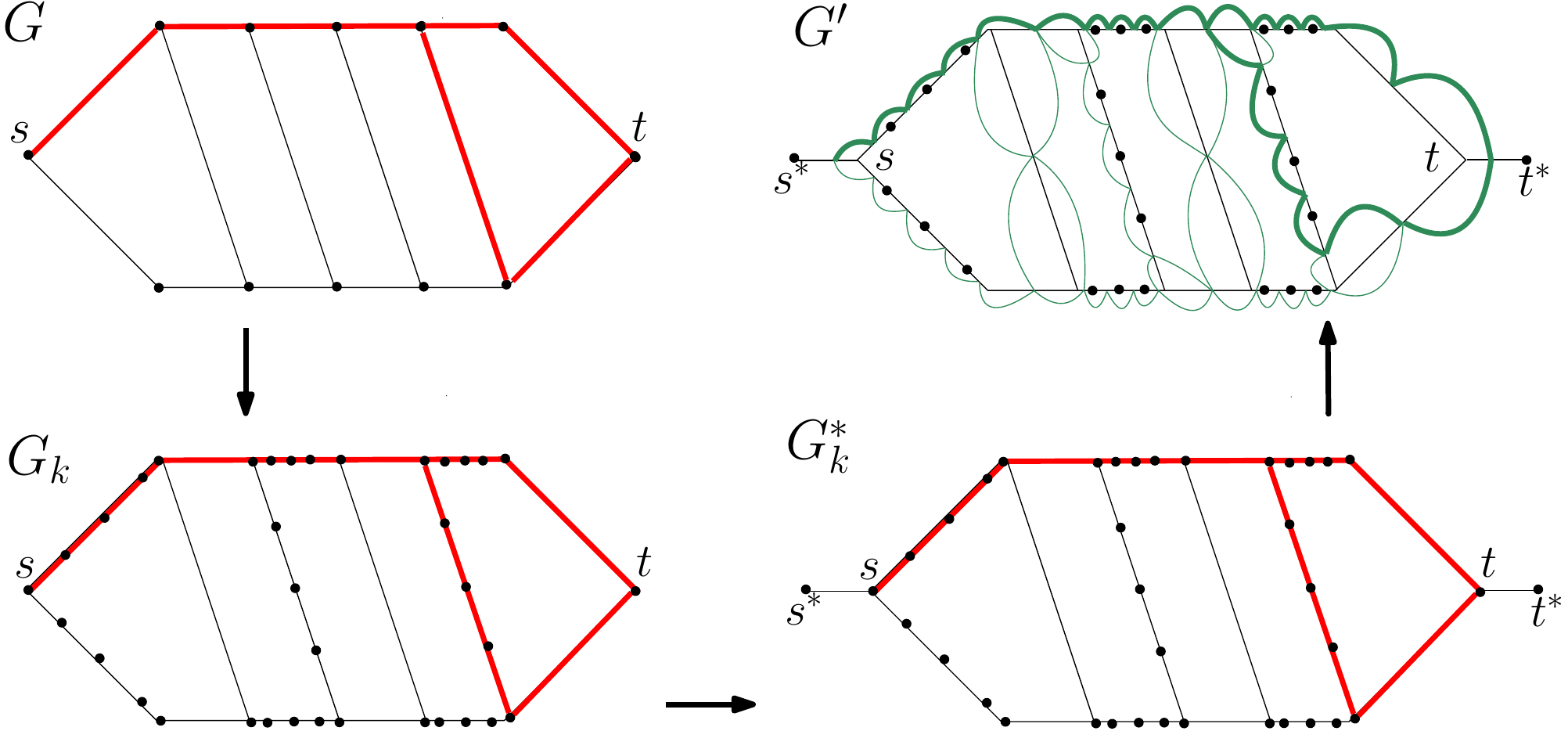}
    \vspace{0.5cm}
    \caption{The construction of $G'=L(G^*_k)$ to show $\PSPACE$-hardness of $\kspr$ for $k=5$. The two $s$--$t$ shortest paths indicated in $G$ (in bold red) differ in only one vertex. Their corresponding $s^*$--$t^*$ shortest paths (in bold green) differ in 5 vertices in $G'$.}
    \label{fig:line2}
\end{figure}

\underline{Step (i):} Consider the layered graph representation of $G$, with $s$ being the zeroth layer and $t$ being the last layer. This can be done by constructing a BFS tree rooted at $s$. 
Now replace every ``even-odd'' edge (i.e., every edge connecting a vertex in layer $i$ to a vertex in layer $i+1$, for every even $i$) by a path on $k$ vertices between the two end points of the edge. Note that if $k=2$, then this last operation does nothing. Let this new graph be $G_k$, and the new paths corresponding to $P$ and $Q$ in $G_k$ be $P_k$ and $Q_k$, respectively.

\underline{Step (ii):} Add two vertices $s^*$ and $t^*$ to $G_k$ such that $s^*$ is adjacent only to $s$, and $t^*$ is adjacent only to $t$. Let this new graph be $G^*_k$. The start vertex of $G^*_k$ is $s^*$ and the target vertex of $G^*_k$ is $t^*$. Thus, each $s$--$t$ shortest paths of $G$ corresponds to an $s^*$--$t^*$ shortest paths of $G^*_k$ whose first edge is always $(s^*,s)$ and last edge is always $(t,t^*)$.

\underline{Step (iii):} Let $G'=L(G^*_k)$. Since $G'$ is the line graph of $G^*_k$, each vertex of $G'$ is labelled by two vertices of $G^*_k$. That is, a vertex $xy$ in $G'$ (where $x$ and $y$ are two adjacent vertices of $G^*_k$) corresponds to an edge $(x,y)$ in $G^*$. The vertex $s^*s$ is our start vertex $s'$ and the vertex $tt^*$ is our target vertex $t'$.

This completes our construction of $G'$. The paths $P'$ and $Q'$ in $G'$ have their first vertex as $s^*s$, and their last vertex as $tt^*$. Their remaining vertices are the edges on the paths $P_k$ and $Q_k$, respectively. Given the fact that $P$ and $Q$ are $s$--$t$ shortest paths in $G$, it is easy to check that $P'$ and $Q'$ are $s'$--$t'$ shortest paths in $G'$. This completes the definition of the $\kspr$ instance $(G',s',t',P',Q')$. 
We make the following claim, whose proof will complete our proof of~\autoref{lem:ksprlinegraphs}.

\begin{claim}
$(G,s,t,P,Q)$ is a yes-instance of $\spr \Longleftrightarrow (G',s',t',P',Q')$ is a yes-instance of $\kspr$.
\end{claim}
\underline{$\Rightarrow$ direction:} Every reconfiguration step in $G$ changes some vertex $u_i$ in layer $i$ to a vertex $v_i$ in the same layer, where $(u_{i-1},u_i,u_{i+1},v_i)$ is a 4-cycle in $G$. Note that $u_i$ can never be $s$ or $t$, so it cannot be present in the zeroth or last layer of $G$. Thus, the graph $G^*_k$ contains a vertex $u_{i-2}$ (possibly $s^*$) and a vertex $u_{i+2}$ (possibly $t^*$). Both $u_{i-2}u_{i-1}$ and $u_{i+1}u_{i+2}$ are vertices in the line graph $G'=L(G^*_k)$. Among the two edges $(u_{i-1},u_i)$ and $(u_i,u_{i+1})$ in $G$, one is retained as an edge in $G_k$ and one is converted to a path on $k$ vertices in $G_k$ (depending on whether $i$ is odd or even). The retained edge contributes to a single vertex in the line graph $G'$, and the path on $k$ vertices (or $k-1$ edges) contributes $k-1$ vertices to the line graph $G'$. Thus, there are $1+(k-1)=k$ vertices between $u_{i-2}u_{i-1}$ and $u_{i+1}u_{i+1}$ on the path $P'$ in $G'$. These $k$ vertices are reconfigured to another set of $k$ vertices on the path $Q'$ in $G'$.

\underline{$\Leftarrow$ direction:} 
Consider a reconfiguration step in $G'$ which replaces a subpath of $j$ (where $j\leq k$) vertices on a shortest $s'$--$t'$ path by another subpath of $j$ vertices. Since $G'$ is the line graph of $G^*_k$, these $j$ vertices of $G'$ can be mapped back to a subpath of $j$ edges in $G^*_k$ (i.e., a subpath of $j+1$ vertices in $G^*_k$). Let $x$ and $y$ be the first and last vertices of the subpath comprised by these $j+1$ vertices in $G^*_k$. It is easy to see that neither $x$ nor $y$ are not changed by mapping the reconfiguration step in $G'$ back to a reconfiguration step in $G^*_k$. %Thus, the number of contiguous vertices changed in $G^*_k$ is $(j+1)-2 = j-1$ (note that $s^*$ or $t^*$ cannot be one of these $j-1$ vertices).
Note that $x$ is adjacent to at least two vertices in the next layer in $G^*_k$ (thus $x\neq s^*$ and so $x\in G_k$) and $y$ is adjacent to at least two vertices in the previous layer in $G^*_k$ (thus $y\neq t^*$ and so $y\in G_k$). Therefore, $x$ and $y$ can be mapped back to vertices $\mathsf{im}(x)$ and $\mathsf{im}(y)$ in $G$, because all ``new'' vertices of $G_k$ are adjacent to only one vertex in the next layer and only one vertex in the previous layer. Finally, if $\mathsf{im}(x)$ is in layer $i$ of $G$ (for some $i$), then $\mathsf{im}(y)$ must be in layer $i+2$ of $G$. This is because if $\mathsf{im}(y)$ lies in a layer before $i+2$ (i.e., $i+1$), then $(\mathsf{im}(x),\mathsf{im}(y))$ would be a multiple edge in $G$, which is a contradiction. And if $\mathsf{im}(y)$ lies in a layer after $i+2$, then $x$ and $y$ would have more than $k$ edges between them in $G_k$. This is also a contradiction, since $j\leq k$.
\end{proof}

\subsection{Hardness of~\texorpdfstring{$\spr$}{SPR} for Graph Powers}
%\subsection{Gradation of the Complexity of \texorpdfstring{$\kspr$}{k-SPR}}
\label{sec:k-spr-power}

In this section, we will show that it is possible to use the $\PSPACE$-hardness of $\kspr$ to prove $\PSPACE$-hardness of $\spr$ for some graph classes, namely graph powers.

\begin{definition}
The $k$\textsuperscript{th} power of a graph $G$ is obtained by making all vertices $u,v$ such that $d(u,v)\leq k$ adjacent.
\end{definition}

\begin{theorem}
$\spr$ is $\PSPACE$-complete for graph powers.
\end{theorem}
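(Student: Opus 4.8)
The plan is to combine Step (i) and Step (ii) from the proof of \autoref{lem:ksprlinegraphs} with the observation that applying the $k$-th power operation reverses the "subdivision" performed in Step (i), thereby turning an instance of $\kspr$ into an equivalent instance of $\spr$. Concretely, starting from an $\spr$ instance $(G,s,t,P,Q)$, I would first build the graph $G^*_k$ exactly as in Steps (i) and (ii): construct the BFS layering of $G$ from $s$, replace each even-odd edge by a path on $k$ vertices to obtain $G_k$, and attach the pendant vertices $s^*,t^*$. By \autoref{lem:ksprlinegraphs} (or rather the reasoning in its proof), a single reconfiguration step in $G$ corresponds to changing a contiguous block of $k$ vertices on an $s^*$--$t^*$ shortest path in $G^*_k$ — i.e. a $\kspr$ move. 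The new idea is that, instead of taking the line graph, I take a suitable power $H = (G^*_k)^{k}$ (or $(G^*_k)^{k-1}$, after checking parities carefully). In $H$, two vertices that were $\le k$ apart in $G^*_k$ become adjacent, so each subdivided path on $k$ vertices collapses: the distance between consecutive "original" layers of $G$ becomes $1$ again, and an $s^*$--$t^*$ shortest path in $H$ visits exactly one vertex per original layer plus possibly some interior subdivision vertices pinned into place. The effect is that a single $\spr$ move in $H$ — changing one vertex of a shortest path — corresponds exactly to a single $\kspr$ move in $G^*_k$, which in turn corresponds to a single $\spr$ move in $G$.

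The key steps, in order, would be: (1) recall the construction of $G^*_k$ from the proof of \autoref{lem:ksprlinegraphs}, and identify the images $P^*,Q^*$ of $P,Q$ as $s^*$--$t^*$ shortest paths in $G^*_k$; (2) compute $d_{G^*_k}$ carefully: verify that after subdivision the distance from $s^*$ to $t^*$ scales predictably (roughly, odd-even edges contribute $1$ and even-odd edges contribute $k-1$, plus $2$ for the pendant edges), and pick the exact power $p$ (a constant depending on $k$) so that in $H=(G^*_k)^p$ every $s^*$--$t^*$ shortest path is forced to "respect" the original layer structure of $G$; (3) set $s'=s^*$, $t'=t^*$, and let $P',Q'$ be $P^*,Q^*$ viewed as paths in $H$ — here one must check they remain shortest paths and have the right length; (4) prove the two-way correspondence: an $\spr$ move in $H$ can only change a single "free" vertex, and because the pinned subdivision vertices force the shortest path to thread through the original vertex of each layer, such a move projects down to a single valid $\spr$ move in $G$, and conversely; (5) conclude $\PSPACE$-completeness, noting membership in $\PSPACE$ is standard and the reduction is polynomial-time since $p$ is a fixed constant.

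The main obstacle I anticipate is Step (2) together with Step (4): making sure that shortest paths in the power graph $H$ are \emph{forced} to behave like shortest paths in $G$, with no "shortcuts" that skip layers or jump between non-adjacent original vertices. The power operation is generous with edges, so one must argue that despite the many new edges, the unique way to get from $s^*$ to $t^*$ in the minimum number of hops still corresponds to choosing one vertex per original layer (possibly with forced interior vertices), and that changing one vertex of such a path cannot accidentally correspond to a non-local change back in $G$ — in particular, one needs the analogue of the layer-$i$ to layer-$(i+2)$ argument from the $\Leftarrow$ direction of \autoref{lem:ksprlinegraphs}'s proof, adapted to the power graph. A related subtlety is parity: depending on whether $k$ is even or odd, and on whether one subdivides even-odd or odd-even edges, the correct exponent $p$ and the correct shortest-path length will differ, so I would fix a clean convention early (perhaps subdividing so that every "super-edge" between consecutive original layers has exactly $k$ edges, padding with pendant-free dummy structure if needed) to keep the distance bookkeeping uniform. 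Once the distances are pinned down, the equivalence of reconfiguration sequences should follow by essentially the same local-move analysis already used for line graphs.
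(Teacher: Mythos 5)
There is a genuine gap, and it sits exactly where you flagged it (your steps (2) and (4)); the patch you hope for does not follow from ``the same local-move analysis already used for line graphs.'' In $L(G^*_k)$ the endpoints of a changed block are pinned because a vertex of the line graph \emph{is} an edge of $G^*_k$; in a power $H=(G^*_k)^p$ nothing analogous pins them. A single $\spr$ move in $H$ replaces a vertex $v$ whose path-neighbours $u,w$ are only guaranteed to be within distance $p$ of $v$ in $G^*_k$, so the underlying change happens inside a window of $G^*_k$-length up to $2p$ between $u$ and $w$. With your construction (only even--odd edges subdivided, so super-edges alternate lengths $k-1$ and $1$, and $p=k$ or $k-1$) such a window can straddle two original layers, i.e.\ one move in $H$ can change \emph{two consecutive} original vertices of the $G$-path at once. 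That is a $2$-$\spr$ move, and since $2$-$\spr$ reconfigurability does not imply $\spr$ reconfigurability, the backward direction of your reduction from plain $\spr$ collapses. Compounding this, shortest $s^*$--$t^*$ paths in $H$ are not forced to thread through the original layer vertices: unless the total $G^*_k$-distance is an exact multiple of $p$ (which your alternating subdivision does not arrange), hops of length $<p$ are available and shortest paths in $H$ can sit on subdivision vertices, so the ``one vertex per original layer, interior vertices pinned'' picture is simply false for this construction. This $2\times$ window blow-up is intrinsic to powers; it is precisely why the paper does not reduce from plain $\spr$ at all, but instead proves that $\spr$ on $G^k$ corresponds to $(2k-1)$-$\spr$ on $G$ (one power-move = a window of at most $2k-1$ vertices in $G$, and conversely a $(2k-1)$-window move is simulated via a midpoint) and then invokes the already-established $\PSPACE$-hardness of $\kspr$ for every constant $k\ge 2$ from \autoref{lem:ksprlinegraphs}.

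Your parenthetical fallback --- subdivide so that \emph{every} super-edge has exactly the same length --- is the right repair, but it needs an argument you have not supplied: subdivide every inter-layer edge, and also the pendant edges $s^*s$ and $tt^*$, into paths of exactly $m$ edges, and take $H=(G^*)^m$. Then $d_{G^*}(s^*,t^*)=m\cdot d_H(s^*,t^*)$ exactly, so every hop of an $H$-shortest path must cover exactly $m$, which forces its $j$-th vertex to lie at $G^*$-distance exactly $jm$ from $s^*$; one then checks that the only such vertices are the original layer vertices, and that two original vertices in consecutive layers are at $G^*$-distance at most $m$ iff they are adjacent in $G$. With that divisibility-and-pinning argument the move correspondence becomes exact and a clean reduction from plain $\spr$ does go through. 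As written, though, your proposal asserts the crucial correspondence rather than proving it, for a construction in which it is false.
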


\begin{proof}
Our proof technique is as follows. Let $G^k$ be the $k$\textsuperscript{th} graph power of $G$. We use the $\PSPACE$-hardness of $(2k-1)$-SPR for $G$ to show the $\PSPACE$-hardness of SPR (or 1-SPR) for $G^k$.

Let $P_1$, $P_2$  be two reconfigurable $s$--$t$ shortest paths in $G^k$. Consider the reconfiguration sequence. At each step of the reconfiguration sequence we replace a vertex $v$ with $v'$ where both of them have edges to $u$ and $w$ that belong to intermediate $s$--$t$ shortest path. 
We can construct a reconfiguration sequence for two $s$--$t$ shortest paths $P'_1, P'_2$ in $G$, where $V(P_1) \subseteq V(P'_1)$ and $V(P_2) \subseteq V(P'_2) $, as follows:
\begin{itemize}
    \item If the edges $uv, vw, uv', v'w \in E(G)$ then reconfiguration step remain unchanged
    \item If any of the $uv, vw, uv', v'w \in E(G^k)$ then consider a shortest path between the two vertices in $G$. Replace all of them in a single step.
\end{itemize}
Clearly, following the above steps it is possible to $P'_2$ from $P'_1$. The number of vertices we change in the second case is at most $2k-1$ as an edge in $G^k$ implies that there exists a path of length at most $k$ between them in $G$. If both edges involved in the reconfiguration step from $v$ or $v'$ are in $G$, then we change at most $2k-1$ vertices in one step. Hence, it is a $(2k-1)$-$\spr$.  

To prove the other direction, let $P_1'$ and $P_2'$ be two reconfigurable $s$--$t$ shortest paths in $G$. Each reconfiguration step in $G$ changes at most $(2k-1)$ contiguous vertices. Fix a reconfiguration step that changes $k'$ contiguous vertices between two vertices $u$ and $v$ in $G$. Clearly, $k'\leq 2k-1$. Consider the $\lfloor k'/2\rfloor$\textsuperscript{th} vertex after $u$ on $P_1$, and call it $x$. Similarly consider the $\lfloor k'/2\rfloor$\textsuperscript{th} vertex after $u$ on $P_2$, and call it $y$. Note that $(u,x)$, $(x,v)$ are edges in $G^k$, as are $(u,y)$ and $(y,v)$. Thus, they can trivially be reconfigured in $G^k$. This completes the proof. 
\end{proof}

\section{Polynomial-time Solvable~\texorpdfstring{$\spr$}{SPR} Problems}
In this section, we present polynomial-time algorithms for $\spr$ on circle graphs, bridged graphs, Boolean hypercubes and graphs of constant diameter.

\subsection{\texorpdfstring{$\spr$}{SPR} for Permutation Graphs}
\label{sec:permute}

Permutation graphs are a subclass of circle graphs, and therefore the algorithm for circle graphs that we present in the next section (\autoref{sec:circle}) also holds for permutation graphs. However, it is instructive to study the polynomial-time algorithm for permutation graphs first, and then generalise it to circle graphs, as the circle graphs algorithm borrows several ideas from the permutation graphs algorithm.

\begin{definition}
A graph $G$ on $n$ vertices $\{1,2,\ldots,n\}$ is called ermutation graph if there exists a permutation $\sigma = (\sigma(1)\sigma(2)\cdots\sigma(n)) \in \mathcal{S}_n$ such that for every $1\leq i<j\leq n$, we have $(i,j)\in E(G)$ if and only if $\sigma(i)>\sigma(j)$.
\end{definition}

Given a graph, its permutation representation can be constructed in linear time, if it exists~\cite{Golumbic}. Let $G$ be a permutation graph on $n$ vertices with two special vertices $s$ and $t$. We delete all the edges of $G$ which do not lie on an $s$--$t$ shortest path, and label the edges of $G$ as L-type (L stands for left) or R-type (R stands for right) as follows.

\begin{definition}
Let $e$ be an edge of an $s$--$t$ shortest path $P=(s=v_{1}, v_{2}, \ldots, v_{l}=t)$ in $G$. Then $e=(v_i,v_{i+1})$ for some $1\leq i\leq l-1$. We say that $e$ is of L-type if $v_{i+1}<v_i$, and of R-type if $v_i<v_{i+1}$. (See~\autoref{fig:permutationLR} for an example.)
\end{definition}

\begin{figure}[t]
    \centering
    \vspace{0.5cm}
    \includegraphics[scale = 0.8]{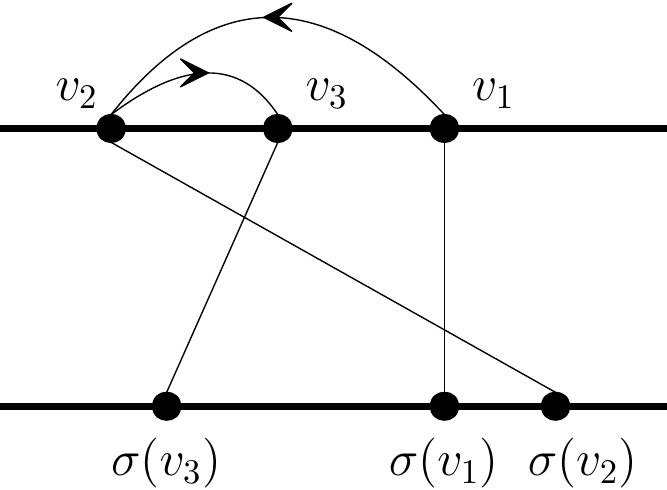}
    \vspace{0.5cm}
    \caption{$(v_1,v_2)$ is of L-type and $(v_2,v_3)$ is of R-type}
    \label{fig:permutationLR}
\end{figure}

Throughout this proof, we will assume that $d(s,t)\geq 3$ (that is, $l\geq 4$) and that $t$ is to the left of $s$, as shown in~\autoref{fig:permutationst}. This means that for every $3\leq i\leq l$,
\begin{equation} 
\label{eq:ttoleftofs}
    (v_i,s)\notin E(G),\ v_i<s \text{ and } \sigma(v_i)<\sigma(s).
\end{equation}

\begin{lemma} 
\label{lem:leftright}
Let $P=(s=v_{1}, v_{2}, \ldots, v_{l}=t)$ be a shortest path in $G$. Then for every $2\leq i\leq l-1$, the edges $(v_{i-1},v_i)$ and $(v_i,v_{i+1})$ are of different types.
\end{lemma}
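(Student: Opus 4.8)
The plan is to argue by contradiction: suppose two consecutive edges $(v_{i-1},v_i)$ and $(v_i,v_{i+1})$ on the shortest path $P$ are of the same type, say both R-type (the L-type case being symmetric). By the definition of R-type this means $v_{i-1}<v_i$ and $v_i<v_{i+1}$, hence $v_{i-1}<v_{i+1}$, so the pair $v_{i-1},v_{i+1}$ cannot be an inversion with respect to their indices; I want to conclude that $(v_{i-1},v_{i+1})\in E(G)$, which would contradict the fact that $P$ is a shortest path (it would give a chord skipping $v_i$, reducing the distance by one). To get the edge, I will use the interplay between the index order and the order $\sigma$ imposes.

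The key step is to track both coordinates of the permutation diagram. An edge $(v_{i-1},v_i)$ being R-type means $v_{i-1}<v_i$, and since $(v_{i-1},v_i)\in E(G)$ we must have $\sigma(v_{i-1})>\sigma(v_i)$; similarly $(v_i,v_{i+1})$ R-type gives $v_i<v_{i+1}$ and $\sigma(v_i)>\sigma(v_{i+1})$. Chaining these: $v_{i-1}<v_{i+1}$ and $\sigma(v_{i-1})>\sigma(v_{i+1})$, which is exactly the condition for $(v_{i-1},v_{i+1})\in E(G)$ by the permutation-graph definition. So both endpoints of $P$'s subpath $(v_{i-1},v_i,v_{i+1})$ are adjacent, contradicting that $P$ is an induced shortest path (edges not on any $s$–$t$ shortest path have been deleted, but a shortest path is still geodesic, so it cannot contain a chord). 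The same computation with all inequalities reversed handles the case where both edges are L-type.

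The main obstacle I anticipate is making sure the "chord implies not shortest" step is airtight after the preprocessing that deletes edges not lying on any $s$–$t$ shortest path. I need to confirm that the deletion does not remove the edge $(v_{i-1},v_{i+1})$ in question — but it cannot, because if that edge existed in $G$ it would itself lie on an $s$–$t$ shortest path (namely $P$ with $v_i$ excised, which has length $l-1<l$, forcing $d(s,t)\le l-1$, contradicting $d(s,t)=l-1$ — wait, one must be careful: $P$ having $l$ vertices means length $l-1=d(s,t)$, and the shortcut would have length $l-2$, a genuine contradiction). So the edge deletion is irrelevant here: the contradiction is with $P$ being a geodesic in $G$ itself, which holds regardless of preprocessing. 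A secondary point worth stating cleanly is that the argument uses only $2\le i\le l-1$ so that $v_{i-1}$ and $v_{i+1}$ are both well-defined vertices of $P$; no appeal to the assumptions in~\eqref{eq:ttoleftofs} or to $d(s,t)\ge 3$ is needed for this particular lemma, though those will matter for subsequent results.
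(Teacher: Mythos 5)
Your proof is correct and follows essentially the same route as the paper: assume two consecutive edges share a type, chain the index inequalities and the $\sigma$-inequalities forced by the permutation-graph edge condition, deduce that $(v_{i-1},v_{i+1})$ is an edge, and contradict $P$ being a shortest path (the paper treats the L-type case, you the symmetric R-type case). Your extra remark that the edge-deletion preprocessing is irrelevant to the contradiction is accurate but not needed beyond what the paper already asserts.
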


\begin{proof} We will prove this lemma by contradiction. Suppose $(v_{i-1},v_i)$ and $(v_i,v_{i+1})$ are of the same type (say L-type). Then it is easy to see the following.
\begin{align*}
    &v_{i+1}<v_i<v_{i-1};\\
    &\sigma(v_{i-1})<\sigma(v_i)<\sigma(v_{i+1}).
\end{align*}
These imply that $(v_{i-1},v_{i+1})$ is also an edge, which is impossible in a shortest path.
\end{proof}

\begin{lemma} \label{lm:perm}
Two shortest paths $P_1$ and $P_2$ can be reconfigured in $G$ if and only if the first edge on both the paths is of the same type. Further, the reconfiguration sequence can be obtained in linear time.
\end{lemma}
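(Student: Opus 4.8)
The plan is to characterise, for each $s$--$t$ shortest path, the sequence of L/R types along its edges, and to show this type-sequence is essentially rigid. By Lemma \ref{lem:leftright}, consecutive edges alternate in type, so the entire type-sequence of a shortest path is determined by the type of its \emph{first} edge: it is either $\mathrm{LRLR}\cdots$ or $\mathrm{RLRL}\cdots$. Since a single reconfiguration step replaces an internal vertex $v_i$ by $v_i'$ with $(v_{i-1},v_i',v_{i+1})$ forming a path, the edges $(v_{i-1},v_i')$ and $(v_i',v_{i+1})$ must have the same types as $(v_{i-1},v_i)$ and $(v_i,v_{i+1})$ respectively — this is because in a permutation graph the type of an edge $(v_{i-1},x)$ is forced by whether $x<v_{i-1}$ or $x>v_{i-1}$, and the alternation constraint from Lemma \ref{lem:leftright} together with the path-length constraint pins it down. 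Hence every reconfiguration step preserves the first-edge type, proving the ``only if'' direction: if $P_1$ and $P_2$ are reconfigurable, their first edges have the same type.

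For the ``if'' direction, I would argue that any two shortest paths with the same first-edge type (hence the same full type-sequence $\mathrm{LRLR}\cdots$ or $\mathrm{RLRL}\cdots$) can be reconfigured into a common canonical path. The natural candidate is the ``greedy'' path that at each L-step goes to the smallest available vertex (equivalently smallest $\sigma$-value) and at each R-step to the largest, or some similarly extremal choice; using \eqref{eq:ttoleftofs} and the permutation structure one shows this canonical path exists and is reachable. The key sub-step is: given the current path and the canonical path, find the first index $i$ where they differ and show $v_i$ can be legally swapped toward the canonical vertex, i.e.\ that the canonical vertex at position $i$ is adjacent to $v_{i-1}$ and to $v_{i+1}$ and that the result is still a shortest path. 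This should follow from a "sandwiching" argument: in the permutation diagram, the vertices available at position $i$ (those adjacent to the shared prefix endpoint and lying on some $s$--$t$ shortest path) form an interval-like family under the $<$ order, so one can always move monotonically toward the extremal choice one step at a time. Iterating reconfigures $P_1$ to the canonical path and $P_2$ to the same canonical path; concatenating the second sequence reversed gives the reconfiguration $P_1 \to P_2$.

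For the running-time claim, note the reconfiguration just sketched uses $O(l)$ steps to reach the canonical path from each side — at each differing position we perform $O(1)$ amortised swaps if the "available interval" shrinks monotonically, or at worst $O(l)$ swaps per position giving an $O(l^2)$ bound, which is still polynomial; with the interval/monotonicity observation it is linear in the total path length, hence linear in the input size. Constructing the permutation representation is linear by \cite{Golumbic}, deleting non-shortest-path edges and labelling edge types is linear via BFS from $s$, so the whole procedure runs in linear time.

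The main obstacle I anticipate is establishing the ``sandwiching'' / monotone-interval property precisely: one must show that along a shortest path, the set of valid next vertices (adjacent to the current vertex, of the correct L- or R-type, lying on an $s$--$t$ shortest path, and extendable to $t$) behaves like a contiguous block under the natural order so that single-vertex swaps can always march it to the extreme element without leaving the shortest-path manifold. Handling the interaction with the target vertex $t$ (and the standing assumptions $d(s,t)\ge 3$, $t$ to the left of $s$, \eqref{eq:ttoleftofs}) at the tail of the path is the delicate corner case; the boundary edges incident to $s$ and $t$ have their types already forced, so the argument must confirm the canonical path respects those and that the last internal swap does not get stuck.
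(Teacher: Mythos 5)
Your ``only if'' direction is essentially the paper's argument: by \autoref{lem:leftright} the type sequence alternates, so it is determined by the first edge, and a single-vertex replacement leaves at least one adjacent edge intact, which by alternation forces the new edges (and hence the first edge) to keep their types. That part is fine, though you should state explicitly that when the second vertex is replaced it is the \emph{unchanged} edge $(v_3,v_4)$ that anchors the propagation, as the paper does.

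The ``if'' direction, however, has a genuine gap, and it is exactly the point you flag yourself as the ``main obstacle.'' Your plan routes both paths to a canonical greedy path, but everything rests on the unproven ``sandwiching''/monotone-interval claim that the eligible vertices at each position form a block through which one can march by single swaps without leaving the set of shortest paths; you never establish the adjacency facts needed for even one such swap, so the proposal is a programme rather than a proof. The paper shows no canonical path is needed: it inducts on the path length and merges the two paths one position at a time. Concretely, if $v_2=u_2$ recurse on the common suffix; otherwise assume $v_2<u_2$ (both first edges L-type), note $(v_2,v_3)$ and $(u_2,u_3)$ are R-type by \autoref{lem:leftright}, and then a four-line inequality chain --- $v_2<u_2<s$ and $u_2<u_3$ give $v_2<u_3$, while $\sigma(u_3)<\sigma(s)$ from \eqref{eq:ttoleftofs} and $\sigma(s)<\sigma(v_2)$ give $\sigma(u_3)<\sigma(v_2)$ --- shows $(v_2,u_3)\in E(G)$, so $u_2$ can be replaced by $v_2$ in $P_2$ and the induction applies to the two suffixes from $v_2$. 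This single adjacency computation is the crux your proposal defers, and any completion of your canonical-path route would have to prove an equivalent statement. Note also that the induction yields at most one swap per position, hence a sequence of length $O(l)$ and a linear-time algorithm, whereas your own accounting only guarantees $O(l^2)$ without the unproven monotonicity, so the lemma's linear-time claim is likewise not established by the proposal.
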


\begin{proof}
Let the paths be
\begin{align*}
    P_1&=(v_1,v_2,v_3,\ldots,v_{l-1},v_l)\ \text{ and}\\
    P_2&=(u_1,u_2,u_3,\ldots,u_{l-1},u_l),
\end{align*}
where $v_1=u_1=s$ and $v_l=u_l=t$.

\underline{$\Leftarrow$ direction:} We will show that if $(s,v_2)$ and $(s,u_2)$ are of the same type (say L-type), then $P_1$ and $P_2$ can be reconfigured. Our proof is by induction on $\ell$ (note that $\ell=d(s,t)-1$). For $\ell=3$, this is trivial (the paths are $(s,v_2,t)$ and $(s,u_2,t)$). Now assume that every pair of shortest paths in permutation graphs with $l-1$ vertices each, both starting with an L-type edge, is reconfigurable.

We will show that $P_1$ and $P_2$, two paths  with $l$ vertices each, both starting with an L-type edge, are always reconfigurable. If $v_2=u_2$, then let $s'=v_2=u_2$, and the $s'$-$t$ subpaths of $P_1$ and $P_2$ have $l-1$ vertices each, which can be reconfigured by the induction hypothesis, and we are done. Now, if $v_2\neq u_2$, assume that $v_2<u_2$ (the proof for $u_2<v_2$ is similar). It is helpful to follow~\autoref{fig:permutationst} while reading the rest of this proof. Since both $(s,v_2)$ and $(s,u_2)$ are L-type edges, this means that both $(v_2,v_3)$ and $(u_2,u_3)$ are R-type edges, by~\autoref{lem:leftright}. We have the following.
\begin{align*}
    &v_2<u_2<s; &&\text{($(s,v_2)$ and $(s,u_2)$ are L-type)}\\
    &u_2<u_3; &&\text{($(u_2,u_3)$ is R-type)}\\
    &u_3<s,\ \sigma(u_3)<\sigma(s); &&\text{(substitute $i=3$ in~\eqref{eq:ttoleftofs})}\\
    &\sigma(s)<\sigma(v_2). &&\text{($(s,v_2)$ is an edge)}
\end{align*}
The first two lines imply that $v_2<u_3$ and the last two lines imply that $\sigma(u_3)<\sigma(v_2)$. Thus $(v_2,u_3)$ is an edge in $G$. We can therefore reconfigure the path $P_2$ by replacing $u_2$ with $v_2$, obtaining $P_2'=(s,v_2,u_3,\ldots,t)$. Now, setting $s'=v_2$ in both $P_1$ and $P_2'$, we obtain two $s'$-$t$ paths $(s',v_3,\ldots,t)$ and $(s',u_3,\ldots,t)$, each with $l-1$ vertices. By the induction hypothesis, these can be reconfigured. Note that our proof also implies that the reconfiguration sequence can be obtained in linear time.

\begin{figure}[b]
    \centering
    \vspace{0.5cm}
    \includegraphics[width = 0.8\linewidth]{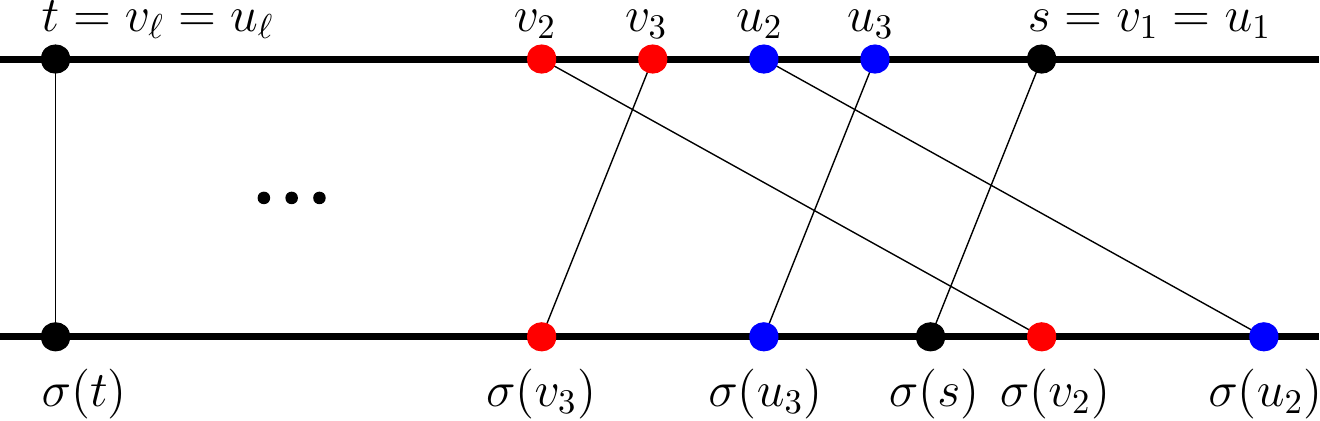}
    \vspace{0.5cm}
    \caption{$s$ and $t$ are in black, vertices of $P_1$ are in red and vertices of $P_2$ are in blue}
    \label{fig:permutationst}
\end{figure}

\underline{$\Rightarrow$ direction:} We will show that if $P_1$ can be reconfigured to $P_2$, then their first edges are of the same type. For this, we will simply show that a reconfiguration step does not change the type of the first edge.

Consider a reconfiguration step in which a vertex $v_i$ of $P_1$ is changed to a vertex $w_i$ ($w_i$ may or may not be $u_i$). If $i>2$, then this clearly does not change the type of the first edge. If $i=2$, then the new path is
\begin{equation*}
    P_1'=(s,w_2,v_3,v_4,\ldots,v_{l-1}, v_l).
\end{equation*}
Let the first edge $(s,v_2)$ of $P_1$ be of L-type (the proof for R-type is similar). We will show that the first edge $(s,w_2)$ of $P_1'$ is also of L-type. By~\autoref{lem:leftright}, the edges $(v_2,v_3)$ and $(v_3,v_4)$ in $P_1$ are of R-type and L-type, respectively. In $P_1'$, since $(v_3,v_4)$ is of L-type, we have $(w_2,v_3)$ is of R-type and $(s,w_2)$ is of L-type (again by~\autoref{lem:leftright}).
\end{proof}

\subsection{\texorpdfstring{$\spr$}{SPR} for Circle Graphs}
\label{sec:circle}
\begin{figure}[ht]
    \centering
    \vspace{0.5cm}
    \includegraphics[width = 0.9\linewidth]{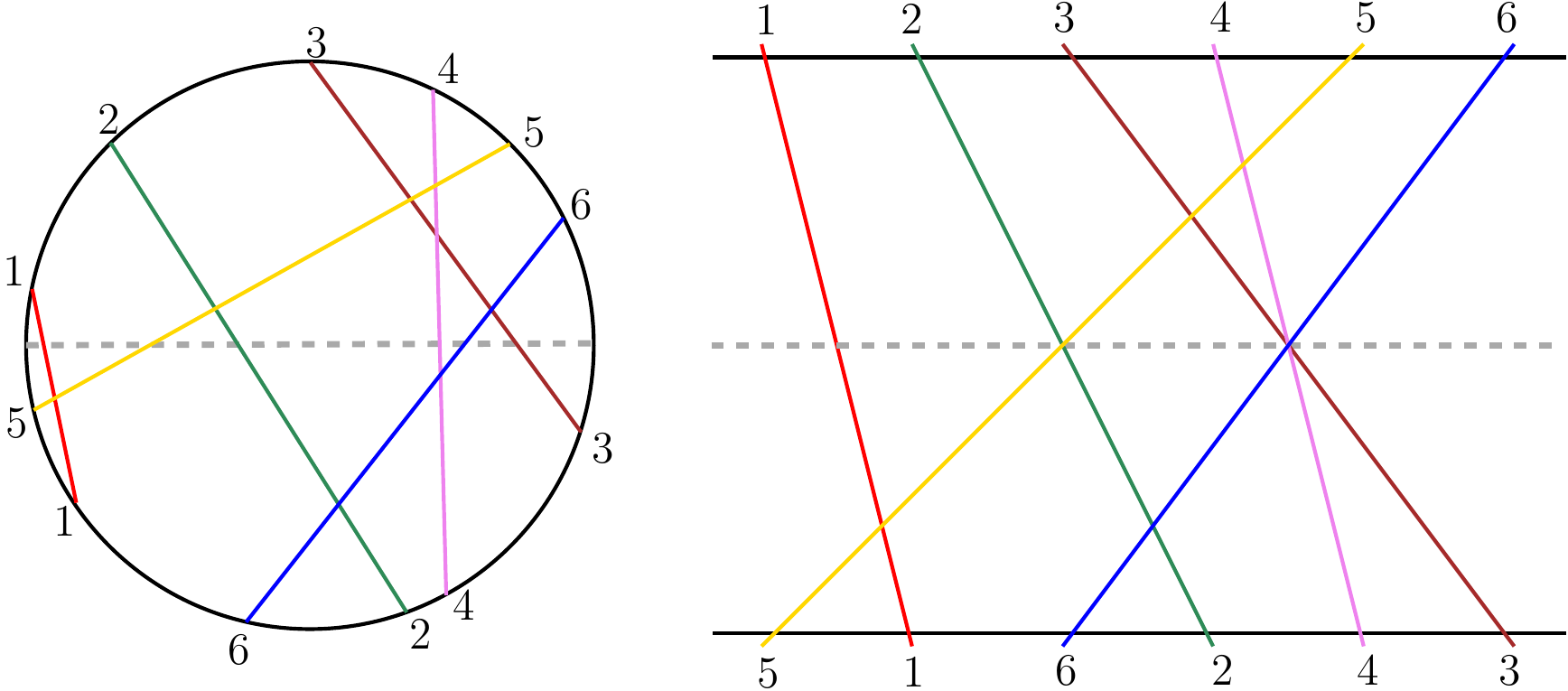}
    \vspace{0.5cm}
    \caption{A circle graph with an equator (indicated by the dotted line) and its corresponding permutation graph}
    \label{fig:circleperm}
\end{figure}
We will now show that $\spr$ can be solved in linear time for a superclass of permutation graphs called~\emph{circle graphs}, using much of the same ideas we did for permutation graphs.

\begin{definition} A graph is called a circle graph if its vertices can be represented by the chords of a circle such that two vertices have an edge in the graph if and only if their corresponding chords intersect.
\end{definition}

The following well-known fact establishes the connection between circle graphs and permutations graphs.

\begin{fact}[\cite{Brandstadt}] \label{fact:circle} A graph is a permutation graph if and only if it is a circle graph that admits an equator, i.e., an additional chord that intersects every other chord of the circle graph (\autoref{fig:circleperm}).
\end{fact}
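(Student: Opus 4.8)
The plan is to prove both implications through the classical geometric dictionary between the ``two parallel lines'' model of permutation graphs and the ``chords of a single circle'' model of circle graphs (the two pictures in \autoref{fig:circleperm}). As a preliminary normalisation I would, via an arbitrarily small perturbation, make all the endpoints involved (of the chords, of the segments, and of the prospective equator) pairwise distinct; this changes neither the intersection graph nor the presence of an equator. The one combinatorial observation I would isolate and reuse on both sides is: two chords of a circle cross iff their four endpoints alternate in the cyclic order around the circle, and two straight segments whose endpoints lie on two disjoint parallel lines cross iff their top endpoints and their bottom endpoints occur in opposite orders.

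For the forward direction, starting from a permutation graph $G$ with permutation $\sigma$, I would first draw the standard permutation diagram: a top line carrying positions $1,\dots,n$ left to right, a bottom line likewise, and for each $i$ a segment $\ell_i$ joining position $i$ on the top line to position $\sigma(i)$ on the bottom line, so that $\ell_i$ and $\ell_j$ cross exactly when $ij\in E(G)$. Next I would bend this figure into a disk so that the top line becomes one boundary arc and the bottom line becomes the complementary boundary arc (reversing the orientation of one of them), replacing each $\ell_i$ by the chord through the images of its endpoints; using the combinatorial observation one checks that the cyclic order of the $2n$ endpoints produces exactly the same crossing pattern, so the resulting circle diagram still represents $G$. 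Finally I would add one more chord $e$ joining a point of the short boundary gap between the left ends of the two arcs to a point of the short gap between their right ends: every $\ell_i$-chord has one endpoint on each of the two arcs, so $e$ separates those two endpoints and hence meets $\ell_i$. Thus $e$ is an equator and $G$ is a circle graph admitting one.

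For the reverse direction, I would start from a circle representation of $G$ together with an equator $e$. Its two endpoints cut the circle boundary into two arcs $A$ and $B$, and since every chord representing a vertex of $G$ crosses $e$, each such chord has exactly one endpoint on $A$ and one on $B$. Cutting the circle at the endpoints of $e$ and straightening $A$ and $B$ (one of them with reversed orientation) into two parallel lines turns each chord into a segment between the lines, yielding a permutation diagram; reading the left-to-right order of endpoints along $A$ as $1,\dots,n$ and the induced order along $B$ as a permutation $\sigma$, the combinatorial observation again gives that adjacency is precisely crossing, so $G$ is the permutation graph of $\sigma$.

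I expect the only genuinely delicate point to be the assertion that ``bending'' the strip into a disk, and its inverse ``cutting'' the disk into a strip, preserves the intersection graph; this is exactly why I would first extract the purely combinatorial characterisations of crossing in the two models and then verify that the endpoint bijection induced by bending carries one characterisation to the other. The degenerate configurations — two chords sharing an endpoint, or an endpoint of $e$ coinciding with an endpoint of some $\ell_i$ — are eliminated by the initial perturbation, and everything else is routine bookkeeping with linear and cyclic orders.
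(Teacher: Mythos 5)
Your proof is correct: the perturbation to distinct endpoints, the bending of the two-parallel-lines diagram into a circle with one arc's orientation reversed, the added chord through the two gaps as the equator, and the inverse cutting/straightening argument together give exactly the standard equivalence, and the crossing characterisations you isolate (alternating endpoints on the circle versus oppositely ordered endpoints on the two lines) are the right invariants to check. Note that the paper itself gives no proof of this statement -- it is stated as a known fact with a citation to Brandst\"adt et al.\ -- so there is nothing to compare against; your argument is essentially the classical textbook one and correctly fills in that omission.
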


Given a graph, its circle representation can be constructed in quadratic time, if it exists~\cite{Golumbic}. %Every chord divides the circle into two arcs. A pair of non-intersecting chords induce a pair of non-intersecting arcs on the circle. The two arcs that the chord $u$ divides the circle into are called the arcs of $u$. %It is easy to see that for every pair of non-adjacent vertices $u, v$ in the graph $G$, one of the arcs of $u$ is disjoint from one of the arcs of $v$.
Let $G$ be a circle graph and $P_1$ and $P_2$ be two $s$--$t$ shortest paths in $G$. For every vertex $v$ we assign it a label $i$, if it appears on the $i$\textsuperscript{th} level of the $\bfs$ tree rooted at $s$. A chord with label $i$ intersects one or more chords with label $i-1$ and one or more chords with label $i+1$ (possibly even other chords at label $i$, but we ignore those for our proof). We orient the chord $i$~\emph{from} the point of intersection of the $i-1$ chord on chord $i$ (called the first end point)~\emph{to} the point of intersection of the $i+1$ chord on the chord $i$ (called the second end point).

\begin{figure}[ht]
    \centering
    \vspace{0.5cm}
    \includegraphics[width=0.9\linewidth]{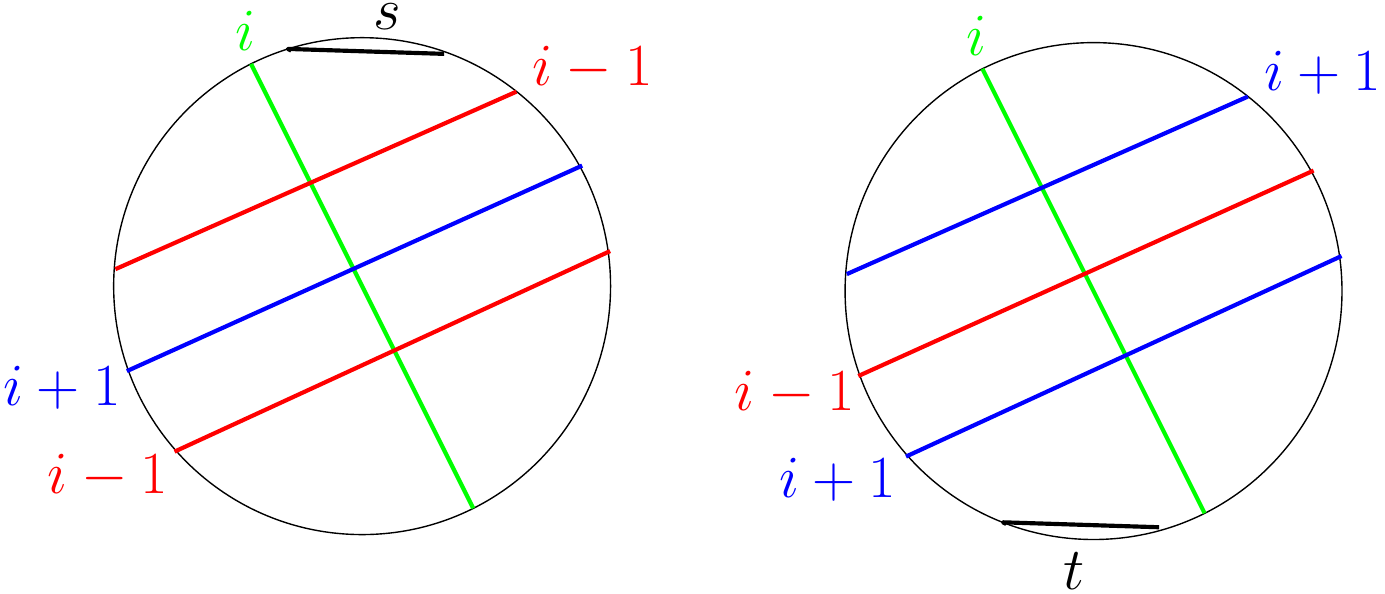}
    \vspace{0.5cm}
    \caption{Proof of~\autoref{cl:chords}}
    \label{fig:circle2}
\end{figure}

\begin{lemma} \label{cl:chords}
Every chord has a unique orientation. In other words, a chord cannot receive two different orientations from two different shortest paths.
\end{lemma}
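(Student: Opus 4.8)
The plan is to show that the orientation assigned to a chord is forced by the BFS level structure alone, independent of which shortest path passes through it. Fix a chord $c$ with BFS label $i$. Any $s$--$t$ shortest path through $c$ enters $c$ from a chord at level $i-1$ and leaves towards a chord at level $i+1$, so the orientation is defined by ``first endpoint = intersection with the incoming level-$(i-1)$ chord, second endpoint = intersection with the outgoing level-$(i+1)$ chord.'' The claim is really that \emph{all} chords at level $i-1$ that cross $c$ hit $c$ on the same side (the same arc of $c$), and likewise all crossing chords at level $i+1$ hit $c$ on the other side; then the orientation is well-defined regardless of the particular predecessor/successor chosen.

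First I would set up the geometry: a chord $c$ splits the disk into two regions; each other chord crossing $c$ has one endpoint in each region, and the point where it meets $c$ lies in the open segment of $c$. Two crossing chords meet $c$ at the same point never (chords are in general position), so the crossing points of all other chords sit at distinct interior points of $c$, inducing a linear order along $c$. The heart of the argument is: if $a$ is a level-$(i-1)$ chord crossing $c$ and $b$ is a level-$(i+1)$ chord crossing $c$, then the crossing point of $a$ precedes that of $b$ in a consistent direction along $c$ — and moreover two level-$(i-1)$ chords crossing $c$ cannot straddle a level-$(i+1)$ crossing point. I would prove this by contradiction: suppose a level-$(i+1)$ chord $b$ meets $c$ at a point lying between the meeting points of two level-$(i-1)$ chords $a_1, a_2$. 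Using the BFS/shortest-path structure, $b$ (being at level $i+1$) must itself cross some chord at level $i$; I would argue via planarity of the chord-intersection pattern inside the disk that $b$ is then forced to cross either $a_1$ or $a_2$ (they ``surround'' the relevant endpoint of $b$), which would put $b$ within distance $1$ of a level-$(i-1)$ chord and hence at level $\le i$, a contradiction. The figure referenced (\autoref{fig:circle2}) presumably carries exactly this planar case analysis. A symmetric argument handles two level-$(i+1)$ chords straddling a level-$(i-1)$ crossing point.

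Having established that the level-$(i-1)$ crossing points on $c$ all lie on one side of all the level-$(i+1)$ crossing points, the ``first endpoint'' (nearest the level-$(i-1)$ crossings) and ``second endpoint'' (nearest the level-$(i+1)$ crossings) of $c$ are determined purely by the labelling, so every shortest path through $c$ induces the same orientation. I would then note the boundary cases: $c=s$'s chord (label $0$, no predecessor) and $c=t$'s chord (label $\ell$, no successor) — for these the orientation convention should be fixed separately or is irrelevant, and I would remark that these chords are never reconfigured anyway, paralleling the permutation-graph treatment.

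The main obstacle I anticipate is the planarity step: turning ``$b$ is at level $i+1$, so it crosses a level-$i$ chord'' into a clean topological contradiction when $b$'s crossing point on $c$ is sandwiched between two level-$(i-1)$ crossing points. One must be careful that ``crossing a level-$i$ chord'' need not mean crossing $c$ itself, so the argument has to track which arc of the circle $b$'s far endpoint lies on and use a Jordan-curve-style separation argument with the chords $a_1, a_2$. I expect this to require the explicit small case analysis suggested by \autoref{fig:circle2} rather than a one-line argument, but it is elementary once the right picture is fixed.
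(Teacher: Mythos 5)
Your overall framing is right (reduce the lemma to the claim that, along a chord $c$ of label $i$, no level-$(i{+}1)$ crossing point can be sandwiched between two level-$(i{-}1)$ crossing points, and symmetrically), and this matches the two cases the paper distinguishes. But the central step of your argument has a genuine gap: it is simply not true that a level-$(i{+}1)$ chord $b$ whose crossing point on $c$ lies between those of two level-$(i{-}1)$ chords $a_1,a_2$ is topologically forced to cross $a_1$ or $a_2$. Take $c$ horizontal, $a_1$ crossing $c$ near its left end with both endpoints on the left, $a_2$ crossing near the right end with both endpoints on the right, and $b$ crossing in the middle with both endpoints in the middle of the two arcs: the three chords are pairwise non-crossing except through $c$, the crossing points are in the order $a_1, b, a_2$, and $b$ meets neither $a_1$ nor $a_2$. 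So no Jordan-curve argument confined to $c, a_1, a_2, b$ can produce the contradiction; the bad configuration is only excluded by global information about how $a_1$ and $a_2$ are reached from $s$.

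That global information is exactly what the paper's proof uses, and it is the idea missing from your proposal. Since $a_1,a_2$ are at level $i-1$ and $b$ at level $i+1$, neither $a_1$, $a_2$, nor $s$ can cross $b$; hence $b$ separates the disk into two parts with $a_1$ and $a_2$ in opposite parts (their crossing points on $c$ straddle $b$'s), and $s$ lies in one of them, say with $a_1$. Now take a shortest path from $s$ to $a_2$: its chords form a connected chain (consecutive chords intersect) joining the two sides of $b$, so some chord of this chain, at distance at most $i-1$ from $s$, must cross $b$, giving $d(s,b)\le i$ and contradicting $d(s,b)=i+1$. The second case is handled symmetrically with $t$ in place of $s$ (a shortest path from the far level-$(i{+}1)$ chord to $t$ must cross the separating level-$(i{-}1)$ chord). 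Your proposal never invokes $s$ or $t$, only the local levels of $a_1,a_2,b$, and the counterexample above shows that this local data is insufficient; to repair the proof you must replace the ``$b$ must cross $a_1$ or $a_2$'' step by the separation-plus-shortest-path-to-$s$ (resp.\ $t$) argument.
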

\begin{proof}
For the sake of contradiction, assume that there exists a chord $v$ (say, labelled $i$) with two orientations. This means that there exists a pair of chords with labels $(i-1,i+1)$ which orient $v$ in one direction, and also another pair of chords with labels $(i-1,i+1)$ which orient $v$ in the opposite direction. A little bit of thought reveals that the only way this can happen is if at least one of the following is true.
\begin{enumerate}
    \item The intersection point of a chord of label $i+1$ lies between the intersection points of two chords of label $i-1$.
    \item The intersection point of a chord of label $i-1$ lies between the intersection points of two chords of label $i+1$.
\end{enumerate}
In the first case (\autoref{fig:circle2}, left), let $u_{i-1}$ and $u'_{i-1}$ be the two chords at level $i-1$, and $u_{i+1}$ be the chord at level $i+1$. The chord $u_{i+1}$ divides the circle into two parts. It is easy to see that $u_{i-1}$ and $u'_{i-1}$ lie in opposite parts. Since $s$ does not intersect $u_{i+1}$, it also lies in one of these two parts (say, in the same part as $u_{i-1}$). Now, consider the chords on a shortest path from $s$ to $u'_{i-1}$. Note that at least one of these chords must intersect $u_{i+1}$, implying that $d(s,u_{i+1})\leq d(s,u'_{i-1})$. This is clearly a contradiction.

In the first case (\autoref{fig:circle2}, right), let $u_{i+1}$ and $u'_{i+1}$ be the two chords at level $i+1$, and $u_{i-1}$ be the chord at level $i-1$. The chord $u_{i-1}$ divides the circle into two parts. It is easy to see that $u_{i+1}$ and $u'_{i+1}$ lie in opposite parts. Since $t$ does not intersect $u_{i-1}$, it also lies in one of these two parts (say, in the same part as $u_{i+1}$). Now, consider the chords on a shortest path from $u'_{i+1}$ to $t$. Note that at least one of these chords must intersect $u_{i-1}$, implying that $d(u_{i-1},t)\leq d(u'_{i+1},t)$. This is clearly a contradiction.
\end{proof}

If the circle graph admits an equator, then we can directly use the algorithm for permutation graphs (\autoref{fact:circle}) from the previous section (\autoref{sec:permute}). Otherwise, we define the equator to be an additional chord intersecting both $s$ and $t$. Given~\autoref{cl:chords}, we will now define a labelling scheme of the chords based on their orientation and also on the positions of their end points with respect to the equator. Each chord receives one of four possible labels. (This is similar to the L-type and R-type labels in permutation graphs.)
\begin{itemize}
    \item $\ttt$ if both end points of the chord are above the equator.
    \item $\tb$ if the first end point of the chord is above the equator and the second end point is below the equator.
    \item$\bt$ if the first end point of the chord is below the equator and the second end point is above the equator.
    \item $\bb$ if both end points of the chord are below the equator.
\end{itemize}

\begin{figure}[ht]
    \centering
    \vspace{0.5cm}
    \includegraphics[width=0.6\linewidth]{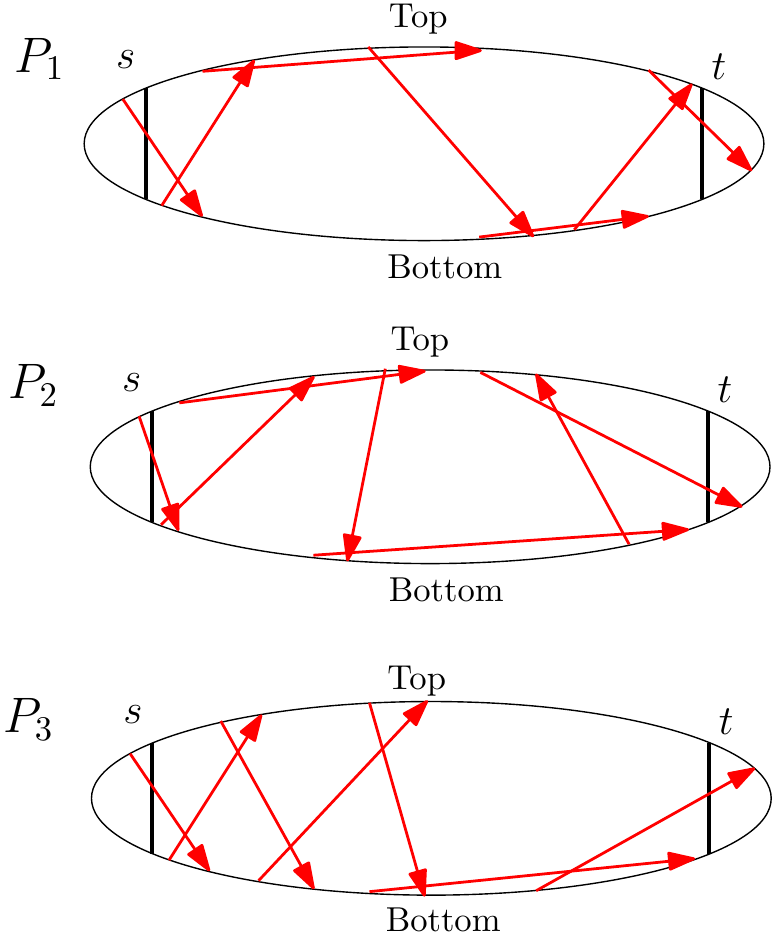}
    \vspace{0.5cm}
    \caption{Three $s$--$t$ shortest paths in a circle graph}
    \label{fig:ellipse}
\end{figure}

Here, T stands for top and B stands for bottom. The label of an $s$--$t$ shortest path is simply a concatenation of the labels of its vertices from $s$ to $t$. In~\autoref{fig:ellipse}, the labels are
\begin{align*}
    P_1 &= (\ssb) (\bt) (\ttt) (\tb) (\bb) (\bt) (\tst)\\
    P_2 &= (\ssb) (\bt) (\ttt) (\tb) (\bb) (\bt) (\tst)\\
    P_3 &= (\ssb) (\bt) (\tb) (\bt) (\tb) (\bb) (\bst)
\end{align*}
We are now set to prove our main theorem.
\begin{theorem} \label{thm:circle}
Two $s$--$t$ shortest paths in a circle graph are reconfigurable if and only they have the same label. Furthermore, the reconfiguration sequence can be obtained in linear time (if it exists). 
\end{theorem}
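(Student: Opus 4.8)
The plan is to mirror the two-direction argument already carried out for permutation graphs in \autoref{lm:perm}, upgrading each step to handle the four-symbol alphabet $\{\ttt,\tb,\bt,\bb\}$ in place of the two-symbol alphabet $\{$L,R$\}$. For the forward ($\Rightarrow$) direction I would show that a single reconfiguration step does not change the label of any vertex it does not touch, and also does not change the label of the vertex it does touch. The first part is immediate since labels are assigned purely from the orientation of a chord (which is intrinsic by \autoref{cl:chords}) together with the fixed equator, so a vertex that stays put keeps its label. For the vertex $v_i$ that gets replaced by $w_i$: here the analogue of \autoref{lem:leftright} for circle graphs is the workhorse. I would first prove a lemma stating that consecutive chords on a shortest path cannot have ``compatible'' end-point configurations — concretely, if $(v_{i-1},v_i)$ forces the intersection point on $v_i$ to be at a certain position relative to the equator, then $(v_i,v_{i+1})$ is constrained, exactly as L-type forces R-type in the permutation case. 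Since $v_{i-1}$ and $v_{i+1}$ are unchanged by the reconfiguration step, the label of $w_i$ is pinned down to equal that of $v_i$. Iterating over the reconfiguration sequence then gives that reconfigurable paths share a label.

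For the reverse ($\Leftarrow$) direction I would induct on $\ell = d(s,t)-1$, exactly as in \autoref{lm:perm}. Given two paths $P_1=(s,v_2,\ldots,t)$ and $P_2=(s,u_2,\ldots,t)$ with the same label, the base case $\ell = 3$ is handled by checking that two chords with the same label that both intersect $s$ and both intersect $t$ can be swapped directly (their common label forces their relevant end points into the same region, which gives the needed 4-cycle $s\,v_2\,t\,u_2$). For the inductive step: if $v_2 = u_2$ we pass to the $s'$-$t$ subpaths with $s'=v_2$ and apply induction. If $v_2 \neq u_2$, I would show $(v_2,u_3)$ is an edge, so that $P_2$ can be reconfigured to $P_2' = (s,v_2,u_3,\ldots,t)$, after which setting $s' = v_2$ reduces to two shorter paths — and crucially these shorter paths still have the same label (the prefix symbol we stripped was equal on both sides by hypothesis). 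Establishing that $(v_2,u_3)$ is an edge is the circle-graph analogue of the chord-intersection computation in the permutation proof: using that $(s,v_2)$ and $(s,u_2)$ carry the same label, that $(u_2,u_3)$ is forced by \autoref{cl:chords} plus the consecutive-chord lemma, and that $s$ and $t$ sit on opposite sides of the equator, one argues by a planarity/betweenness argument on the circle that the chord $v_2$ must cross the chord $u_3$.

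The main obstacle I anticipate is the $(v_2,u_3)$-is-an-edge step. In the permutation-graph proof this came down to a clean chain of inequalities among the $\sigma$-values; on the circle there is no single linear order, so the argument becomes a case analysis on where the four relevant end points ($s$, $t$, and the end points of $v_2$ and $u_3$) lie on the circle relative to the equator, driven by the shared label. I would want a single geometric lemma — something like ``if chord $a$ and chord $b$ are consecutive-compatible with a common predecessor chord $c$ and a common successor region, then $a$ and $b$ intersect'' — phrased once and reused, rather than re-deriving the crossing each time; \autoref{fig:circle2}-style arguments (a chord splits the circle into two arcs, and a shortest-path parity argument forces which arc each relevant point lies in) are the right tool. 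A secondary subtlety is the two regimes created by \autoref{fact:circle}: if $G$ already admits an equator we simply invoke \autoref{lm:perm}, so the real content is the no-equator case where we manufacture an equator through $s$ and $t$; I would state explicitly that all the lemmas above are proved in that regime, and note that $d(s,t)\le 2$ is a trivial base case handled separately.
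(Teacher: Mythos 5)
Your proposal is correct and follows essentially the same route as the paper's proof: the $\Rightarrow$ direction argues that the labels of the untouched neighbouring chords pin down the label of the replaced chord (so a reconfiguration step preserves the path label), and the $\Leftarrow$ direction performs the same one-swap-at-a-time induction as \autoref{lm:perm}, establishing the needed chord crossings via \autoref{fig:circle2}-style arguments and using the manufactured equator through $s$ and $t$ when \autoref{fact:circle} does not apply. The paper's own write-up is no more detailed on the crossing step than your sketch, so there is nothing substantively different to flag.
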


\begin{proof}
\underline{$\Rightarrow$ direction:} We will see that if two $s$--$t$ shortest paths can be reconfigured, they have the same label. Note that the label of each vertex matches up with the label of the vertex before it and the label of the vertex after it. For example, in $P_3$, the fourth and sixth vertices are $(\bt)$ and $(\bb)$. The first end point of the fifth vertex must match the second end point of the fourth vertex, and the second end point of the fifth vertex must match the first end point of the sixth vertex. So the label of the fifth vertex must be $(\tb)$. Thus, a reconfiguration step that changes the fifth vertex of $P_3$ can only change it to another vertex whose label is also $(\tb)$. Hence, reconfiguration does not change the label of a path. (In the example, $P_3$ cannot be reconfigured to $P_1$ or $P_2$.)

\underline{$\Leftarrow$ direction:} We will see that if two $s$--$t$ shortest paths have the same label, they can be reconfigured. In the example shown, let $v_1$ and $u_1$ be the first vertices of $P_1$ and $P_2$, respectively. Both are labelled $(\ssb)$. Suppose the second end point of the chord $u_1$ lies before (to the left of) the second end point of the chord $v_1$ on the bottom of the circle. Then, the chord $u_2$ must intersect the chord $v_1$. This means the $(s,u_1,u_2)$ subpath can be changed to $(s,v_1,u_2)$ by reconfiguring $u_1$ to $v_1$. Next, we can show that either the chord $u_3$ must intersect the chord $v_2$ or the chord $v_3$ must intersect the chord $u_2$, and similar to the proof of~\autoref{lm:perm}, we will eventually reconfigure $P_1$ and $P_2$. Also, it is easy to see that the reconfiguration sequence thus obtained is of size $O(n)$.
\end{proof}

\subsection{\texorpdfstring{$\spr$}{SPR} for Bridged Graphs}
\label{sec:bridged}
We begin this section with some definitions. Let $G$ be a graph, and $u,v$ be two vertices of $G$. Their~\emph{interval} $I(u,v)$ is the set of all  vertices of $G$ that lie on at least one shortest $u$-$v$ path. More formally, $$I(u,v) = \{w \in V(G) \colon d(u,w) + d(w,v) = d(u,v)\}.$$

A subset of vertices $H$ of $V(G)$ is called~\emph{convex} if for each pair of vertices $(u, v) \in H\times H$, their interval $I(u,v) \subseteq H$.

\begin{definition}
A graph $G$ is called a~\emph{bridged} graph if the neighbourhood of every convex set in $G$ is also convex.
\end{definition}

It is known that bridged graphs are precisely the graphs in which all isometric cycles have length three~\cite{SoltanC83, FarberJ87}. In particular, all chordal graphs are bridged. Bonsma~\cite{Bonsma13} showed that $\spr$ can be solved in polynomial time for chordal graphs. We extend Bonsma's result to bridged graphs.

Let us now look at some properties of bridged graphs. A graph is called~\emph{weakly modular} if it satisfies the following two conditions~\cite{BandeltC96, Chepoi89}.

\begin{itemize}
    \item \textbf{The quadrangle condition:} $\forall u, v, w, z \in V(G)$ with $k\coloneqq d(u,v) = d(u,w)$, $d(u,z) = k+1$ and $vz, wz \in E(G)$, $\exists x \in V(G)$ such that $d(u,x) = k-1$ and $xv, xw \in E(G)$.
    \item \textbf{The triangle condition:} $\forall u,v, w \in V(G)$ with $k \coloneqq d(u,v) = d(u,w)$ and $vw \in E(G)$, $\exists x\in V(G)$ such that $d(u,x) = k-1$ and $xv, xw \in E(G)$.
\end{itemize}

Bridged graphs are weakly modular graphs with no induced cycle of length four or five~\cite{Chepoi89}. We essentially present a polynomial-time algorithm for $\spr$ for weakly modular graphs. Our algorithm recursively uses the triangle condition from the above definition. For general graphs, such a recursion would make the running time exponential. We use a suitable data structure to make the running time to polynomial. 

Before describing the algorithm and the proof lets us define the following notation. Let $P_{s,w,t}$ denotes a shortest path between $s$ and $t$ which is going trough the vertex $w$. 

\begin{algorithm}
\caption{$\spr$ for weakly modular graphs}
\label{algo:weakly-modular}
\textbf{Input:} $G$, paths $P_{s,u_\ell,t}$ and $P_{s,v_\ell, t}$,
\begin{algorithmic}[1]
\If{$w_{\ell-1} \in P_{s,u_\ell,t}$}
\State \textbf{Output} $u_l \rightarrow v_l$, $\spr(G, P_{s, w_{\ell -1}, v_\ell}, P_{s, v_{\ell -1}, v_\ell})$ 
\EndIf
\If{$w_{\ell-1} \in P_{s,v_\ell,t}$}
\State \textbf{Output} $\spr(G, P_{s, u_{\ell -1}, u_\ell}, P_{s, w_{\ell -1}, u_\ell})$, $u_l \rightarrow v_l$
\EndIf
\If{$w_{\ell-1} \notin P_{s,u_\ell,t}, P_{s,v_\ell,t}$}
\State \textbf{Output} $\spr(G, P_{s, u_{\ell -1}, u_\ell}, P_{s, w_{\ell -1}, u_\ell})$, $u_l \rightarrow v_l$, $\spr(G, P_{s, w_{\ell -1}, v_\ell}, P_{s, v_{\ell -1}, v_\ell})$
\EndIf
\end{algorithmic}
\end{algorithm}

\begin{lemma}
\label{lem:weakly-modular}
\autoref{algo:weakly-modular} solves $\spr$ on weakly modular graphs in $O(2^\ell n)$ time, where $\ell$ is the distance between $s$ and $t$.
\end{lemma}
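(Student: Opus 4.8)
The plan is to argue correctness and running time separately. For correctness, I would first establish the structural fact that underlies the whole recursion: if $P_{s,u_\ell,t}$ and $P_{s,v_\ell,t}$ are two $s$--$t$ shortest paths that are ``close'' at the end (i.e.\ $u_\ell$ and $v_\ell$ are the respective penultimate vertices, both adjacent to $t$, with $d(s,u_\ell)=d(s,v_\ell)=\ell-1$), then the triangle condition applied to $u=s$, $v=u_\ell$, $w=v_\ell$ yields a vertex $w_{\ell-1}$ with $d(s,w_{\ell-1})=\ell-2$ and $w_{\ell-1}u_\ell, w_{\ell-1}v_\ell\in E(G)$. Fix any shortest $s$--$w_{\ell-1}$ path; appending $u_\ell,t$ or $v_\ell,t$ gives $s$--$t$ shortest paths $P_{s,w_{\ell-1},u_\ell}$ (meaning: through $w_{\ell-1}$, then $u_\ell$, then $t$) and $P_{s,w_{\ell-1},v_\ell}$. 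This intermediate path shares its last-but-one vertex with nobody problematic, so the single step $u_\ell\to v_\ell$ is legal once we have reached $P_{s,w_{\ell-1},u_\ell}$, and from $P_{s,w_{\ell-1},v_\ell}$ we may continue. The recursion then only ever needs to reconfigure pairs of shortest paths to the \emph{shorter} target $u_\ell$ (resp.\ $v_\ell$), which is the inductive hypothesis on $\ell-1$; the three cases in \autoref{algo:weakly-modular} simply record whether $w_{\ell-1}$ already lies on one (or neither) of the two input paths, so that we avoid redundant reconfiguration on a prefix that is already correct. Correctness follows by induction on $\ell$, the base case $\ell\le 2$ being trivial (paths of length $\le 2$ between $s$ and $t$ differ in at most one internal vertex, itself adjacent to both $s$ and $t$).

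I would then bound the running time. The recursion tree has depth $\ell$, and each internal node spawns at most two recursive calls (one toward $u_\ell$, one toward $v_\ell$), giving at most $O(2^\ell)$ nodes. At each node the non-recursive work is: applying the triangle condition to locate $w_{\ell-1}$, and testing membership $w_{\ell-1}\in P_{s,u_\ell,t}$, $w_{\ell-1}\in P_{s,v_\ell,t}$. Finding $w_{\ell-1}$ is a scan over the common neighbourhood intersected with the BFS layer $\ell-2$, which is $O(n)$ after an $O(n+m)$ BFS preprocessing from $s$; the membership tests are $O(n)$ as well if each current path is stored as an array together with a Boolean incidence vector. Hence $O(n)$ per node and $O(2^\ell n)$ overall, matching the claim. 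I would also note that the output reconfiguration sequence has length $O(2^\ell)$, which is consistent since the algorithm performs one explicit swap $u_\ell\to v_\ell$ per node.

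The main obstacle I anticipate is not the counting but pinning down exactly what the recursive subproblems are and why they are legitimate $\spr$ instances. Two subtleties need care: first, the ``path'' $P_{s,w_{\ell-1},u_\ell}$ is only defined up to the choice of a shortest $s$--$w_{\ell-1}$ path, so one must check that the argument is independent of this choice (it is, because the recursion will itself reconfigure any two such choices); second, when $w_{\ell-1}$ happens to lie on $P_{s,u_\ell,t}$, one must verify that the prefix of $P_{s,u_\ell,t}$ up to $w_{\ell-1}$ is genuinely a shortest $s$--$w_{\ell-1}$ path (true, since subpaths of shortest paths are shortest) so that no reconfiguration is needed on that side, which is precisely why the first \textbf{if} branch of \autoref{algo:weakly-modular} omits the left recursive call. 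A final point worth a sentence: we must confirm that the intermediate configurations produced by splicing the two halves of the recursion together are each $s$--$t$ shortest paths and that consecutive ones differ in exactly one vertex — this is where the fact that $w_{\ell-1}$ is adjacent to \emph{both} $u_\ell$ and $v_\ell$ is used, and it is the only place the weakly modular (triangle) condition enters.
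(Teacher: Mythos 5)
Your proposal follows essentially the same route as the paper: use the weak modularity condition to obtain a common neighbour $w_{\ell-1}$ of $u_\ell$ and $v_\ell$ one BFS layer closer to $s$, perform the single swap $u_\ell\to v_\ell$ through it, recurse on the shorter prefix instances according to whether $w_{\ell-1}$ already lies on one, the other, or neither input path (the three cases of \autoref{algo:weakly-modular}), and bound the cost by at most $2^\ell$ recursion nodes with $O(n)$ work each. Your write-up is if anything more careful than the paper's terse argument; the only shared imprecision is attributing the existence of $w_{\ell-1}$ to the triangle condition, whereas when $u_\ell v_\ell\notin E(G)$ it is really the quadrangle condition (with apex $t$) that supplies it.
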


\begin{proof}
From the triangle condition of weakly modular graphs, we know that for a given $P_{s,u_\ell,t}, P_{s,v_\ell, t}, s, t$ there exists a vertex $w$ such that both the edges $u_\ell w$ and $v_\ell w$ are preset in the graph. 
Consider a solution for the $\spr$. In that solution we move $u_\ell \rightarrow v_\ell$ at some step. Then, what remains is a solution to $\spr$ on paths whose length is reduced by $1$. Depending on the fact whether $w$ is either $u_{\ell-1}$ or $v_{\ell-1}$ or $w \notin P_{s,u_\ell,t}, P_{s,v_\ell, t}$ we have three subproblems. 
That is precisely what~\autoref{algo:weakly-modular} computes.

At every step searching for a $w$ requires $O(n)$ time. 
Number of the subproblem in the recursion is at most $2^\ell$. Hence total running time is $O(2^\ell n)$. 
\end{proof}

The running time of~\autoref{algo:weakly-modular} is clearly exponential when $\ell=\Theta(n)$. This can be improved. Consider the following data structure.

\begin{definition}
$\look(u_i, v_i)$
\begin{itemize}
    \item Takes input $u_i$, $v_i$ from the same $\bfs$ layer computed from $s$.
    \item Outputs $w_{i-1}$ such that both $w_{i-1}u_i$ and $w_{i-1},v_i \in E(G)$. 
\end{itemize}
\end{definition}

We construct $\look(u_i, v_i)$ by searching for common parents for every pair of vertices in a $\bfs$ layer. Implementing $\look(u_i, v_i)$ takes $O(n^3)$ space. Finding a $w$ at each step using this data structure requires only a constant amount of time. Finally, the $\bfs$ naturally partitions the vertices of $G$ into layers, reducing the running time to $O(n^2)$. (This lookup table method is essentially a dynamic program.) We conclude this section with the following theorem and its corollary.

\begin{figure}
        \centering
        \vspace{0.5cm}
        \includegraphics[width=0.33\linewidth, page=2]{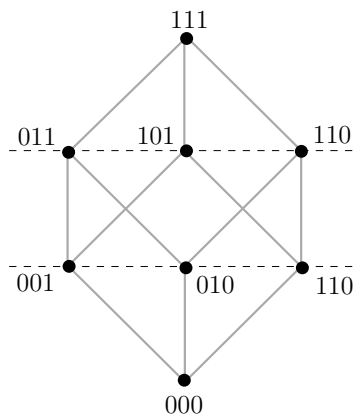}
        \vspace{0.5cm}
        \caption{Boolean hypercube for $d=3$}
        \label{fig:hypercube}
\end{figure}

\begin{theorem} $\spr$ can be solved in $O(n^2)$ time for weakly modular graphs.
\end{theorem}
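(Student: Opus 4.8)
The plan is to assemble the theorem from the ingredients already developed in this subsection: \autoref{lem:weakly-modular} shows that \autoref{algo:weakly-modular} correctly solves $\spr$ on weakly modular graphs, and the discussion after it explains how to replace the $O(n)$ per-step search for the common neighbour $w$ by a constant-time $\look$ query. So the theorem follows once I argue (a) correctness is preserved, (b) the recursion, when memoised against the $\bfs$ layering, visits only polynomially many distinct subproblems, and (c) bridged graphs (and chordal graphs) are weakly modular, so the statement specialises to the classes we care about.

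First I would recall from \autoref{lem:weakly-modular} that each recursive call is indexed by a pair $(u_i,v_i)$ of vertices lying in the same $\bfs$ layer together with the common endpoint to which their partial paths must be completed; the triangle condition guarantees the required common neighbour $w_{i-1}$ exists. The key observation for the running-time improvement is that, although the naive recursion tree has up to $2^\ell$ leaves, the \emph{set} of distinct subproblems is small: a subproblem is determined by a layer index $i$ and a pair of vertices in layer $i$ (plus the fixed target endpoint of that branch), so there are only $O(n^2)$ of them in total across all layers. Precomputing $\look(u_i,v_i)$ for every same-layer pair takes $O(n^3)$ space and makes each lookup $O(1)$; then a standard memoised evaluation (equivalently, a bottom-up dynamic program over layers, from the layer of $t$ down to that of $s$) touches each of the $O(n^2)$ subproblems once and does $O(1)$ work per subproblem, giving the claimed $O(n^2)$ time. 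I would also note that the $\bfs$ itself and the deletion of edges not on any $s$--$t$ shortest path cost only $O(n^2)$, so they do not dominate.

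Next I would close the loop to the named graph classes. Bridged graphs are weakly modular (they are exactly the weakly modular graphs with no induced $C_4$ or $C_5$, as recalled in \autoref{sec:bridged}), and chordal graphs are a subclass of bridged graphs; hence the $O(n^2)$ algorithm applies verbatim, recovering and extending Bonsma's chordal-graph result. This is where I would state the corollary (for bridged graphs, and in particular chordal graphs) immediately after the theorem.

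The main obstacle I anticipate is not the running-time bookkeeping but making precise \emph{why memoisation is sound} — i.e. that two occurrences of the ``same'' pair $(u_i,v_i)$ in different branches of the recursion really do represent the same reconfiguration subproblem, so that a cached answer can be reused. This needs the observation that the feasibility of reconfiguring two shortest $s$--$v_i$ subpaths depends only on their endpoints $u_i,v_i$ and on $v_i$ (the shared terminus forced by the branch), not on how the recursion arrived there; this in turn rests on \autoref{lem:weakly-modular}'s structural claim that after performing the move $u_\ell\to v_\ell$ what remains is an independent $\spr$ instance on strictly shorter paths. Once that independence is spelled out, the polynomial bound and the extension to bridged and chordal graphs are routine.
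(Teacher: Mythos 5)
Your proposal is correct and takes essentially the same route as the paper: it turns the recursion of \autoref{algo:weakly-modular} into a dynamic program by precomputing $\look(u_i,v_i)$ for every same-layer pair (constant-time queries) and memoising over the $O(n^2)$ pairs induced by the $\bfs$ layering, which is exactly the lookup-table argument the paper gives. If anything, you are more explicit than the paper about why a cached subproblem depends only on the same-layer pair (so memoisation is sound), a point the paper leaves implicit.
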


\begin{corollary} $\spr$ can be solved in $O(n^2)$ time for bridged graphs.
\end{corollary}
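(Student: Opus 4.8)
The plan is to obtain the corollary as an immediate specialisation of the preceding theorem, using the inclusion of bridged graphs in the class of weakly modular graphs. The excerpt already records Chepoi's characterisation that bridged graphs are exactly the weakly modular graphs containing no induced cycle of length four or five; in particular, every bridged graph is weakly modular. Hence the $O(n^2)$-time algorithm established for weakly modular graphs applies verbatim when the input graph is bridged.

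Concretely, I would first take the input graph $G$ to be bridged, invoke the Chepoi characterisation to conclude that $G$ is weakly modular, and then run the lookup-table (dynamic-programming) version of \autoref{algo:weakly-modular} on $G$. Correctness and the running-time bound carry over without change: the only structural property of the input used by the algorithm and by the $\look$ data structure is the triangle condition, which holds in every weakly modular graph and therefore in $G$; the $\look$ table is built in $O(n^3)$ space with constant-time queries, and the $\bfs$ layering reduces the number of recursive subproblems to $O(n^2)$, giving the claimed $O(n^2)$ time.

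There is essentially no obstacle, since the statement is a direct corollary; the only point meriting a sentence of care is that the algorithm never needs to test membership in the bridged (or weakly modular) class — it merely assumes that a common-parent vertex witnessing the triangle condition exists for each pair of vertices in a $\bfs$ layer, which is guaranteed on the promised input. If a fully self-contained statement were wanted, one could additionally remark that bridged graphs are recognisable in polynomial time via the ``all isometric cycles have length three'' characterisation cited above, but this is not required for the corollary as stated.
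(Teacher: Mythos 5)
Your proposal is correct and matches the paper's reasoning: the corollary is obtained exactly as you describe, by noting that bridged graphs are weakly modular (Chepoi's characterisation quoted in the same section) and then applying the $O(n^2)$-time algorithm for weakly modular graphs verbatim. No further argument is needed.
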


\subsection{\texorpdfstring{$\spr$}{SPR} for Boolean Hypercubes}
\label{sec:boolean}

\begin{definition}
A d-dimensional Boolean hypercube is a graph with vertex set $\{0,1\}^d$ such that two vertices are adjacent if and only if their corresponding bit strings differ in exactly one of the $d$ coordinates (\autoref{fig:hypercube}).
\end{definition}

As an input to the $\spr$ problem, we are given two $s$--$t$ shortest paths $P_{1}$ and $P_{2}$ of length $\ell$ each in a $d$-dimensional Boolean hypercube. Let $\oplus$ denote the bit-wise $\xor$ operation and $\ones(\ell)$ denote the positions of the ones in the bit string $\ell$. For example, for $d=5$,
\begin{align*}
s &= 00101\\
t &= 10011\\
s\oplus t &= 10110\\
\ones(s \oplus t) &= \{1,3,4\}.
\end{align*}
Given $\ones(s \oplus t)$ in this example, it is easy to see that every $s$--$t$ shortest path has three edges, one edge each dedicated to changing the bit in the first, third and fourth positions. However, the order in which these changes are made could be different. There are $3!=6$ ways to do this: $(134)$, $(143)$, $(314)$, $(341)$, $(413)$, $(431)$. In other words, there are six $s$--$t$ shortest paths in this example (see~\autoref{fig:hybercube2}). Thus, we will represent all $s$--$t$ shortest paths as permutations for the rest of this proof.

\begin{figure}[h]
        \centering
        \vspace{0.5cm}
        \includegraphics[width=0.55\linewidth]{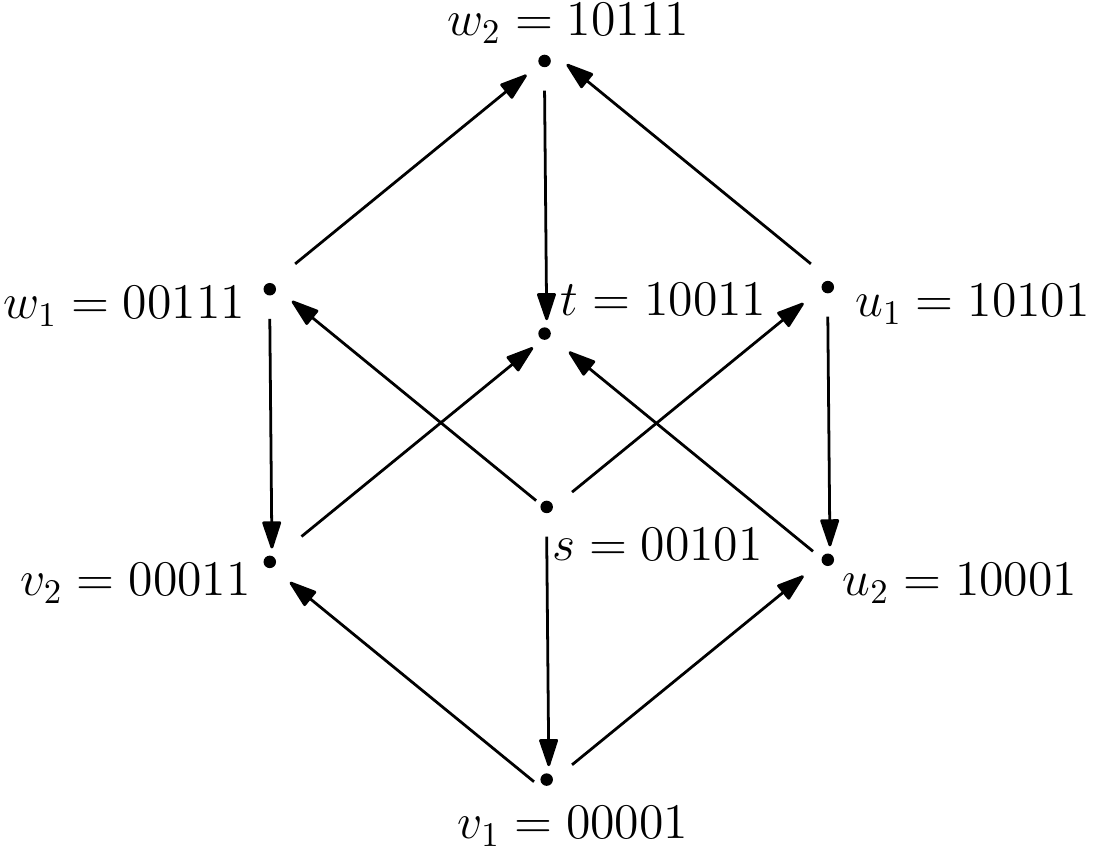}
        \vspace{0.5cm}
        \caption{There are $6$ possible ways to go from $s$ to $t$}
        \label{fig:hybercube2}
\end{figure}

\begin{observation} \label{obs:boohypcub}
Let $s$ and $t$ be two vertices of a Boolean hypercube such that $d(s,t)=r$. Then, two permutations $P_1$ and $P_2$ can be reconfigured in a single reconfiguration step if and only if there exists a $j\in[r-1]$ such that
\begin{align*}
P_{1}=(i_{1}i_{2}i_{3}\ldots i_{j-1}i_{j}i_{j + 1}i_{j+2}\ldots i_{r-1}i_{r});\\
P_{2}=(i_{1}i_{2}i_{3}\ldots i_{j-1}i_{j + 1}i_{j}i_{j+2}\ldots i_{r-1}i_{r}).
\end{align*}
That is, $P_1$ and $P_2$ differ only in the positions $j$ and $j+1$.
\end{observation}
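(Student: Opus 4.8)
The plan is to reason directly about what a single reconfiguration step can do to a shortest path, once we identify shortest paths with permutations of $\ones(s\oplus t)$ as set up in the paragraph before the statement. Recall that a permutation $\pi=(i_1 i_2\ldots i_r)$ encodes the path whose $m$-th vertex is $s$ with the bits in positions $i_1,\ldots,i_m$ flipped; consecutive vertices of the path differ exactly in the coordinate named by the corresponding entry of $\pi$. So the $m$-th internal vertex of $P_1$ is determined by the \emph{set} $\{i_1,\ldots,i_m\}$, not by the order within it.

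First I would prove the ``if'' direction, which is easy: if $P_1$ and $P_2$ agree except that entries $j$ and $j+1$ are swapped, then for every $m\ne j$ the prefix sets $\{i_1,\ldots,i_m\}$ coincide for the two permutations, so all internal vertices of the two paths agree except possibly the $j$-th one; hence the two paths differ in at most one vertex and form a valid single reconfiguration step (they are distinct because $i_j\ne i_{j+1}$, so the $j$-th vertices genuinely differ). For the ``only if'' direction, suppose $P_1$ and $P_2$ differ in exactly one vertex, say the $j$-th internal vertex, with $j\in[r-1]$ (it cannot be the $0$-th or $r$-th, which are $s$ and $t$). Agreement of the $(j-1)$-st vertex gives $\{i_1,\ldots,i_{j-1}\}$ equal for both, and agreement of the $(j+1)$-st vertex gives $\{i_1,\ldots,i_{j+1}\}$ equal for both; subtracting, the two-element sets $\{i_j,i_{j+1}\}$ coincide. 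Combined with the fact that all entries before position $j-1$ and after position $j+1$ are forced to be identical (again because the corresponding prefix sets agree and each is obtained from its predecessor by adding a single element), the only freedom left is the order of the pair in positions $j,j+1$; since the $j$-th vertices differ, $P_2$ must be $P_1$ with exactly those two entries transposed.

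I do not expect a genuine obstacle here; the main thing to get right is the bookkeeping that translates ``two vertices on the paths are equal'' into ``the two prefix sets are equal,'' and then that ``all prefix sets equal except possibly one'' forces the permutations to differ by a single adjacent transposition. One minor point to state carefully is the boundary: a reconfiguration step in $\spr$ changes exactly one \emph{internal} vertex, and internal vertices correspond to prefixes of length $1,\ldots,r-1$, which is why the swapped position index $j$ ranges over $[r-1]$. With that settled, the equivalence in the statement follows immediately.
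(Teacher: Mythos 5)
Your proposal is correct: the ``if'' and ``only if'' arguments via prefix sets (the $m$-th internal vertex being determined by $\{i_1,\ldots,i_m\}$) are exactly the reasoning implicit in the paper's identification of $s$--$t$ shortest paths with permutations of $\ones(s\oplus t)$, and the paper itself states this as an observation without spelling out a proof. Your handling of the boundary (only internal vertices, so $j\in[r-1]$) and the extraction of the adjacent transposition from the equality of the length-$(j-1)$ and length-$(j+1)$ prefix sets is sound and complete.
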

\begin{algorithm}
\caption{$\spr$ for Boolean hypercubes with $r=d(s,t)$}
\label{algo:hypercube}
\textbf{Input:} $\text{Permutations } P_1, P_2\in\mathcal{S}_r$
\begin{algorithmic}[1]
\While{$\exists\ i\in\{1,2,\ldots,r-1\}$ such that $P_2^{-1}(P_1[i]) > P_2^{-1}(P_1[i+1])$}
\State Swap $P_1[i]$ and $P_1[i+1]$ in $P_1$
\EndWhile
\end{algorithmic}
\end{algorithm}

\begin{theorem} \label{thm:hypcub}
\autoref{algo:hypercube} reconfigures two given $s$--$t$ shortest paths (or permutations) $P_{1}$ and $P_{2}$ in a Boolean hypercube in the minimum number of reconfiguration steps.
\end{theorem}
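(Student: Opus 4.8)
The plan is to recognize that Algorithm~\ref{algo:hypercube} is nothing but a sequence of adjacent transpositions that turns the permutation $P_1$ into $P_2$, and that each such swap is a legal reconfiguration step by Observation~\ref{obs:boohypcub}. So there are two things to establish: correctness (the algorithm terminates with $P_1=P_2$, and every intermediate step is a valid reconfiguration of $s$--$t$ shortest paths), and optimality (the number of swaps performed equals the length of the shortest reconfiguration sequence). First I would argue correctness. Reading $P_1$ through the ``lens'' of $P_2$, consider the sequence $\bigl(P_2^{-1}(P_1[1]),P_2^{-1}(P_1[2]),\ldots,P_2^{-1}(P_1[r])\bigr)$; this is a permutation of $\{1,\ldots,r\}$, and it is the identity precisely when $P_1=P_2$. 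The while-loop looks for an adjacent inversion in this sequence and removes it by swapping $P_1[i]$ and $P_1[i+1]$; this is exactly bubble sort, so it terminates after finitely many steps with $P_1=P_2$. Each swap exchanges two contiguous entries $P_1[i],P_1[i+1]$, which by Observation~\ref{obs:boohypcub} is a single valid reconfiguration step between two $s$--$t$ shortest paths, so the whole run is a genuine reconfiguration sequence from the original $P_1$ to $P_2$.

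Next I would handle optimality, which is the crux. The key observation is that by Observation~\ref{obs:boohypcub}, \emph{every} reconfiguration step — not just the ones the algorithm picks — is an adjacent transposition applied to the current permutation. Hence any reconfiguration sequence from $P_1$ to $P_2$ is a way of writing the permutation $P_2\circ P_1^{-1}$ (equivalently, of sorting the sequence $P_2^{-1}\circ P_1$) as a product of adjacent transpositions, and its length is the number of such transpositions used. It is a classical fact that the minimum number of adjacent transpositions needed to sort a permutation equals its number of inversions — the Kendall's Tau distance referenced earlier in the paper. So the optimal reconfiguration length is exactly $\mathrm{inv}(P_2^{-1}\circ P_1)$. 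It remains to check that Algorithm~\ref{algo:hypercube} performs exactly this many swaps: each swap of an adjacent inversion in the sequence $P_2^{-1}\circ P_1$ decreases its inversion count by exactly one (an adjacent swap changes the relative order of exactly one pair), and the loop runs until there are no inversions left, so the total number of swaps is precisely the initial inversion count. Combining, the algorithm uses the optimal number of steps.

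The main obstacle I anticipate is not any single hard lemma but making the two-sided correspondence fully rigorous: one must be careful that Observation~\ref{obs:boohypcub} is genuinely an ``if and only if'' characterization of single reconfiguration steps (so that the lower bound argument — every step is an adjacent transposition — is valid, not just the upper bound), and that an adjacent transposition always changes the inversion count by exactly $\pm 1$, which is what couples ``number of swaps performed'' to ``number of inversions removed.'' Once those two facts are in hand, correctness follows from the bubble-sort view and optimality follows from the inversion-count characterization of Kendall's Tau distance; I would also remark that the running time is $O(r^2)$, matching the worst-case number of inversions.
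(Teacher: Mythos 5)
Your proposal is correct and follows essentially the same route as the paper: view each reconfiguration step as an adjacent transposition (via Observation~\ref{obs:boohypcub}), lower-bound the sequence length by the number of inversions between $P_1$ and $P_2$ (Kendall's Tau distance), and note that the bubble-sort algorithm removes exactly one inversion per swap, hence is optimal. Your write-up just makes the ``every step removes at most one inversion'' lower bound and the termination argument slightly more explicit than the paper does.
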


\begin{proof} It is easy to see that~\autoref{algo:hypercube} reconfigures $P_{1}$ to $P_{2}$ by changing~\emph{one vertex} of $P_1$ at each reconfiguration step (\autoref{obs:boohypcub}) until $P_1=P_2$.

For each $\ell_{1},\ell_{2}\in[r]$, if the relative order of $\ell_1$ and $\ell_2$ is the same in both $P_1$ and $P_2$, then note that~\autoref{algo:hypercube} does not change their relative order. Otherwise,~\autoref{algo:hypercube} performs an~\emph{inversion}: it swaps (or exchanges) them, inverting their relative order in $P_1$. As every reconfiguration step corrects at most one such inversion (\autoref{obs:boohypcub}), the number of reconfiguration steps required to reconfigure $P_{1}$ to $P_{2}$ is at least the number of inversions. The total number of inversions between $P_1$ and $P_2$ is in fact called Kendall's Tau distance (or bubble sort distance), a well-known measure of dissimilarity between permutations~\cite{Sedgewick}. This proves the optimality of~\autoref{algo:hypercube}.
\end{proof}

\subsection{\texorpdfstring{$\spr$}{SPR} for Circular-Arc Graphs} \label{sec:circarc}
\begin{definition} A circular-arc graph is the intersection graph of a set of arcs on a circle (\autoref{fig:circulararc}).
\end{definition}

\begin{figure}[ht]
    \centering
    \vspace{0.5cm}
    \includegraphics[width=0.5\linewidth]{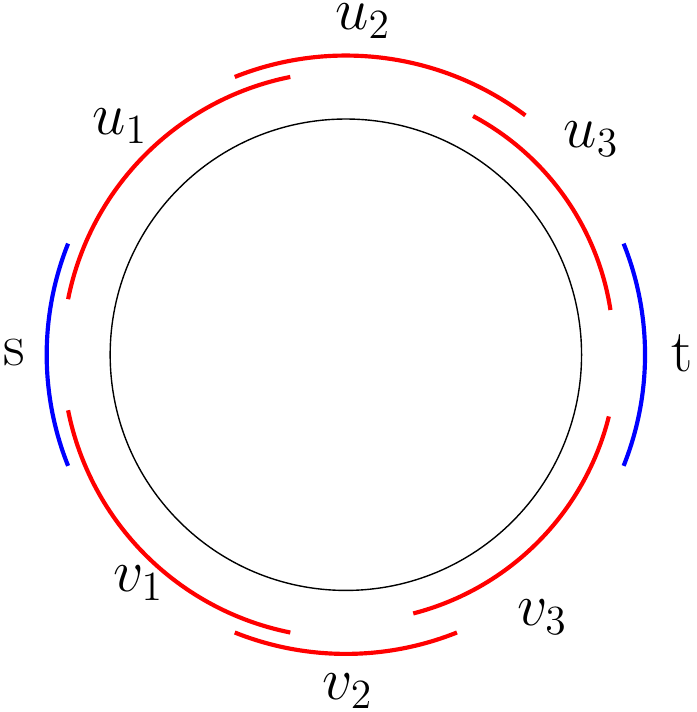}
    \vspace{0.5cm}
    \caption{A circular-arc graph with $d(s,t)=4$}
    \label{fig:circulararc}
\end{figure}

Given a graph, its circular-arc representation can be constructed in linear time, if it exists~\cite{tucker,kaplan}.

\begin{theorem}
$\spr$ can be solved in polynomial time for circular-arc graphs.
\end{theorem}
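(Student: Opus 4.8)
The plan is to reduce $\spr$ on a circular-arc graph $G$ to a bounded number of $\spr$ instances on \emph{interval} graphs, for which a characterisation akin to the permutation-graph case of \autoref{lm:perm} applies. First I would fix the BFS layering rooted at $s$: since $G$ is circular-arc, $d(s,t)=r$ is computed in linear time, and we only care about arcs lying on some $s$--$t$ shortest path, so we may delete all other arcs and all edges internal to a BFS layer. The key structural observation I would establish is that an $s$--$t$ shortest path is very rigid: consecutive arcs on the path must ``march around'' the circle, so on a shortest path the arcs cannot wrap around the circle more than once (otherwise one could shortcut, contradicting shortest-ness, exactly as in the orientation argument of \autoref{cl:chords}). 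Hence every $s$--$t$ shortest path stays within a circular window, and there are only $O(n)$ ``candidate'' points on the circle where a shortest path can start winding; guessing the starting arc's left endpoint and the point where the path crosses a fixed reference radius gives $O(n^2)$ (in fact $O(n)$ after more care) combinatorially distinct ``sectors''.

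The main step is then: once we cut the circle at a point not covered by the relevant arcs of a given shortest path, the induced structure on the arcs used by that path is an interval representation, and the reconfigurability question restricted to paths living in that sector becomes the interval-graph (hence permutation-graph-flavoured) case, which by the argument behind \autoref{lem:leftright} and \autoref{lm:perm} is decided in linear time by comparing a short ``type word'' of the two paths (the sequence of left/right endpoint moves, together with which sector/winding class the path belongs to). I would show that two $s$--$t$ shortest paths are reconfigurable in $G$ if and only if they lie in the same winding class and have the same type word, by noting that any single reconfiguration step (replacing one vertex $v_i$ by $v_i'$ forming a $4$-cycle $v_{i-1}v_iv_{i+1}v_i'$) cannot change the winding class (the endpoints $v_{i-1},v_{i+1}$ pin down the crossing point up to a bounded region) nor the type word — the same ``a reconfiguration step preserves the first edge's type'' induction used in the $\Rightarrow$ direction of \autoref{lm:perm}, pushed along the path. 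Conversely, within one class the explicit linear-time reconfiguration procedure of \autoref{lm:perm}/\autoref{thm:circle} (greedily match the $i$-th vertices from $s$ outward using the intersection guaranteed by the endpoint ordering) transfers verbatim.

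Putting it together: enumerate the $O(n)$ (or $O(n^2)$) sector/winding classes, discard those not containing both input paths, and in the surviving class run the interval/permutation-graph test; this is polynomial, and in fact linear after the circular-arc representation is built in linear time via \cite{tucker,kaplan}. The step I expect to be the genuine obstacle is the rigidity claim — precisely bounding how a shortest path can wind around the circle and proving that the ``winding class'' plus ``type word'' is a complete reconfiguration invariant; the degenerate cases (arcs that by themselves cover more than half the circle, the equator-free analogue where no uncovered cut point exists, and small $r$ where $s$ and $t$'s arcs already overlap) will need to be handled separately, much as \autoref{thm:circle} had to special-case the absence of an equator. The rest — the linear-time reconstruction of the sequence and the ``step preserves the invariant'' direction — should follow the permutation- and circle-graph proofs with only cosmetic changes.
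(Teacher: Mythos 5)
There is a genuine gap, and it sits exactly where you yourself flagged it: the claimed invariant. Your characterisation --- reconfigurable if and only if same winding class \emph{and} same type word --- is not established, and the type-word half is either vacuous or false. Once you cut the circle and restrict attention to one winding class, the arcs used form an interval graph; interval graphs are chordal, hence bridged, hence weakly modular, and there (\cite{Bonsma13}, or equivalently the triangle-condition algorithm \autoref{algo:weakly-modular} of this paper) \emph{any} two $s$--$t$ shortest paths are reconfigurable, with the sequence staying inside the same set of arcs. So no word-type invariant can further separate paths within a class: if your type word ever distinguishes two same-class paths, your ``reconfigurable $\Rightarrow$ equal type word'' direction is false; if it never does, the condition is redundant and still needs a proof that it is forced. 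The permutation-graph argument of \autoref{lm:perm} that you want to transfer ``verbatim'' really uses the two-parallel-line representation (the $\sigma$-comparisons in \autoref{lem:leftright}) and has no analogue in an interval representation. A second problem is that your winding class itself is too fine to be a reconfiguration invariant: you label a path by its starting arc's endpoint and by where it crosses a fixed reference radius, but a single reconfiguration step can replace the first internal vertex (hence the starting arc) and move that crossing point; ``pinned down up to a bounded region'' does not make an exact sector label invariant, so the $O(n)$ or $O(n^2)$ classes you enumerate may split a single reconfiguration class into several, breaking the forward direction of your iff.

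The invariant that actually works is much coarser, and it is what the paper uses: the arcs of $s$ and $t$ cut the circle into two free regions, and for $d(s,t)\geq 6$ the middle vertices of the two paths avoid every arc meeting $s$ or $t$, so the only obstruction is whether the two middle vertices lie in the same free region. Same region: read the arcs as intervals and invoke the chordal-graph result of \cite{Bonsma13} (no type word needed). Different regions: not reconfigurable, because no reconfiguration step can move the middle vertex across either separating arc --- this is the sound, coarse version of your winding-class idea. The small-distance degeneracies you defer are dispatched in one line via \autoref{thm:constdia} (brute force for $d(s,t)\leq 5$). Your plan can likely be repaired by (i) replacing ``winding class $+$ type word'' with this single two-valued side invariant and (ii) using the chordal machinery inside a class instead of a permutation-style word comparison, but as written the central characterisation is unproved and its type-word half conflicts with what is known about $\spr$ on chordal graphs.
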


\begin{proof} Let $(G,s,t,P,Q)$ be an $\spr$ instance, where $G$ is a circular-arc graph. If $d(s,t) \leq 5$, then we can solve $\spr$ in polynomial time (\autoref{thm:constdia}). If $d(s,t) \geq 6$, then each $s$--$t$ shortest path has at least 7 vertices. Note that the arcs corresponding to the middle vertices (at distance $\lfloor d(s,t)/2 \rfloor$ from $s$) of $P_1$ and $P_2$ (say $v$ and $u$) do not intersect any arc that intersects $s$ or $t$. Also it is easy to see that the removal of the arcs $s$ and $t$ from the circle (denoted by $G\setminus\{s,t\}$) divides the circle into two disjoint arcs.
\begin{itemize}
    \item Case 1: If the arcs $v$ and $u$ lie on the same arc of $G\setminus\{s,t\}$, then we can think of the arcs as intervals of an interval graph. And since interval graphs are chordal, we know from~\cite{Bonsma13} that $\spr$ is polynomial-time solvable for them.
    
    \item Case 2: If the arcs $v$ and $u$ lie on different arcs of $G\setminus\{s,t\}$, then $P_1$ and $P_2$ cannot be reconfigured. In particular, $v$ can never be reconfigured to $u$ because the two neighbours of $v$ on $P_1$ are from one arc of $G\setminus\{s,t\}$, and the two neighbours of $u$ on $P_2$ are from the other arc of $G\setminus\{s,t\}$.
\end{itemize}
This completes the proof.
\end{proof}

\subsection{\texorpdfstring{$\spr$}{SPR} for Graphs of Constant Diameter}
\label{sec:bdd-diam}
\begin{theorem} \label{thm:constdia}
Let $G$ be an $n$-vertex graph such that $d(s,t)=c$. Then $\spr$ can be solved in $n^{O(c)}$ time for $G$.
\end{theorem}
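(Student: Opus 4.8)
The plan is to decide $\spr$ by a brute-force search on the \emph{reconfiguration graph}. First I would use the hypothesis $d(s,t)=c$ to bound the number of $s$--$t$ shortest paths: running a $\bfs$ from $s$ partitions $V(G)$ into layers $L_0=\{s\}, L_1, \ldots, L_c=\{t\}$, and every $s$--$t$ shortest path visits exactly one vertex of each layer. Hence such a path is specified by a tuple $(v_1,\ldots,v_{c-1})$ with $v_i\in L_i$ and $v_{i-1}v_i\in E(G)$ for all $i$ (taking $v_0=s$, $v_c=t$); there are at most $n^{c-1}$ such tuples, and each is checked for validity in $O(c)$ time. So the set $\mathcal{P}$ of all $s$--$t$ shortest paths has $|\mathcal{P}|\le n^{c-1}$ and can be enumerated in $n^{O(c)}$ time.

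Next I would construct the reconfiguration graph $\mathcal{R}$ with vertex set $\mathcal{P}$, joining two shortest paths by an edge precisely when they differ in exactly one vertex. Comparing any two members of $\mathcal{P}$ takes $O(c)$ time, so $\mathcal{R}$ has at most $n^{2(c-1)}$ edges and is built in $n^{O(c)}$ time. By definition, $P$ and $Q$ are reconfigurable if and only if they lie in the same connected component of $\mathcal{R}$ — note that no intermediate path in a reconfiguration sequence can leave $\mathcal{P}$, since every path in the sequence is required to be an $s$--$t$ shortest path, so $\spr$ on $G$ is genuinely equivalent to $s$-$t$ connectivity in $\mathcal{R}$. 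A single breadth-first search from $P$ in $\mathcal{R}$, which runs in time linear in $|\mathcal{R}|$ and hence $n^{O(c)}$, therefore decides the instance; moreover the $\bfs$ tree yields an explicit reconfiguration sequence of length at most $|\mathcal{P}|\le n^{c-1}$ whenever one exists.

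There is no deep obstacle: the only step requiring care is the counting argument bounding $|\mathcal{P}|$ by $n^{O(c)}$, which is exactly where the bounded-diameter assumption is used. For unbounded $d(s,t)$ the reconfiguration graph can have exponentially many vertices — consistent with $\spr$ being $\PSPACE$-complete in general — so this enumerate-and-search strategy is efficient only in the constant-diameter regime. Everything else (validity checking, building $\mathcal{R}$, the final connectivity query) is routine and fits comfortably within the claimed $n^{O(c)}$ bound.
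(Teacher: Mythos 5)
Your proposal is correct and follows essentially the same argument as the paper: bound the number of $s$--$t$ shortest paths by $n^{O(c)}$ via the layered $\bfs$ structure, explicitly build the reconfiguration graph $\spr(G,s,t)$, and decide the instance by a connectivity check, all within $n^{O(c)}$ time. Your write-up just fills in slightly more detail (enumeration of tuples, validity checks, extraction of the sequence) than the paper's terser proof.
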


\begin{proof}
Note that the $\bfs$ tree from $s$ to $t$ has $c$ layers, each layer with at most $n$ vertices. Thus, the number of $s$--$t$ shortest paths is at most $n^c$. This means $\spr(G,s,t)$ has at most $n^c$ vertices and therefore at most $n^{2c}$ edges. Hence, given $G$, $\spr(G,s,t)$ can be constructed and connectivity of two vertices in $\spr(G,s,t)$ can be checked in $n^{O(c)}$ time.
\end{proof}

\begin{definition}\
\begin{enumerate}
    \item A~\textbf{bipartite} graph is a graph whose vertex set can be partitioned into two independent sets.
    \item A~\textbf{split} graph is a graph whose vertex set can be partitioned into two sets: an independent set and a clique.
    \item A~\textbf{co-bipartite} graph is a graph whose vertex set can be partitioned into two cliques.
\end{enumerate}
\end{definition}

It is noteworthy that $\spr$ behaves differently on these three closely related graph classes. By~\autoref{obs:bipart}, $\spr$ is $\PSPACE$-complete for bipartite graphs.

\begin{proof}[Proof of~\autoref{obs:bipart}] Let $(G,s,t,P,Q)$ be an $\spr$ instance. Consider the layered BFS tree from $s$ to $t$, and delete all intra-layer edges (edges connecting two vertices in the same layer) from it, since these edges cannot be on an $s$--$t$ shortest path. The resulting graph $G'$ is bipartite: odd layer vertices form one partition, and even layer vertices form the other. $P$ and $Q$ are reconfigurable in $G'$ if and only if they are reconfigurable in $G$. This completes the proof.
\end{proof}
In contrast, $\spr$ is solvable in polynomial time for split graphs and co-bipartite graphs. The following fact about split graphs and co-bipartite graphs is well-known and easy to see~\cite{fact1}.
\begin{observation}
The graph diameter of split graphs and co-bipartite graphs is at most 3.
\end{observation}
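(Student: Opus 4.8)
The plan is to prove each of the two bounds by a short case analysis on where the two endpoints sit relative to the defining partition, exploiting two elementary facts: any two vertices inside a common clique are at distance at most $1$, and in a split graph every neighbour of a vertex of the independent part must lie in the clique part (since the independent set spans no edges). Throughout I take $\diam$ in the usual sense for connected graphs, so I may assume $G$ is connected; in the $\spr$ setting this is automatic because the instance carries an $s$--$t$ shortest path and only the component of $s$ and $t$ matters.

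First I would handle split graphs. Let $V(G)$ be partitioned into an independent set $I$ and a clique $K$, and fix $u,v\in V(G)$. If $u,v\in K$, then $d(u,v)\le 1$. If $u\in K$ and $v\in I$, then connectivity gives $v$ a neighbour $w$, which lies in $K$; since $u,w\in K$ we get $d(u,w)\le 1$, hence $d(u,v)\le 2$. Finally, if $u,v\in I$, choose neighbours $w_u,w_v\in K$ of $u$ and $v$; the walk $u\,w_u\,w_v\,v$ has length at most $3$ (and collapses to something shorter when $w_u=w_v$ or when extra adjacencies are present). So $\diam(G)\le 3$.

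Next I would do the co-bipartite case. Let $V(G)$ be partitioned into two cliques $K_1$ and $K_2$, and fix $u,v$. If $u$ and $v$ lie in the same $K_i$, then $d(u,v)\le 1$. Otherwise, say $u\in K_1$ and $v\in K_2$; by connectivity (and since both parts are nonempty in the nontrivial case) there is an edge $xy$ with $x\in K_1$, $y\in K_2$, and then $u\,x\,y\,v$ is a walk of length at most $3$ from $u$ to $v$, using $ux\in E(G)$ or $u=x$ and $yv\in E(G)$ or $y=v$. Hence $\diam(G)\le 3$ here as well.

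There is no genuine obstacle here: the whole argument is a couple of lines of case-checking. The only thing worth a sentence of care is degenerate input — isolated vertices or a disconnected graph — which is dispatched by the standard convention that diameter is measured within a connected graph, as noted above. For completeness I would also remark that the bound is tight: taking two cliques $\{a,b\}$ and $\{c,d\}$ joined only by the edge $bc$ gives $d(a,d)=3$, and an analogous small split graph achieves diameter $3$ as well.
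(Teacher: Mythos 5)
Your proof is correct; the paper itself offers no argument for this observation, merely citing it as well known, and your case analysis (clique vertices at distance at most $1$, independent-set or cross-clique vertices reached via at most one clique vertex on each side) is exactly the standard argument that the citation points to. The connectivity caveat and the tightness examples are fine, and nothing further is needed.
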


This implies $d(s,t)\leq 3$ for split graphs and co-bipartite graphs, leading to the following corollary of~\autoref{thm:constdia}.

\begin{corollary}
$\spr$ can be solved in polynomial time for split graphs and co-bipartite graphs.
\end{corollary}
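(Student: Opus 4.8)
The plan is to deduce this corollary directly from \autoref{thm:constdia} together with the preceding observation that split graphs and co-bipartite graphs have diameter at most $3$. First I would invoke the observation to conclude that for any $\spr$ instance $(G,s,t,P,Q)$ where $G$ is split or co-bipartite, the distance $d(s,t)$ is bounded by the diameter, hence $d(s,t)\leq 3$. Setting $c=3$ in \autoref{thm:constdia}, the running time $n^{O(c)}$ becomes $n^{O(1)}$, which is polynomial in the size of the input. That is essentially the whole argument.

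The only subtlety worth a sentence is that \autoref{thm:constdia} is stated for a fixed value $c=d(s,t)$, and here $c$ is not fixed a priori but is guaranteed to be at most $3$; since the bound $n^{O(c)}$ is monotone in $c$, the instances with $d(s,t)\in\{1,2,3\}$ are all covered, and instances with $d(s,t)>3$ simply do not arise in these graph classes. (Degenerate cases such as $d(s,t)\in\{0,1\}$ are trivial, as there is at most one $s$--$t$ shortest path.) So the corollary follows with no further work.

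There is no real obstacle here; the statement is a routine specialization. If anything, the one thing to be slightly careful about is that the observation bounds the \emph{diameter}, and one must note that $d(s,t)$ is at most the diameter for any pair $s,t$ in a connected graph — and that if $G$ is disconnected with $s,t$ in different components there is no $s$--$t$ path at all, so the instance is trivially a no-instance (or ill-posed) and can be rejected immediately. Hence the entire proof is: apply the observation to get $d(s,t)\leq 3$, then apply \autoref{thm:constdia} with $c\leq 3$ to get an $n^{O(1)}$-time algorithm.
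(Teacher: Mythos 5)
Your proposal is correct and matches the paper's own argument exactly: the paper also derives the corollary by combining the diameter-at-most-3 observation with \autoref{thm:constdia} applied with $c\leq 3$. The extra remarks about monotonicity in $c$ and degenerate/disconnected instances are harmless and do not change the substance.
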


\section{Gradation of the Complexity of \texorpdfstring{$\kspr$}{k-SPR}} \label{sec:gradationcomplexitykspr}

In this section, we will see that for a fixed $n$, the complexity of $\kspr$ can~\emph{decrease} as $k$ increases, varying from $1$ to $n$. Note that changing $k$ contiguous vertices requires a cycle of size $2k + 2$. We refer to such a cycle (which changes $k$ contiguous vertices) as a $k$-switch.%We start with a simple observation.

%\begin{observation}
%Note that changing $k$ contiguous vertices requires a cycle of size $2k + 2$. We refer to such a cycle (which changes $k$ contiguous vertices) as a $k$-switch.
%\end{observation}

\begin{lemma}[Monotonicity of Diameter] Let $k_1$ and $k_2$ be two positive integers such that $k_1\leq k_2$. Then the graph diameter of $k_1$-$\spr(G,s,t)$ is at least the graph diameter of $k_2$-$\spr(G,s,t)$.
\end{lemma}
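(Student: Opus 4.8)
The plan is to show that every single reconfiguration step allowed in $k_1$-$\spr$ is also a valid reconfiguration step in $k_2$-$\spr$, so that the $k_2$-reconfiguration graph contains every edge of the $k_1$-reconfiguration graph on the same vertex set (the set of $s$--$t$ shortest paths is the same regardless of $k$). Once we have this edge-containment, any path between two vertices in $k_1$-$\spr(G,s,t)$ is also a path in $k_2$-$\spr(G,s,t)$, hence $d_{k_2}(P,Q)\le d_{k_1}(P,Q)$ for every pair of shortest paths $P,Q$; taking the maximum over all pairs gives $\diam(k_2\text{-}\spr(G,s,t))\le \diam(k_1\text{-}\spr(G,s,t))$, which is exactly the claim.

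First I would fix the vertex set: for any $k$, the vertices of $k$-$\spr(G,s,t)$ are precisely the $s$--$t$ shortest paths of $G$, and this set does not depend on $k$. Next I would recall the definition of a reconfiguration step in $\kspr$: a step replaces a contiguous subpath $(v_i,v_{i+1},\dots,v_j)$ with $j-i<k$ by a new subpath $(u_i,u_{i+1},\dots,u_j)$ with the same endpoints' neighbours, yielding another $s$--$t$ shortest path. The key observation is simply that the constraint $j-i<k$ becomes \emph{weaker} as $k$ grows: if $j-i<k_1$ and $k_1\le k_2$, then $j-i<k_2$ as well. Therefore any move legal in $k_1$-$\spr$ satisfies the length restriction of $k_2$-$\spr$, and since the resulting path is still a valid $s$--$t$ shortest path, it is a legal $k_2$-$\spr$ move. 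Hence $E(k_1\text{-}\spr(G,s,t))\subseteq E(k_2\text{-}\spr(G,s,t))$.

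To finish, I would invoke the elementary graph-theoretic fact that adding edges to a graph cannot increase the distance between any pair of vertices, and hence cannot increase the diameter: for any two shortest paths $P,Q$ that are connected in $k_1$-$\spr(G,s,t)$, a shortest $P$--$Q$ walk there is still a walk in $k_2$-$\spr(G,s,t)$, so $d_{k_2}(P,Q)\le d_{k_1}(P,Q)$; and if $P,Q$ are disconnected in $k_1$-$\spr(G,s,t)$ both diameters are treated as infinite in the usual convention, so the inequality is vacuous (or one restricts attention to a fixed connected component, on which the argument goes through verbatim). I do not anticipate a genuine obstacle here; the only point requiring a sentence of care is the bookkeeping that the vertex set is identical for all $k$ and that "same endpoints" in the move definition is preserved, so that an edge really does map to an edge rather than to something spurious. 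The heart of the argument is the one-line monotonicity of the inequality $j-i<k$.
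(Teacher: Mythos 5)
Your proof is correct and follows the same approach as the paper: identical vertex sets for all $k$, edge containment $E(k_1\text{-}\spr(G,s,t))\subseteq E(k_2\text{-}\spr(G,s,t))$ from the monotonicity of the constraint $j-i<k$, and the standard fact that adding edges cannot increase distances or diameter. Your write-up simply spells out the details (including the disconnected case) that the paper leaves implicit.
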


\begin{proof}
The vertex sets of the graphs $k_1$-$\spr(G,s,t)$ and $k_2$-$\spr(G,s,t)$ are the same (one vertex for each $s$--$t$ shortest path in $G$). Since every edge of $k_1$-$\spr(G,s,t)$ is also present in $k_2$-$\spr(G,s,t)$, this completes the proof.
\end{proof}

\begin{theorem}
For every integer $k\geq 1$, there exists an $n$-vertex graph $G$ such that the diameter of $\kspr(G,s,t)$ is $2^{\Omega(n/k)}$.
\end{theorem}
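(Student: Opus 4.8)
The plan is to exhibit, for each fixed $k$, an $n$-vertex gadget graph whose shortest-path reconfiguration graph $\kspr(G,s,t)$ has diameter growing like $2^{\Omega(n/k)}$. The natural strategy is to chain together many independent ``switch blocks'' so that the $s$--$t$ shortest paths correspond to bit-strings, and so that the reconfiguration moves available are so constrained that flipping one bit forces a cascade of flips in the lower-order bits — exactly the behaviour of a binary counter, which needs $2^{\Theta(m)}$ steps to count through $m$ bits. Concretely, I would build $G$ as a sequence of $m$ blocks $B_1, B_2, \ldots, B_m$ laid out in BFS layers between $s$ and $t$, where each block $B_j$ is a small gadget (of $O(k)$ vertices) offering exactly two parallel ways to route the path through it — call them the ``$0$-route'' and the ``$1$-route''. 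Since changing the route through a single block means changing $k$ contiguous vertices, each block must be a $k$-switch (a cycle of size $2k+2$, as noted just before the lemma), so each block costs $\Theta(k)$ vertices; with $m = \Theta(n/k)$ blocks we use all $n$ vertices. Each $s$--$t$ shortest path is then identified with a string in $\{0,1\}^m$, giving $2^m$ shortest paths.

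The heart of the construction is to engineer the blocks so that a $k$-switch in block $B_j$ is \emph{permitted} only when the configuration of the later (or earlier) blocks is in a designated ``all-ones'' (resp. ``all-zeros'') state — this is what turns the reconfiguration graph into (essentially) the state-transition graph of a binary counter, which is a path of length $2^m - 1$ on its $2^m$ vertices, hence has diameter $2^m - 1$. To enforce this dependency I would wire the ``gate'' of block $B_j$ (the pair of vertices on the shortest path flanking $B_j$ that would have to be retained during a switch) so that they lie on an $s$--$t$ shortest path only when the adjacent blocks are routed a particular way; the standard trick is to make the endpoints of the $k$-switch cycle for $B_j$ coincide with specific vertices of $B_{j-1}$ and $B_{j+1}$, so the 4-cycle (or $2k+2$-cycle) needed for the switch simply does not exist in the current graph unless the neighbours are in the right position. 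One then verifies: (i) every string in $\{0,1\}^m$ is realized by a genuine $s$--$t$ shortest path (all blocks contribute the same number of layers regardless of route); (ii) the only legal single-step $k$-moves are the counter increments/decrements; and (iii) consequently $\kspr(G,s,t)$ is (isomorphic to, or contains as an isometric subgraph) the $2^m$-vertex path, so its diameter is $2^m - 1 = 2^{\Omega(n/k)}$.

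The main obstacle I expect is point (ii): ruling out ``illegal'' reconfiguration steps. It is easy to make the intended counter-moves available; the delicate part is ensuring that no other $k$-switch — possibly one spanning the boundary between two blocks, or using auxiliary vertices in an unintended way — creates a shortcut in $\kspr(G,s,t)$ that collapses the diameter. This requires a careful case analysis of which $2k{+}2$-cycles actually occur in $G$ in each configuration, using the layered (BFS) structure: any $k$-switch must replace $k$ contiguous vertices lying strictly between two ``anchor'' vertices that are themselves unchanged and two layers apart in the pattern forced by the reduction, much as in the $\Leftarrow$ direction of the proof of \autoref{lem:ksprlinegraphs}. Once the available moves are pinned down to exactly the counter transitions, the diameter lower bound is immediate from the fact that incrementing a binary counter from $0^m$ to $1^m$ takes $2^m - 1$ steps and no shortcuts exist. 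I would also double-check the $k=1$ case reduces to a known $\spr$ construction with exponential-diameter reconfiguration graph (a chain of $4$-cycles with the analogous gating), which serves as a sanity check and base case for the family.
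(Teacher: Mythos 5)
Your proposal is a plan rather than a proof: the entire difficulty of the statement is concentrated in the gadget you leave unspecified. In the shortest-path setting, whether a $k$-switch at block $B_j$ is available depends only on the existence, in the fixed graph $G$, of an alternative subpath between the two unchanged anchor vertices; it is a static property of $G$ unless the anchors themselves are made to depend on how the neighbouring blocks are currently routed. With genuinely ``independent'' blocks offering two parallel routes each, every block can be switched at every moment, the reconfiguration graph is a hypercube, and the diameter collapses to $m=\Theta(n/k)$ rather than $2^{\Omega(n/k)}$. So your step ``engineer the blocks so that a switch in $B_j$ is permitted only when the later blocks are all-ones'' is exactly the hard part --- it amounts to re-deriving, from scratch and with the extra constraint that all intermediate paths stay shortest, the exponential-diameter coupling of the Kaminski--Medvedev--Milani\v{c} construction --- and you acknowledge (your point (ii), ruling out cross-block and unintended $2k{+}2$-cycles) that you have not carried out this verification. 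As written, nothing in the proposal certifies that the intended counter transitions are the only moves, nor even that a concrete wiring realizing the dependency exists, so the lower bound does not follow.

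The paper sidesteps this entirely: it takes the known $n$-vertex instance with $\mathrm{diam}(\spr(G,s,t))=2^{\Omega(n)}$ as a black box and subdivides every edge $\ell$ times (with $k=2\ell+1$), then proves a short claim that in the subdivided graph no $k'$-switch with $k'<k$ exists (the anchors of any switch must be original vertices at distance exactly $2$ in $G$, since subdivision vertices have unique neighbours in adjacent layers), so every move in $G_\ell$ projects back to a $1$-switch in $G$; bounded degree of $G$ gives $N=O(nk)$ vertices and hence diameter $2^{\Omega(N/k)}$. If you want to salvage your approach, the cleanest fix is the same one: do not design a counter gadget at all, but start from an existing exponential-diameter $\spr$ instance and argue that your modification forces each of its unit moves to become a single $k$-move while creating no new moves.
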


\begin{proof}
\cite{KaminskiMM11} showed that there exists an $n$-vertex graph $G$ for which the diameter of $\spr(G,s,t)$ is $2^{\Omega\left(n\right)}$. Our graph is a simple modification of theirs.

Consider any odd\footnote{It is easy to see that our proof also works for even $k$.} $k$ (thus $k=2\ell + 1$ for some $\ell$). Subdivide each edge $\ell$ times (equivalently, replace each edge by a path $P_{\ell+2}$ between its end points). Call this new graph $G_\ell$. We make the following claim.

\begin{claim} \label{cl:switch} For all $1\leq k'\leq k-1$, there is no $k'$-switch to reconfigure two $s$--$t$ shortest paths in $G_\ell$.
\end{claim}

Note that every $1$-switch in $G$ was directly converted to a $k$-switch in $G_\ell$. Further, the start and end vertices of a $k'$-switch (say $u$ and $v$) have at least two neighbours in the next or previous layer, while all the newly added vertices in $G_\ell$ have only one neighbour in the next layer and only one neighbour in the previous layer. Thus, $u$ and $v$ cannot be newly added vertices in $G_\ell$, implying that they are vertices of the original graph $G$.

Finally, we will show that $d_G(u,v)=2$. If $d_G(u,v)\geq 3$, then $d_{G_\ell}(u,v)\geq 3\ell+3$ (as every two vertices of $G$ have distance at least $\ell+1$ in $G_\ell$). And if $d_G(u,v)\leq 1$, then $u$ and $v$ are adjacent in $G$, and the existence of two paths between them in $G_\ell$ implies that there is a multiple edge $(u,v)$ in $G$, a contradiction since our graphs are simple. This completes the proof of~\autoref{cl:switch}.

\autoref{cl:switch} implies that every switch in $G_\ell$ is a $k$-switch which can be mapped back to a 1-switch in $G$. Consider the two paths $P_1$ and $P_2$ in $G$ which require $2^{\Omega(n)}$ 1-switches. These correspond to two paths $P_1'$ and $P_2'$ which require $2^{\Omega(n)}$ $k$-switches in $G_\ell$.

It is easy to see that the degree of each vertex in the graph $G$ from~\cite{KaminskiMM11} is upper-bounded by a constant. Thus, $G$ has $O(n)$ edges. Since each edge of $G$ is subdivided $\ell$ times in $G_\ell$, the number of vertices in $G_\ell$ is $N=O(n\ell)=O(nk)$. Hence, the diameter of $\kspr(G_\ell,s,t)$ is at least $2^{\Omega(n)}=2^{\Omega(N/k)}$.
\end{proof}

\autoref{fig:complexity} illustrates known complexity results for $\kspr$ as $k$ varies from 1 to $n$, for a fixed $n$. \autoref{lem:ksprlinegraphs} implies that $\kspr$ is $\PSPACE$-complete when $k=O(1)$.

\begin{theorem}
$\kspr$ can be solved in polynomial time when $k \geq n/2$.
\end{theorem}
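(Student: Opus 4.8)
The plan is to show that when $k \geq n/2$, any $s$--$t$ shortest path can be reconfigured to any other $s$--$t$ shortest path in a constant number of reconfiguration steps, so $\kspr$ becomes trivial (always a yes-instance when both inputs are genuine $s$--$t$ shortest paths, and the reconfiguration sequence can be output in polynomial time). First I would recall that an $s$--$t$ shortest path has $\ell + 1 = d(s,t) + 1$ vertices, and that $d(s,t) \leq n - 1$, so a shortest path has at most $n$ vertices, the first being $s$ and the last being $t$. The at most $\ell - 1 \leq n - 2$ internal vertices are the only ones that ever change during reconfiguration. The key observation is that a single $\kspr$ step may replace any contiguous block of at most $k$ vertices. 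Since $k \geq n/2$, a block of $k$ internal vertices already covers at least half of all the vertices of the path at once.

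The main step is a counting/covering argument: given two $s$--$t$ shortest paths $P$ and $Q$, I want to rewrite $P$ into $Q$ by replacing $P$'s internal vertices with $Q$'s internal vertices in at most two (or a small constant number of) blocks, each of size at most $k$, while making sure every intermediate path is still a valid $s$--$t$ shortest path. Split the internal vertices into a left block $B_1$ (positions $2, \dots, \lceil (\ell+1)/2 \rceil$) and a right block $B_2$ (the remaining positions), each of size at most $\lceil (n-1)/2 \rceil$. When $k \geq n/2$ this is at most $k$, at least once $n$ is large enough; for the finitely many small $n$ one checks the bound directly or uses $d(s,t) \le 3$-type arguments as elsewhere in the paper. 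Then in step one replace $B_1$ by $Q$'s corresponding prefix of internal vertices: the resulting vertex sequence $(s, q_2, \dots, q_{m}, p_{m+1}, \dots, p_\ell, t)$ is still an $s$--$t$ path of the right length because consecutive layers are respected at the "seam" — $q_m$ sits in layer $m-1$ and $p_{m+1}$ in layer $m$, and we only need that there is an edge between them; the hard part is precisely guaranteeing that seam edge exists.

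The main obstacle, then, is the seam: after swapping a prefix of $P$ for a prefix of $Q$, the junction vertex from $P$ and the junction vertex from $Q$ need not be adjacent, so the one-shot two-block swap can fail. I expect the fix is to do the swap from one end — replace everything from position $2$ up to position $\ell$ (that is, all internal vertices at once) in a single step, which is legal iff $\ell - 1 \leq k$. Since $\ell - 1 = d(s,t) - 1 \leq n - 2$, this is at most $k$ whenever $k \geq n - 2$, which is stronger than $k \ge n/2$; to get down to $k \ge n/2$ one instead argues that a shortest path with $d(s,t) > n/2 + 1$ forces most layers to be singletons (each layer nonempty, total $n$ vertices across $d(s,t)+1$ layers), so the "crowded" part where choices actually differ spans at most $\approx n/2$ positions, and that entire crowded sub-block can be replaced in one $k$-step after trivially sliding $s'$ and $t'$ inward along the forced singleton segments. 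So the proof reduces to: (i) bound the number of non-singleton layers by roughly $n - d(s,t) \le n/2$; (ii) observe the two paths agree on all singleton layers; (iii) replace the single contiguous block containing all non-singleton layers in one move; (iv) note this block has size $\le n/2 \le k$. Steps (ii) and (iii) are routine; step (i), making the "$n/2$" bound exactly match, is where the care is needed.
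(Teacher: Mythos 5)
Your first case (replace all internal vertices in one move when $d(s,t)-1\leq k$) is fine and matches the paper's Case~1 ($d(s,t)<n/2\leq k$). The genuine gap is in your steps (iii)--(iv) for the long-path case: the non-singleton layers need not be contiguous, so ``the single contiguous block containing all non-singleton layers'' can be far longer than the \emph{number} of non-singleton layers, and sliding $s'$ and $t'$ inward only removes the singleton prefix and suffix, not singleton layers interleaved in the middle. Concretely, take a graph on $n$ vertices whose BFS layers from $s$ are all singletons except layer $1=\{a_1,a_2\}$ and layer $d(s,t)-1=\{b_1,b_2\}$; then $d(s,t)=n-3\geq n/2$, the two paths differ only at positions $1$ and $d(s,t)-1$, yet the contiguous block spanning all non-singleton layers has $d(s,t)-1=n-4$ vertices, which exceeds $k=\lceil n/2\rceil$ for $n$ large, so your single move is not legal. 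In other words, your bound in (i) controls how many positions differ, not the span of the block containing them; the care is needed in (iii)--(iv), not in (i).

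The fix is easy and lands you on what the paper's Case~2 actually uses. Any vertex common to $P$ and $Q$ occurs at the same position on both (its distance from $s$), so the positions where the paths differ carry vertices of $Q$ that are not on $P$ at all, and when $d(s,t)\geq n/2$ there are at most $n-d(s,t)-1\leq n/2-1<k$ such vertices. Replace each \emph{maximal run} of consecutive differing positions in its own reconfiguration step (in the example above, two steps of one vertex each): each run has at most $n/2-1\leq k$ vertices, and since every run is flanked by vertices shared with $Q$ (or by $s$, $t$), each intermediate path is a concatenation of $P$-segments and $Q$-segments glued at common vertices, hence still an $s$--$t$ shortest path; this gives at most $n$ steps, computable in polynomial time. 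The paper's (terse) proof is exactly this two-case split: $d(s,t)<n/2$, where one move suffices, versus $d(s,t)\geq n/2$, where at most $n/2$ vertices of $P_2$ lie off $P_1$ so every differing block fits within one move. Your singleton-layer count is a workable substitute for its off-path-vertex count, but you must replace the differing runs one at a time rather than in one sweep.
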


\begin{proof} We can split the proof into two cases.

\textbf{Case 1: $d(s,t)< n/2$.} This means $d(s,t)< n/2\leq k$. Since we are allowed to reconfigure up to $k$ contiguous vertices in one reconfiguration step of $\kspr$, we can trivially reconfigure $P_{1}$ to $P_{2}$ in polynomial time.

\textbf{Case 2: $d(s,t)\geq n/2$.} This means at least $n/2$ vertices of the graph lie on $P_1$. So $P_2$ contains at most $n/2$ vertices that do not lie on $P_1$. Since $k\geq n/2$, we can trivially reconfigure $P_{1}$ to $P_{2}$ in polynomial time.
\end{proof}

\begin{figure}
    \centering
    \vspace{0.5cm}
    \includegraphics[width=0.6\linewidth]{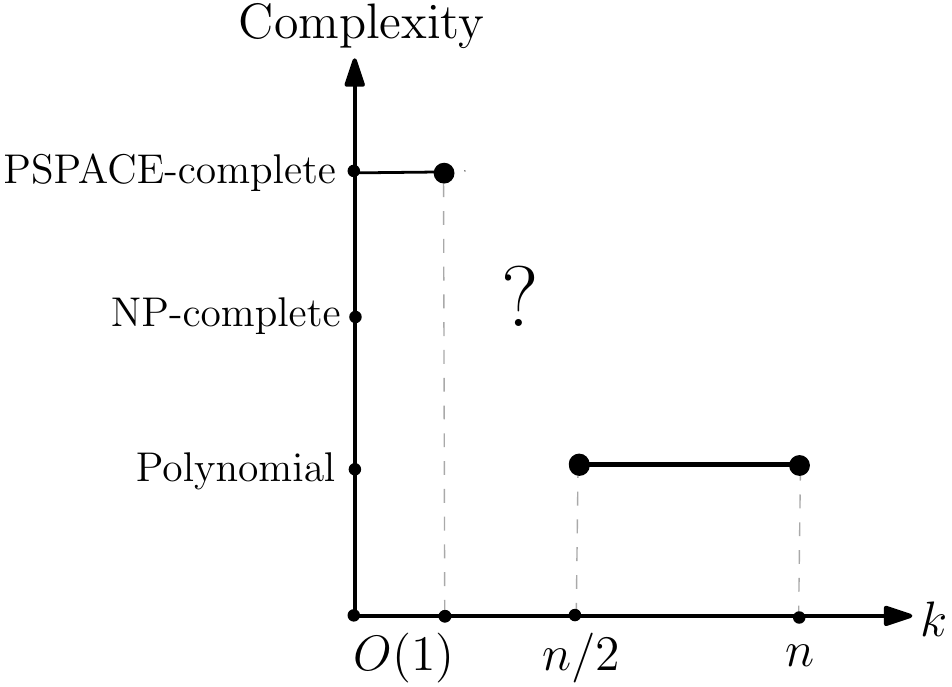}
    \vspace{0.5cm}
    \caption{Complexity of $\kspr$ as $k$ varies from 1 to $n$}
    \label{fig:complexity}
\end{figure}

\section{Optimization Variants of~\texorpdfstring{$\spr$}{SPR}} \label{sec:optvar}
In this section, we define three variants of the Shortest Path Reconfiguration problem. In this settings, we are allowed to change any number of vertices at a time. In addition, we pay a price of $p_{i}$ for changing $i$ vertices on a path. Furthermore, $$p_{1} \leq p_{2} \leq \cdots \leq p_{n - 1} \leq p_{n}.$$

\begin{definition}[MinSumSPR]
Given $(G,s,t,P_1,P_2)$, an $\spr$ instance, output a reconfiguration sequence from $P_1$ to $P_2$ (if it exists) that minimises the~\textbf{total} cost of reconfiguration. 
\end{definition}

\begin{definition}[MinMaxSPR]
Given $(G,s,t,P_1,P_2)$, an $\spr$ instance, output a reconfiguration sequence from $P_1$ to $P_2$ (if it exists) that minimises the~\textbf{maximum} cost of reconfiguration.  
\end{definition}

Generalizing MinSumSPR and MinMaxSPR, we get the following.

\begin{definition}[MinTop-$\ell$-$\spr$]
Given $(G,s,t,P_1,P_2)$, an $\spr$ instance, output a reconfiguration sequence from $P_1$ to $P_2$ (if it exists) that minimises the~\textbf{sum total of the maximum $\ell$ (or top-$\ell$)} costs of reconfiguration.
\end{definition}

Note that MinSumSPR is a special case of MinTop-$\ell$-$\spr$ with $\ell=\infty$ and MinMaxSPR is a special case of MinTop-$\ell$-$\spr$ with $\ell=1$.

\begin{lemma} \label{lem:graphdiam}
For every $n$-vertex graph $G$, the diameter of $\kspr(G,s,t)$ is at most $2^{n}$.
\end{lemma}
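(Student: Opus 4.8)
The plan is to bound the number of vertices of the graph $\kspr(G,s,t)$, since the diameter of any connected graph is at most its number of vertices minus one. The vertices of $\kspr(G,s,t)$ are exactly the $s$--$t$ shortest paths in $G$. First I would invoke the layered BFS structure from $s$: if $\ell = d(s,t)$, then every $s$--$t$ shortest path picks one vertex from each of the $\ell - 1$ intermediate BFS layers, and each layer has at most $n$ vertices (in fact the layers partition a subset of $V(G)$, so the total is at most $n$). Hence the number of $s$--$t$ shortest paths is at most $n^{\ell-1} \leq n^{n}$, which is far larger than the claimed $2^n$, so this crude count alone does not suffice and the argument must be sharpened.

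The sharper observation is that an $s$--$t$ shortest path is determined by its \emph{vertex set}, not merely by a choice per layer, and more usefully that each shortest path is a subset of $V(G)$; but the cleanest route is to note that the intermediate layers together contain at most $n - 2$ vertices (excluding $s$ and $t$), and a shortest path is obtained by selecting a subset of these vertices — at most one per layer — so the number of shortest paths is at most the number of subsets of an $(n-2)$-element set, i.e. at most $2^{n-2} < 2^n$. Therefore $\kspr(G,s,t)$ has fewer than $2^n$ vertices, and so its diameter (when finite) is at most $2^n$.

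The one point requiring a little care is the claim that distinct $s$--$t$ shortest paths correspond to distinct vertex subsets: two shortest paths with the same vertex set must visit those vertices in the same BFS-layer order (since the layer of each vertex is fixed by its distance from $s$), and consecutive layers force the edges, so the path is uniquely recovered from its vertex set. I expect this injectivity check to be the only real content; the counting itself is immediate. An alternative, and perhaps the route the authors take, is to observe that $\kspr(G,s,t)$ is a subgraph of $\spr(G,s,t)$ on the same vertex set (every $1$-switch is a $k$-switch), so it suffices to prove the bound for $\spr$, i.e. for $k=1$, where the same subset-counting argument applies verbatim.
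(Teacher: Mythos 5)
Your proposal is correct and takes essentially the same route as the paper: the paper's proof is exactly the observation that each $s$--$t$ shortest path is a distinct subset of the $n$ vertices of $G$, so $\kspr(G,s,t)$ has at most $2^n$ vertices and hence diameter at most $2^n$. Your additional injectivity check (that a shortest path is recoverable from its vertex set via the BFS layers) is a detail the paper simply asserts, and your remarks about the cruder $n^{\ell-1}$ count are unnecessary but harmless.
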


\begin{proof}
Each $s$--$t$ shortest path in $\kspr(G,s,t)$ is a~\emph{distinct} subset of vertices of $G$. As $G$ has $n$ vertices, the $\kspr$ graph has at most $2^{n}$ vertices.
\end{proof}

\begin{theorem}
There is no polynomial-time algorithm that approximates MinTop-$\ell$-$\spr$ to within a factor of $O(2^{n^2})$, unless $\PSPACE = \P$.
\end{theorem}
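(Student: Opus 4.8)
The plan is to reduce the $\PSPACE$-complete decision problem $\spr$ to the problem of approximating MinTop-$\ell$-$\spr$, exploiting the fact that an approximation algorithm with any finite ratio must at least be able to distinguish ``reconfigurable'' from ``not reconfigurable''. The core difficulty is not the reduction itself but arranging the costs so that the \emph{gap} between the optimum on a yes-instance and the (conventional) value on a no-instance is larger than $2^{n^2}$, so that a $O(2^{n^2})$-approximation would have to answer the underlying $\spr$ decision question correctly.

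First I would fix the price schedule: set $p_1 = 1$ and $p_i$ for $i \geq 2$ to be astronomically large, say $p_i = 2^{n^3}$ (anything comfortably exceeding $2^{n^2} \cdot (\text{trivial upper bound on a one-vertex-at-a-time sequence})$ works). Given an $\spr$ instance $(G,s,t,P_1,P_2)$ on $n$ vertices, feed exactly the same instance to MinTop-$\ell$-$\spr$. If $(G,s,t,P_1,P_2)$ is a yes-instance of $\spr$, then by definition there is a reconfiguration sequence changing \emph{one vertex at a time}; by~\autoref{lem:graphdiam} its length is at most $2^n$, so every reconfiguration step costs $p_1 = 1$, and the MinTop-$\ell$-$\spr$ objective (whether it is the sum, the max, or the top-$\ell$ sum) is at most $2^n$. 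If $(G,s,t,P_1,P_2)$ is a no-instance of $\spr$ but $P_1$ and $P_2$ are nonetheless connected in the ``change arbitrarily many vertices'' reconfiguration graph, then \emph{every} valid sequence must use at least one step that changes $\geq 2$ vertices (otherwise it would witness $\spr$-reconfigurability), so the objective is at least $p_2 = 2^{n^3}$. (If $P_1, P_2$ are not even connected under arbitrary changes, the approximation algorithm must report infeasibility, which again settles the $\spr$ question.)

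Then I would argue the contradiction: a hypothetical polynomial-time $O(2^{n^2})$-approximation would, on a yes-instance, return a sequence of cost at most $2^{n^2} \cdot 2^n \ll 2^{n^3}$, and on a no-instance it can only return a sequence of cost $\geq 2^{n^3}$ (or declare infeasibility). Hence the value it returns lets us decide $\spr$ in polynomial time, forcing $\PSPACE = \P$. The same argument applies uniformly to MinSumSPR ($\ell = \infty$) and MinMaxSPR ($\ell = 1$) since both are special cases of MinTop-$\ell$-$\spr$, and the monotonicity of the prices $p_1 \le p_2 \le \cdots$ is exactly what guarantees that a single bad step already blows up each of these objectives.

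The step I expect to require the most care is verifying that one can genuinely engineer the separation to beat the stated $O(2^{n^2})$ factor while keeping the price encoding polynomial in the input size — i.e., checking that writing down $p_2 = 2^{n^3}$ takes only $O(n^3)$ bits and that the yes-instance bound $2^n$ coming from~\autoref{lem:graphdiam} is tight enough; beyond that, one should double-check the edge case where $P_1$ and $P_2$ are disconnected even in the unrestricted reconfiguration graph, making sure the approximation algorithm's required behaviour there (reporting no sequence exists) is also something we can read off in polynomial time. Everything else is bookkeeping.
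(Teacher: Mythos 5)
Your proposal is correct and follows essentially the same route as the paper: penalize any step that changes two or more vertices with a cost so large (the paper uses $\ell\cdot 2^{n^2+1}$, you use $2^{n^3}$) that the gap between the yes-case optimum (at most $2^n$, via the diameter bound of \autoref{lem:graphdiam}) and the no-case value exceeds the $O(2^{n^2})$ approximation factor, so the approximate value decides the $\PSPACE$-complete $\spr$ instance; the costs still take only polynomially many bits, so the reduction is polynomial. Your worry about infeasibility is moot, since with unrestricted step sizes $P_1$ can always reach $P_2$ in a single step, but noting it does no harm.
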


\begin{proof}
For the sake of contradiction, assume that there is an $O(2^{n^{2}})$-factor approximation algorithm for MinTop-$\ell$-$\spr$. We reduce the original $\spr$ problem, which is known to be $\PSPACE$-complete~\cite{Bonsma13}, to MinTop-$\ell$-$\spr$. Let $I = (G = (V, E), s, t, P_{1}, P_{2})$ be an instance of $\spr$.

%We provide the following reduction.
We assign the following costs for changing $i$ contiguous vertices, denoted by $p_{i}$.
\begin{equation}
  p_{i} =
    \begin{cases}
      1, & i = 1;\\
      \ell\cdot2^{n^{2} + 1}, & i \geq 2.
    \end{cases}
\end{equation}
If $I \in \spr$, there exists a reconfiguration sequence which changes one vertex per reconfiguration step. The number of such steps is at most $\diam(\spr(G,s,t))$, thus $\opt$ pays at most $\ell$ since all moves are changing $1$ vertex, and cost is $1$.

Thus, the cost paid by $\opt$ is at most $2^{n}$. Since $\alg$ is an $O(2^{n^{2}})$-approximation algorithm, $\alg\leq\ell\cdot2^{n^{2}}$.

Conversely, assume that $I \notin \spr$. Then every reconfiguration sequence must change at least $2$ vertices in at least in $1$ reconfiguration step. Thus, in the MinTop-$\ell$-$\spr$ instance, we pay a cost of $\ell\cdot 2^{n^{2} + 1}$ at least once. As $\opt$ is lower bounded by this quantity, $\alg\geq\ell\cdot 2^{n^{2} + 1}$.

\autoref{algo:reduc} ($\reduc$) explains the reduction formally. The correctness of~\autoref{algo:reduc} is shown by the above statements. The reduction is polynomial time, because writing $2^{n^{2} + 1}\cdot \ell \leq 2^{n^{2} + 2n + 1}$ takes at most $\poly(n)$ bits, and $\alg$ runs in polynomial time. Thus, if there is an $O(2^{n^{2}})$-approximation for MinTop-$\ell$-$\spr$, then we can solve $\spr$ in polynomial time, implying that $\PSPACE=\P$.
\end{proof}

\begin{algorithm}
\caption{$\reduc$}
\label{algo:reduc}
\textbf{Input:} An instance of $\spr$
\begin{algorithmic}[1]
\State Creates MinTop-$\ell$ $\spr$, by setting costs as described
\State Run $\alg$ on this instance
\If{$c(\alg) \leq 2^{n^{2} + 1}\cdot l$}
\State \textbf{Output} $I \in \spr$
\Else
\State \textbf{Output} $I \notin \spr$
\EndIf 
\end{algorithmic}
\end{algorithm}

\begin{corollary}
There is no polynomial-time algorithm that approximates MinSumSPR to within a factor of $O(2^{n^2})$, unless $\PSPACE = \P$.
\end{corollary}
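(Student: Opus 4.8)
The plan is to deduce this corollary directly from the theorem it follows, using the fact that MinSumSPR is the special case $\ell = \infty$ of MinTop-$\ell$-$\spr$, as noted right after the definition of MinTop-$\ell$-$\spr$. First I would observe that when $\ell = \infty$, the top-$\ell$ cost of a reconfiguration sequence is just the total cost of the sequence, so minimising the sum total of the top-$\ell$ costs is literally minimising the total cost. Hence any polynomial-time $O(2^{n^2})$-approximation for MinSumSPR is in particular such an approximation for this special case of MinTop-$\ell$-$\spr$.

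Next I would observe that the reduction $\reduc$ from $\spr$ used in the proof of the theorem goes through verbatim with $\ell$ fixed to any convenient constant (say $\ell = 1$), since the cost assignment $p_1 = 1$, $p_i = \ell \cdot 2^{n^2+1}$ for $i \ge 2$ and the accompanying accounting ($\opt$ pays at most a $\poly$-sized quantity bounded by $2^n$ in the yes-case, and at least $\ell \cdot 2^{n^2+1}$ in the no-case) do not rely on $\ell$ being infinite — they only use that the individual step costs are ordered and that a single step changing $\ge 2$ vertices costs strictly more than $2^{n^2}$ times the whole-sequence cost achievable in the yes-case. With $\ell = 1$, MinTop-$1$-$\spr$ coincides with MinSumSPR as well (it is MinMaxSPR, but on an instance where the optimal yes-case sequence has every step of cost $1$, the max equals... one — so one must be slightly more careful here). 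To keep things clean I would instead simply invoke the $\ell = \infty$ identification: set the costs exactly as in the theorem's proof with, say, the constant in front replaced by $1$ (i.e. $p_i = 2^{n^2+1}$ for $i\ge 2$), run the assumed MinSumSPR approximation, and note that in the yes-case the total cost is at most $\diam(\spr(G,s,t)) \le 2^n$ by \autoref{lem:graphdiam}, so $\alg \le 2^{n^2}\cdot 2^n = 2^{n^2+n} < 2^{n^2+1}$, whereas in the no-case $\alg \ge \opt \ge 2^{n^2+1}$; the threshold test in $\reduc$ then decides $\spr$ in polynomial time, forcing $\PSPACE = \P$.

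I would present this as a short argument: state that MinSumSPR $=$ MinTop-$\infty$-$\spr$, apply the theorem with the degenerate parameter (or re-run the identical reduction), and conclude. The only mild subtlety — the ``main obstacle,'' such as it is — is making sure the numerical gap survives: the yes-case upper bound on $\opt$ must stay below $2^{n^2+1}/O(2^{n^2})$, which it does because $\opt \le 2^n$ by \autoref{lem:graphdiam} and $2^n \cdot O(2^{n^2}) < 2^{n^2+1}$ for large $n$, and the reduction is polynomial-time because all the numbers involved have $\poly(n)$-bit descriptions. Everything else is immediate from the theorem. Concretely:

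\begin{proof}
Recall that MinSumSPR is exactly MinTop-$\ell$-$\spr$ with $\ell = \infty$: the top-$\infty$ portion of a reconfiguration sequence is the entire sequence, so its cost is the total cost. Suppose for contradiction that there is a polynomial-time $O(2^{n^2})$-factor approximation $\alg$ for MinSumSPR. Given an $\spr$ instance $I = (G,s,t,P_1,P_2)$ on $n$ vertices, assign costs $p_1 = 1$ and $p_i = 2^{n^2+1}$ for all $i \ge 2$, and run $\alg$ on the resulting MinSumSPR instance. If $I \in \spr$, there is a reconfiguration sequence changing one vertex per step, of length at most $\diam(\spr(G,s,t)) \le 2^n$ by \autoref{lem:graphdiam}, so $\opt \le 2^n$ and hence $\alg \le 2^n \cdot O(2^{n^2}) < 2^{n^2+1}$ for $n$ large. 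If $I \notin \spr$, every reconfiguration sequence has a step changing at least $2$ vertices, so $\opt \ge 2^{n^2+1}$ and thus $\alg \ge 2^{n^2+1}$. Testing whether $c(\alg) < 2^{n^2+1}$ therefore decides $\spr$, and since all costs have $\poly(n)$-bit descriptions the whole procedure runs in polynomial time. This contradicts the $\PSPACE$-completeness of $\spr$~\cite{Bonsma13} unless $\PSPACE = \P$.
\end{proof}
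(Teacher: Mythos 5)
Your overall plan --- specialise the MinTop-$\ell$-$\spr$ reduction to the total-cost objective --- is the right one and is exactly what the paper intends (the corollary has no separate proof there), but your concrete instantiation breaks at the gap calculation. With $p_1=1$, $p_i=2^{n^2+1}$ for $i\ge 2$, the yes-case only guarantees $\opt\le \diam(\spr(G,s,t))\le 2^n$, hence $\alg\le C\cdot 2^{n^2}\cdot 2^n=C\cdot 2^{n^2+n}$. Your claim that $2^n\cdot O(2^{n^2})<2^{n^2+1}$ for large $n$ is false: $2^{n^2+n}$ exceeds $2^{n^2+1}$ for every $n\ge 2$. So the yes-case upper bound on $\alg$ lies \emph{above} the no-case lower bound $2^{n^2+1}$, the two cases overlap, and the threshold test in $\reduc$ is not sound: on a yes-instance the approximation algorithm is perfectly entitled to return a sequence containing a multi-vertex step (total cost $\ge 2^{n^2+1}$ but still within $C\cdot 2^{n^2}$ of $\opt$), and your procedure would then wrongly declare $I\notin\spr$. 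Note that $\opt$ in the yes-case really can be exponential in $n$ (the paper cites instances where the reconfiguration diameter is $2^{\Omega(n)}$), so there is no way to rescue the bound $\opt\le$ constant that your threshold implicitly needs.

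The missing idea is that the penalty for changing $\ge 2$ vertices must dominate the approximation factor \emph{times the yes-case optimum}, not the approximation factor alone. In the theorem's proof this role is played by the factor $\ell$ (there $\opt\le\ell$ in the yes-case, so the penalty $\ell\cdot 2^{n^2+1}$ scales with it); for MinSumSPR the yes-case bound is the diameter bound $2^n$ from \autoref{lem:graphdiam}, so $\ell$ must be replaced by $2^n$ rather than dropped. Setting, say, $p_i=2^{n^2+n+2}$ (or more generously $2^{2n^2}$) for $i\ge 2$ restores the separation: yes-case $\alg\le C\cdot 2^{n^2+n}<2^{n^2+n+2}$ for the fixed constant $C$ (or for all $C$ with the $2^{2n^2}$ choice), no-case $\alg\ge\opt\ge 2^{n^2+n+2}$, and the costs still have $\poly(n)$-bit descriptions, so the reduction remains polynomial. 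With that single change of constants your argument goes through; as written, it does not.
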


\begin{corollary}
There is no polynomial-time algorithm that approximates MinMaxSPR to within a factor of $O(2^{n^2})$, unless $\PSPACE = \P$.
\end{corollary}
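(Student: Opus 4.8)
The plan is to observe that MinMaxSPR is literally the $\ell = 1$ case of MinTop-$\ell$-$\spr$, as noted immediately after the definitions, so the preceding inapproximability theorem specializes directly. Concretely, I would re-run the reduction $\reduc$ of that theorem with $\ell$ fixed to $1$: an $\spr$ instance $I = (G,s,t,P_1,P_2)$ with $|V(G)| = n$ is mapped to the MinMaxSPR instance on $(G,s,t,P_1,P_2)$ with cost function $p_1 = 1$ and $p_i = 2^{n^2+1}$ for all $i \geq 2$. If $I \in \spr$ then there is a reconfiguration sequence that changes a single vertex at every step, so the \emph{maximum} step-cost is $1$ and $\opt = 1$. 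If $I \notin \spr$ then every reconfiguration sequence contains at least one step that changes at least two vertices, which costs $2^{n^2+1}$; since the objective is the maximum step-cost, $\opt \geq 2^{n^2+1}$. An $O(2^{n^2})$-approximation algorithm $\alg$ therefore returns a value at most $c\cdot 2^{n^2}$ in the first case and at least $2^{n^2+1}$ in the second; taking the penalty to be $p_i = 2c\cdot 2^{n^2}$ instead (still writable in $\poly(n)$ bits, since $c$ is the constant hidden in the $O(\cdot)$) makes these two ranges disjoint for all large $n$, so comparing $c(\alg)$ against the threshold decides $\spr$ in polynomial time, whence $\PSPACE = \P$.

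The only thing that needs checking is that nothing in the proof of the MinTop-$\ell$-$\spr$ theorem actually used $\ell \geq 2$ — and indeed it did not: the construction of $\reduc$, the two-case analysis of $\opt$, and the $\poly(n)$-bit bound on the encoding of the penalty are all valid verbatim for $\ell = 1$. I expect no genuine obstacle here; the one mild point of care, already handled above, is to absorb the hidden constant of the claimed approximation into the penalty so that the multiplicative gap between the yes-case and no-case optima strictly exceeds the approximation factor.
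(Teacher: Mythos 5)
Your proposal is correct and matches the paper's route: the paper obtains this corollary precisely by noting that MinMaxSPR is the $\ell=1$ special case of MinTop-$\ell$-$\spr$, so the reduction $\reduc$ and its two-case analysis apply verbatim. Your extra care about absorbing the hidden constant of the $O(2^{n^2})$ factor into the penalty is a harmless refinement of the same argument.
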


\section{Number of Shortest Paths versus Length of Shortest Path} \label{sec:tradeoff}

In this section, we study how the number of vertices in $\spr(G,s,t)$ varies with the path length between $s$ and $t$. We denote $|V_{\spr}|$ as a function $f$ that takes $d(s, t)$ as input.%We bound $f(x)$ for asymptotic ranges of $x$.

\begin{figure}[ht]
    \centering
    \vspace{0.5cm}
    \includegraphics[width=0.7\linewidth]{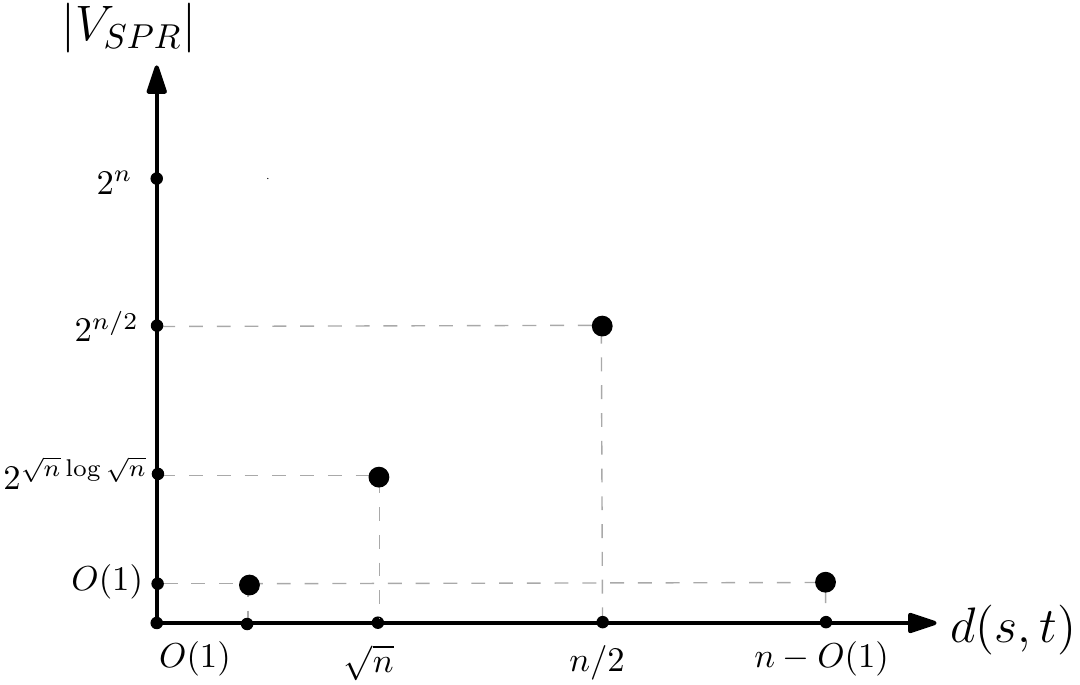}
    \vspace{0.5cm}
    \caption{Size of $\spr(G,s,t)$ as $d(s,t)$ varies from 1 to $n$}
    \label{fig:dstvsvspr}
\end{figure}

It is easy to see that $f(x) = O(2^{n})$ (\autoref{lem:graphdiam}) for all values of $d(s,t)$. For specific values of $d(s,t)$, we have the following stronger bounds, represented by~\autoref{fig:dstvsvspr}.

\begin{lemma}
\begin{align*}
d(s, t) &= x = n/2 &&\Rightarrow\quad f(x) = \Theta(2^{n});\\ d(s, t) &= x = \Theta(\sqrt{n}) &&\Rightarrow\quad f(x) = \Omega\left(2^{\sqrt{n}\log \sqrt{n}}\right);\\
d(s, t) &= x = n - O(1) &&\Rightarrow\quad f(x) = O(1).
\end{align*}
\end{lemma}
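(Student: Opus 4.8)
The plan is to establish the three regimes separately, each by an explicit construction (for the lower bounds) or a counting argument (for the upper bound).

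\textbf{Regime $d(s,t) = n - O(1)$.} This is the easy direction, and I would dispatch it first. If $d(s,t) = n - c$ for a constant $c$, then every $s$--$t$ shortest path already uses $n-c+1$ of the $n$ vertices. Hence at most $c-1$ vertices are ``free'' to be swapped in and out, and the BFS layering from $s$ has $n-c$ layers, all but $c-1$ of which must be singletons. A crude count: the number of distinct shortest paths is at most the number of ways of distributing the at most $c-1$ non-forced vertices among the layers, which is bounded by a function of $c$ alone, hence $O(1)$. I would phrase this via the BFS-layer structure: only constantly many layers can have size $\geq 2$, and each such layer contributes a constant multiplicative factor.

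\textbf{Regime $d(s,t) = n/2$.} Here I want $f(x) = \Theta(2^n)$. The upper bound $O(2^n)$ is immediate from~\autoref{lem:graphdiam} (each shortest path is a distinct vertex subset). For the matching lower bound I would build a ``ladder''-type gadget: take a path $s = a_0, a_1, \ldots, a_{n/4}$ of forced vertices, but between consecutive forced vertices insert a pair of parallel intermediate vertices $b_i, b_i'$ so that the segment from $a_{i-1}$ to $a_i$ can go through either $b_i$ or $b_i'$, independently. With $n/4$ such diamond gadgets we use about $n/4 \cdot 2 + n/4 = 3n/4$ vertices and get $d(s,t) = 2 \cdot (n/4) = n/2$, but $2^{n/4}$ shortest paths — not yet $2^{\Omega(n)}$ with the right constant, so I would tune the gadget (e.g.\ make each diamond contribute more choices, or balance forced-vs-free vertices) to push the exponent up to $\Theta(n)$ while keeping $d(s,t) = n/2$. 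The point is that a constant fraction of the $n$ vertices can be used as independent binary choices along a path of length $n/2$.

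\textbf{Regime $d(s,t) = \Theta(\sqrt n)$.} This is the case I expect to be the main obstacle, since the bound $\Omega\bigl(2^{\sqrt n \log \sqrt n}\bigr)$ is superpolynomial but the path is short, so the branching must be ``wide'' rather than ``deep''. The natural construction is a layered graph with $\Theta(\sqrt n)$ layers, each of width $\Theta(\sqrt n)$, arranged so that a shortest path picks one vertex per layer and \emph{any} sequence of choices is a valid shortest path — e.g.\ a blow-up of a path where consecutive layers form a complete bipartite graph. That gives $(\sqrt n)^{\Theta(\sqrt n)} = 2^{\Theta(\sqrt n \log \sqrt n)}$ shortest paths using $\Theta(\sqrt n \cdot \sqrt n) = \Theta(n)$ vertices and diameter $\Theta(\sqrt n)$, matching the claimed bound. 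The thing to be careful about is ensuring every such path is genuinely a \emph{shortest} path (no shortcuts between non-consecutive layers) and that the vertex budget is exactly $n$; I would enforce this by keeping the graph strictly layered with edges only between consecutive layers. I would then remark that all three reconfiguration-graph size bounds follow, and note the consistency with~\autoref{fig:dstvsvspr}.
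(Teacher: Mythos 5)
Your proposal is correct and follows essentially the same route as the paper: chained constant-width gadgets for the $d(s,t)=n/2$ regime, a $\Theta(\sqrt n)$-layers-by-$\Theta(\sqrt n)$-width layered construction for the middle regime (the paper's version alternates forced vertices with independent sets of size $l$, giving $l^{g}$ shortest paths), and the same constant-size counting argument over the BFS layers when $d(s,t)=n-O(1)$. The only remark worth adding is that no ``tuning'' of your diamond gadget can literally reach $\Theta(2^{n})$ when $d(s,t)=n/2$ (the product of the layer sizes is at most roughly $2^{n/2}$), and the paper's own construction only attains $3^{n/4}=2^{\Theta(n)}$, so your $2^{n/4}$ count is already of the same quality as the paper's bound.
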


\begin{proof}
Consider the following gadget graph on $l+2$ vertices. Two independent vertices $u$ (called the start point of the gadget) and $v$ (called the end point of the gadget) are both connected to a set of $l$ independent vertices. Identifying the end point of the gadget with the start point of a second gadget adds $l + 1$ more vertices. If we chain $g$ gadgets in series in this way, the resulting graph has $1 + g(l + 1)$ vertices.

Let $s$ and $t$ be the start point of the first gadget and the end point of the last gadget, respectively. The graph has $n=1+g(l+1)$ vertices, $d(s,t)=2g=\Theta(\frac{n}{l})$, and the number of $s$--$t$ shortest paths is $l^{g}$. When $l = c$ for some constant $c$, we get $c^{O(\frac{n}{c})} = \Theta(2^{n})$. Similarly, for $l = \Theta(\sqrt{n}), g = \Theta(\sqrt{n})$, we get path length of $\Theta(\sqrt{n})$ and roughly $\sqrt{n}^{\sqrt{n}} = \Theta(2^{\sqrt{n}\log n})$ paths. This shows the first two implications.

Let $d(s, t) = n - c$, for some constant $c$. These $c$ vertices are present in at most $c$~\emph{distinct} layers, and each layer has at most $c$ vertices. Thus, the number of possible reconfigurations is at most $c^{c}$, which is a constant. This shows the third implication. 
\end{proof}

\section{Conclusion}

We conclude with some possible directions for future research on $\spr$ and $\kspr$.

\begin{itemize}

    \item \textbf{Experimental results:} This entire paper contains only theoretical results. One can try to simulate or observe the behaviour of $\spr$ (or $\kspr$) on practical instances, and see if they yield something more substantial or meaningful than what we could obtain theoretically.
    
    \item \textbf{A graph invariant of $\spr$:} Since $\spr$ is $\PSPACE$-hard in general, we study $\spr$ on specific graph classes and provide polynomial-time algorithms for some of them, each one requiring its own separate proof. Is there a graph invariant (like treewidth) that characterizes the complexity of $\spr$?
    
    \item \textbf{$\spr$ for unit disk graphs:} A highly practical application of $\spr$ is in rerouting messages across a network of telecommunication towers. Each tower has a coverage radius (known as range), and two towers can interact if one lies in the range of the other. When one tower fails (this could happen, for instance, due to a high rate of incoming messages at the tower, leading to an overload), we need to quickly reroute messages using a different set of towers, which is precisely the $\spr$ problem for unit disk graphs. This is seemingly a very difficult problem, and even an approximation algorithm for it will be very useful.
    
    \item \textbf{Parameterized complexity:} Although both $\spr$ and $\kspr$ are $\PSPACE$-complete, they might be fixed-parameter tractable, for the right choice of the parameter. It will be interesting to explore which parameters could work (if any).
    
    \item \textbf{A novel variant of reconfigurability:} What if shortest paths need not change in $k$ contiguous vertices at a time? A variant of reconfigurability could be defined as follows: One shortest path can ``hop'' on to another shortest path, whenever it finds that the $(i+1)$-th vertex of the other path is adjacent to its $i$-th vertex. This notion of reconfigurability is also practical, as it allows shortest paths to directly use other shortest paths that already exist.
    
\end{itemize}

\bibliography{main_long_arxiv_version}

\end{document}